\newcommand{\MPC}{\textsf{MPC}}
\newcommand{\AMPC}{\textsf{AMPC}}
\newcommand{\PRAM}{\textsf{PRAM}}
\newcommand{\nn}{\mathsf{N}}
\newcommand{\memory}{\mathsf{M}}
\newcommand{\machines}{\mathsf{P}}
\newcommand {\ignore} [1] {}
\newcommand{\bigO}{\mathrm{O}}
\newcommand{\mc}{Min Cut}
\newcommand{\mkc}{Min $k$-Cut}
\newcommand{\ceil}[1]{\lceil #1\rceil}
\newcommand{\bag}{\mathsf{bag}}
\newcommand{\bagLeader}{\mathsf{bagLeader}}
\newcommand{\ldrtime}{\mathsf{ldr}\_\mathsf{time}}
\newcommand{\nbag}{\mathsf{nbr}\_\mathsf{bag}}
\newcommand{\mw}{\mathsf{mw}}
\newtheorem{lemma}{Lemma}
\newtheorem{theorem}{Theorem}
\newtheorem{corollary}{Corollary}
\newtheorem{definition}{Definition}
\newtheorem{observation}{Observation}
\newcommand{\thmmincut}{There is an $O(\log\log{n})$-round AMPC algorithm that uses $\widetilde{O}(n + m)$ total memory and $\widetilde{O}(n^\epsilon)$ memory per machine which finds a $(2+\epsilon)$-approximation of \mc{} with high probability.}
\newcommand{\thmvazirani}{Algorithm \textsf{APX-SPLIT} is an $(4 + \epsilon)$ approximation of the \mkc{}. Furthermore, it can be implemented in the AMPC model with $\bigO(n^{\epsilon})$ memory per machine in $\bigO(k \log\log{n})$ rounds and $\bigO(m)$ total memory.}
\newenvironment{customthm}[1]
  {\innercustomthm}
  {\endinnercustomthm}
\begin{document}

\title{Adaptive Massively Parallel Algorithms for Cut Problems}

\author{MohammadTaghi Hajiaghayi}
\affiliation{%
  \institution{University of Maryland}
  \city{College Park, Maryland}
  \country{USA}
}
\email{hajiagha@cs.umd.edu}

\author{Marina Knittel}
\affiliation{%
  \institution{University of Maryland}
  \city{College Park, Maryland}
  \country{USA}
}
\email{mknittel@cs.umd.edu}

\author{Jan Olkowski}
\affiliation{%
  \institution{University of Maryland}
  \city{College Park, Maryland}
  \country{USA}
}
\email{jan.olkowski@gmail.com}

\author{Hamed Saleh}
\affiliation{%
  \institution{University of Maryland}
  \city{College Park, Maryland}
  \country{USA}
}
\email{hameelas@gmail.com}

\renewcommand{\shortauthors}{Hajiaghayi, Knittel, Olkowski, and Saleh}

\begin{abstract}
We study the \emph{Weighted \mc{}} problem in the \emph{Adaptive Massively Parallel Computation} ($\AMPC$) model. 
In 2019, Behnezhad et al.~\cite{behnezhad2019massively} introduced the $\AMPC$ model as an extension of the \emph{Massively Parallel Computation} ($\MPC$) model.
In the past decade, research on highly scalable algorithms has had significant impact on many massive systems. 
The $\MPC$ model, introduced in 2010 by Karloff et al.~\cite{karloff2010a}, which is an abstraction of famous practical frameworks such as MapReduce, Hadoop, Flume, and Spark, has been at the forefront of this research. 
While great strides have been taken to create highly efficient MPC algorithms for a range of problems, recent progress has been limited by the \emph{1-vs-2 Cycle Conjecture}~\cite{yaroslavtsev2018massively}, which postulates that the simple problem of distinguishing between one and two cycles requires $\Omega(\log n)$ $\MPC$ rounds. 
In the $\AMPC$ model, each machine has adaptive read access to a distributed hash table even when communication is restricted (i.e., in the middle of a \textit{round}). While remaining practical~\cite{behnezhad2020parallel}, this gives algorithms the power to bypass limitations like the 1-vs-2 Cycle Conjecture. 

We give the first sublogarithmic $\AMPC$ algorithm, requiring $O(\log\log n)$ rounds, for $(2+\epsilon)$-approximate weighted \mc{}. Our algorithm is inspired by the divide and conquer approach of Ghaffari and Nowicki~\cite{ghaffari2020massively}, which solves the $(2+\epsilon)$-approximate weighted \mc{} problem in $O(\log n\log\log n)$ rounds of $\MPC$ using the classic result of Karger and Stein~\cite{karger1996new}. Our work is \emph{fully-scalable} in the sense that the local memory of each machine is $O(n^\epsilon)$ for any constant $0 < \epsilon < 1$. There are no $o(\log n)$-round $\MPC$ algorithms for \mc{} in this memory regime assuming the 1-vs-2 Cycle Conjecture holds.  The \emph{exponential speedup} in the AMPC runtime is the result of \emph{decoupling} the different layers of the divide and conquer algorithm and solving all layers in $O(1)$ rounds in parallel.
Finally, we extend our approach: we present an $O(k\log\log n)$-round $\AMPC$ algorithms for approximating the \mkc{} problem with a $4+\epsilon$ approximation factor.



\end{abstract}

\maketitle

\section{Introduction}

\emph{Massively Parallel Computation} ($\MPC$) -- introduced by Karloff et al.~\cite{karloff2010a} in 2010 -- is an abstract model that captures the capabilities of the modern parallel/distributed frameworks widely used in practice such as  MapReduce~\cite{dean2008mapreduce}, Hadoop~\cite{hadoop}, Flume~\cite{chambers2010flumejava}, and Spark~\cite{zaharia2016apache}. $\MPC$ has been at the forefront of the research on parallel algorithms in the past decade, and it is now known as the de facto standard computation model for the analysis of parallel algorithms. 

In this paper, we focus on sublogarithmic-round algorithms for the \mc{} problem in the \emph{Adaptive Massively Parallel Computation} ($\AMPC$) model, which is a recent extension of $\MPC$. In both $\MPC$ and $\AMPC$, the input data is far larger than the memory of a single machine, and thus an input of size $\nn$ is initially distributed across a collection of $\machines$ machines. In the $\MPC$ model, the algorithm executes in several synchronous rounds, in which each machine executes local computations isolated from other machines, and the machines can only communicate at the end of a round. The total size of incoming/outgoing messages for each machine is  limited by local memory constraints. We are interested in \emph{fully-scalable} algorithms in which every machine is allocated a local memory of size $O(\nn^\epsilon)$ for any constant $0 < \epsilon < 1$. Moreover, we can often improve the round complexity\footnote{The number of rounds is a main complexity of interest since in practice the bottleneck is often the communication phase.} of the massively parallel algorithms by allowing a super-linear total memory $O(\nn^{1+\epsilon})$, for example, the filtering technique of Lattenzi et al.~\cite{lattanzi2011filtering} in $\MPC$ or the maximal matching algorithm of Behnezhad et al.~\cite{behnezhad2020parallel} in $\AMPC$. So we are primarily interested in algorithms with $\widetilde{O}(\nn)$ total memory, and therefore we assume there are $\machines = \widetilde{O}(\nn^{1-\epsilon})$ machines.
\footnote{Where $\widetilde{O}$ hides polylogarithmic factors, i.e., $\widetilde{O}(f(n)) = O(f(n)poly\log(n))$.} 

Recent developments in the hardware infrastructure and new technologies such as RDMA~\cite{dragojevic2017rdma}, eRPC, and Farm~\cite{dragojevic2014farm} allow for high-throughput, low-latency communication among machines in data centers, such that remote volatile memory accesses are becoming faster than accessing local persistent storage. The concept of a shared remote memory is in particular useful when machines need to query data adaptively -- i.e., deciding what to query next based on the previously queried data -- which requires a communication round per query in the MPC model. Behnezhad et al.~\cite{behnezhad2019massively} incorporates this RDMA-like paradigm of remote memory access into the MPC model and introduces AMPC. In the new model, the machines can \textit{adaptively} query from a \emph{distributed hash table}, or a shared read-only memory, during each round. Machines are only allowed to write to shared memory at the end of each round. There is also empirical evidence that AMPC algorithms for several problems -- including maximal independent set, maximal matching, and connectivity -- obtain significant speedups in running time compared to state-of-the-art MPC algorithms~\cite{behnezhad2020parallel}. This fact, which stems from the meaningful drop in the number of communication rounds, verifies the practical power of the AMPC model.


In this paper, we provide the first AMPC-specific algorithms for the \mc{} problem. The \mc{} of a given graph $G=(V,E)$ is the minimum number of outgoing edges, $\delta(S)$, among every subset of vertices $S \subseteq V$. The celebrated result of Karger and Stein~\cite{karger1996new} solves \mc{} by recursively contracting edges in random order. Specifically, it runs two instances of the contraction process with different seeds in parallel. Each instance is run in parallel until the graph size is reduced by a factor of $\frac1{\sqrt{2}}$, at which point each instance recurses (thereby creating a parallel split again). They return the minimum of the two returned cuts. The algorithm itself is mainly inspired by another result of Karger~\cite{Karger93a} for finding the \mc{} using graph contractions.
We also extend our approach to the \mkc{} problem, in which we are given a graph $G = (V, E)$ and an integer $k$ and we want to find a decomposition of $V$ into $k$ subsets $V_1, V_2, \ldots, V_k$ so that $\sum_{i=1}^k{\delta(V_i)}$ is minimized. We utilize the greedy algorithm of Saran and Vazirani~\cite{SaranV95} which gives an $O(2-\frac2k)$-approximation of the \mkc{}. Gomuri and Hu give an alternative algorithm with the same approximation guarantee with additional features~\cite{GomoryHu61trees}. 

We study the \mc{} and \mkc{} problems in the $\AMPC$ model. We give  $O(\log\log n)$-round $\AMPC$ algorithms for a $(2+\epsilon)$-approximation of \mc{} and a $(4+\epsilon)$-approximation of \mkc{}.



\subsection{Adaptive Massively Parallel Computation ($\AMPC$)}

Massively Parallel Computation ($\MPC$) and Adaptive Massively Parallel Computation ($\AMPC$) both sprung out of an interest in formalizing a theoretical model for the famous MapReduce programming framework. The most common problems in MPC and AMPC are on graph inputs, and since our paper only considers graph problems, we define these two models in terms of problems on graphs. Consider a graph $G=(V,E)$ with $n=|V|$ and $m=|E|$.

In standard $\MPC$~\cite{goodrich2011sorting,andoni2014parallel,karloff2010a,lattanzi2011filtering}, we are given a collection of $\machines$ machines and are allowed to compute the solution to a problem in parallel. As we have already discussed, MPC computation occurs in synchronous rounds, each consisting of local polynomial-time computation and ending with machine-machine communication where all messages sent to and from a machine must fit within its local memory. Fully-scalable algorithms, the strongest memory regime in MPC, require the local memory to be constrained by $O(n^\epsilon)$ for any given $0<\epsilon<1$. Additionally, we are primarily interested in algorithms that require at most $O(\log n)$ rounds. However, often sublogarithmic -- i.e., $O(\sqrt{\log n})$ or $O(\log\log n)$ -- round complexity is much more desirable. In most cases, the total space must be at most $\widetilde{O}(n+m)$, though sometimes we allow slightly superlinear total space.

AMPC extends MPC to add functionality while remaining implementable on modern hardware. Formally, in the AMPC model, we are given a set of distributed hash tables $\mathcal{H}_0,\ldots,\mathcal{H}_k$ for each of the $k$ rounds of computation. These hash tables are each limited in size by the total space of the model (i.e., $\widetilde{O}(n+m)$). As in MPC, we are given a number of machines and computation proceeds in rounds. In each round, local computations occur and then messages are sent between machines. The distinction in AMPC is that during the local computations, machines are allowed simultaneous read access to the hash table for that round (i.e., $\mathcal{H}_{i-1}$ for round $i$) and during the messaging phase of the round, they are allowed to write data to the next hash table, $\mathcal{H}_i$. Reading and writing is limited by machine local memory. The power of the AMPC model over the MPC model is that, at the beginning of a round, the machines do not need to choose all the data they will access during the round. Instead, they can dynamically access the data stored in the hash table over the course of the local computation, thus potentially selecting data based on its own local computation. 

It is not too hard to see that AMPC is a strictly stronger model than MPC. In fact, it was formally shown that all MPC algorithms can be implemented in AMPC with the same round and space complexities~\cite{behnezhad2019massively}.


\subsection{Our Contributions and Methods}

This work is the first to study the Adaptive Massively Parallel Computation (AMPC) model for \mc{} problems on graphs. We mainly focus on the standard single \mc{} problem, although we also propose an approximation algorithm for the \mkc{} problem. Our main result for the \mc{} problem is a $2+\epsilon$ approximate algorithm that uses sublogarithmic $O(\log\log n)$ rounds.

\begin{theorem}\label{thm:mincut}
\thmmincut
\end{theorem}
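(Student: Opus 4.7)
The plan is to adapt the divide-and-conquer contraction framework of Ghaffari and Nowicki to the \AMPC{} model, using adaptive memory access to collapse the sequential layer-by-layer structure of their algorithm into an essentially parallel computation.

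The Ghaffari-Nowicki algorithm proceeds in $O(\log n)$ layers of edge contractions, where at each layer the current graph is shrunk by a constant factor via sampling edges with probability proportional to their weight. Each layer takes $O(\log\log n)$ \MPC{} rounds, yielding the overall $O(\log n \log\log n)$ bound. The compression to $O(\log\log n)$ \AMPC{} rounds will come from decoupling these layers so that they can all execute in parallel rather than being chained sequentially.

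First, I would pre-sample all randomness up front: for each of the $O(\log n)$ layers, assign every edge an independent random priority scaled appropriately by its weight. The contracted multigraph at layer $\ell$ is then fully defined by the connected components induced by the top-priority prefix of edges at that layer, and this definition no longer refers to the outcome of any other layer. Second, I would implement each layer's contraction as a single connected-components computation on the corresponding edge prefix; this is known to take $O(\log\log n)$ \AMPC{} rounds by standard adaptive-connectivity techniques that exploit hash-table-based pointer jumping. Because the layers are now independent, running them all in parallel keeps the round count at $O(\log\log n)$. Taking the minimum cut produced across the terminal layers, and repeating with polylogarithmically many independent random seeds to boost the Karger-Stein success probability, then gives a $(2+\epsilon)$-approximate \mc{} with high probability, with the approximation and success analysis inherited essentially verbatim from the Karger-Stein / Ghaffari-Nowicki framework.

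The main obstacle will be keeping within the $\widetilde{O}(n+m)$ total-memory and $\widetilde{O}(n^\epsilon)$ per-machine memory budgets. Materializing every layer's contracted graph in full would multiply memory usage by an $O(\log n)$ factor; I would avoid this by exploiting the geometric shrinkage of successive layers so that the sum of layer sizes telescopes back to $\widetilde{O}(n+m)$, and by storing contraction maps in the \AMPC{} hash tables so that contracted vertex identifiers are resolved by adaptive queries at computation time rather than being eagerly rewritten into explicit contracted edge lists. A second delicate point is handling contractions that produce very high-degree super-vertices, which requires an explicit bucketing argument to guarantee that no machine's local memory of $\widetilde{O}(n^\epsilon)$ is exceeded, analogous to the load-balancing tricks used in prior \AMPC{} connectivity algorithms.
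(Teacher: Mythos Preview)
Your proposal misidentifies where the $O(\log n)$ factor in Ghaffari--Nowicki comes from. Their recursion already has only $O(\log\log n)$ levels; the $O(\log n)$ cost arises \emph{within each level}, from tracking the minimum \emph{singleton cut} that appears during the contraction process. The $(2+\epsilon)$ guarantee (Lemma~\ref{lemma:success}) is a dichotomy: either some intermediate contracted vertex acquires degree at most $(2+\epsilon)\lambda$, or the true minimum cut survives with probability $t^{-(1-\epsilon/3)}$. So the algorithm must compute $\min_{v,t}\Delta\bag(v,t)$ over all vertices $v$ and all contraction times $t$, not merely inspect the final contracted graph. Your proposal never mentions singleton cuts; ``taking the minimum cut produced across the terminal layers'' does not recover them, and without them the approximation analysis does not go through.

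Consequently the plan of ``pre-sample all randomness, run connectivity on each layer's edge prefix in parallel'' is attacking the wrong bottleneck. Even granting independent layers and $O(\log\log n)$-round connectivity, you still have to compute, for a single contraction sequence, the minimum degree attained by any super-vertex at any of the $\Theta(n)$ intermediate times. The paper's actual contribution is an $O(1/\epsilon)$-round AMPC routine for exactly this: it builds a new \emph{generalized low-depth decomposition} of the MST (height $O(\log^2 n)$), assigns each bag a unique leader by decomposition label, and shows that all bags whose leaders lie on a fixed decomposition level occupy disjoint subtrees. This lets the $\Theta(n)$ bags be handled in $O(\log^2 n)$ parallel groups, with each edge's contribution to $\Delta\bag(\cdot,\cdot)$ reduced to a time interval computable via path-minimum queries. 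Nothing in your outline plays this role.
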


Note that this is a vast improvement over the current state-of-the art algorithms in MPC by Ghaffari and Nowicki~\cite{ghaffari2020massively}, which achieves the same $2+\epsilon$ approximation in $O(\log n \log\log n)$ rounds. Both our algorithm and that of Ghaffari and Nowicki use Karger's methods as a general structure for finding the \mc{}.
Using this method, the goal is to recursively execute random graph contractions. From the results of Karger, the contraction process either finds a singleton cut that is a $2+\epsilon$ approximation or preserves a specific \mc{} with probability dependent on the depth of recursion. To leverage this result, at each step of the recursion process, we find the best singleton cut on the existing graph. Once the graph is small enough, the problem can be solved efficiently. Out of all the singleton solutions found during this process and the final \mc{} on the small graph, we simply select the best cut. This is a $2+\epsilon$ approximate \mc{} with high probability.

To implement this approach in a distributed model, both methods assign random weights to the edges of the input graph and find a minimum spanning tree (MST). Greedily, selecting edges in order of decreasing weight, we contract the graph along the current edge. This process is equivalent to the same greedy random contraction process on the original graph. This step, already, currently requires at least $\Omega(\log n)$ rounds in MPC, but the flexibility of the AMPC model allows us to achieve this step in a constant number of rounds.

It remains to show how can one find the best singleton cuts at each level of recursion. In order to do this, we employ a \emph{low-depth tree decomposition}
on the minimum spanning tree until it becomes a set of separated vertices. On top of this recursive divide-and-conquer process, we design a process to compute and remember the best singleton cut.

The high level idea of recursively partitioning the tree and applying a process on top of that to find the best singleton cut is the same in both our paper and Ghaffari and Nowicki's paper~\cite{ghaffari2020massively}. However, the processes used to do this in MPC do not yield simple improvements in AMPC. Rather, we must use entirely novel techniques that leverage adaptivity to get truly sublogarithmic results. In fact, this must be done in constant rounds to achieve our results, whereas Ghaffari and Nowicki do this in $O(\log n)$ rounds. In order to create a tree decomposition, we consider maximal paths of heavy edges (i.e., edges that go from a parent to its child with the largest subtree). These paths are replaced by binary trees whose leaves are the path and the root connects to the path's parent. Consider labeling the resulting vertices in the graph with their depth. For each internal node in one of these binary trees, which was \textit{not} a vertex in the original tree, we select a specific descendant leaf in the binary tree expansion of the path to send its depth to. The final value a vertex receives is then what we call the ``label'', which measures at what level of recursion the tree splits at that vertex. An entire labeling of the tree encodes an entire tree decomposition. This is done in constant AMPC rounds.

To compute the singleton cuts at each level, we assign to each singleton cut formed during the contraction process a vertex that has the lowest label. We show that such vertices are well-defined, i.e. there is only one vertex with the lowest label within vertices on the same side of a singleton cut. Because removing vertices of labels lower than $i$ partitions the tree into disjoint subtrees such that each subtree contains at most one vertex with label $i$, we are able to calculate minimal singleton cuts corresponding to these vertices with label $i$ in parallel in a constant number of AMPC rounds. Since, we constructed the low-depth decomposition such that the range of labels has size $O(\log^{2}n)$, thus, by increasing the total memory, we can perform these computations for all different lables in a constant number of AMPC rounds. For more details, we defer to Section~\ref{sec:singleton}.


We then show how this work can be leveraged to achieve efficient results for approximate \mkc{}, generalizing the results from Saran and Vazirani~\cite{SaranV95}. At a high level, we start by computing a \mc{}. Then we add the edges of the cut to a set $D$. In all following $k-1$ iterations, we calculate the \mc{} on the graph without edges in $D$, and add the new cut edges to $D$ for the next iteration. The set of the first $k$ cuts we compute is our $k$-cut. 

Compared to Saran's and Vazirani's technique, our method uses an \emph{approximate} \mc{} rather than an exact \mc{} on each splitting step. This requires adapted analysis of this general approach. We employ the structure of Gomory-Hu trees (see ~\cite{GomoryHu61trees}) for this purpose and show the following result:

\begin{theorem}\label{thm:vazirani'scuts}
\thmvazirani
\end{theorem}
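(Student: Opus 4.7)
The plan is to analyze APX-SPLIT by charging each approximate min cut it computes to an edge of a Gomory-Hu tree of $G$, and to implement it by sequentially invoking the subroutine of Theorem~\ref{thm:mincut} for $k-1$ iterations.

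For the approximation ratio, let $T$ be a Gomory-Hu tree of $G$ with edges $e_1, \ldots, e_{n-1}$ sorted in nondecreasing weight. The classical Saran-Vazirani bound states that the $k$-partition of $V$ obtained by removing the $k-1$ lightest edges of $T$ has cost at most $(2-2/k)\,\mathrm{OPT}$, and is controlled by $\sum_{i=1}^{k-1} w(e_i)$. I would prove the following invariant: at the start of iteration $i$ of APX-SPLIT, the graph $G - D$ has minimum cut at most $w(e_i)$. After $i-1$ non-trivial cuts, $G - D$ has exactly $i$ connected components, so contracting $T$ along them leaves at most $i-1$ non-loop tree edges; hence at least one of $e_1, \ldots, e_i$, call it $e^*$, has both endpoints $u,v$ in a single component, and the defining Gomory-Hu property gives that the min $u$-$v$ cut in $G$, and therefore in $G - D$ by edge-deletion monotonicity, is at most $w(e^*) \leq w(e_i)$. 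Theorem~\ref{thm:mincut} then produces a cut $C_i$ with weight at most $(2+\epsilon)\,w(e_i)$, and summing yields total cost at most $(2+\epsilon)(2-2/k)\,\mathrm{OPT} \leq (4+2\epsilon)\,\mathrm{OPT}$; rescaling $\epsilon$ gives the claimed $(4+\epsilon)$ factor.

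For the AMPC implementation, each phase applies Theorem~\ref{thm:mincut} to the current graph $G - D$, equivalently to the component with the smallest min cut. Storing $D$ and the component labels in a distributed hash table lets the adaptive reads of the algorithm of Theorem~\ref{thm:mincut} filter deleted edges and restrict attention to a single component at no extra round cost. Each of the $k-1$ phases uses $O(\log\log n)$ rounds, $\widetilde{O}(m)$ total memory, and $O(n^\epsilon)$ memory per machine, yielding the claimed $O(k\log\log n)$ round complexity within the same memory envelope.

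The main obstacle is reconciling the approximation with the Gomory-Hu analysis, because our cuts are not exact and so the sequence of partitions we produce differs from the one the classical argument envisions; the standard charging must therefore be recast in a form that depends only on the original tree $T$ rather than on the particular cuts taken. The key fact that rescues the argument is monotonicity of min cut under edge deletion: the Gomory-Hu tree of the original graph continues to certify the upper bound $w(e_i)$ throughout the execution, independently of which approximate cuts the algorithm chooses at earlier steps.
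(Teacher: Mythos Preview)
Your overall strategy---charge each approximate cut to a Gomory--Hu edge and combine with the Saran--Vazirani $(2-2/k)$ bound---is the same as the paper's, and your invariant (that at the start of iteration $i$ the minimum cut of $G-D$ is at most $w(e_i)$) is correct. However, the argument you give for the invariant contains a genuine error. You claim that contracting the Gomory--Hu tree $T$ along the $i$ components of $G-D$ leaves at most $i-1$ non-loop tree edges. This is false: contracting an $n$-vertex tree into $i$ supervertices leaves \emph{at least} $i-1$ non-loop edges, with equality only when each supervertex induces a connected subtree of $T$---and the components of $G-D$ need not be subtrees of $T$. Concretely, take $G=K_4$ with Gomory--Hu tree the path $1\text{--}2\text{--}3\text{--}4$ (all weights $3$). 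The cut $\{1,3\}\mid\{2,4\}$ has weight $4$, so it is a legitimate $(2+\epsilon)$-approximate min cut; after removing it, all three tree edges cross the two components, so neither $e_1$ nor $e_2$ has both endpoints in a single component. Your pigeonhole step therefore fails, and the conclusion you draw from it is simply false in this instance.

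The fix---and this is what the paper does---is to argue with the \emph{cuts} $b_j=\delta_G(S_j)$ determined by the Gomory--Hu edges rather than with the endpoints of those edges. Removing $b_1,\ldots,b_i$ from $G$ yields at least $i+1$ components, while $G-D$ has only $i$; hence not every $b_j$ can satisfy $b_j\subseteq D$ (otherwise $comps(D)\ge comps(\bigcup b_j)\ge i+1$). Any $b_j\not\subseteq D$ has a surviving edge, whose endpoints lie in one component of $G-D$ and are separated by $S_j$; that component then has a cut of weight at most $|b_j|\le w(e_i)$, and your monotonicity/approximation conclusion goes through. In the $K_4$ example, even though no tree edge has both endpoints in one component, the cut $b_1=\delta(\{1\})$ contains the surviving edge $(1,3)$, and $\{1\}$ indeed splits the component $\{1,3\}$. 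The remainder of your outline (summation, rescaling of $\epsilon$, and the $O(k\log\log n)$ AMPC implementation via $k-1$ calls to Theorem~\ref{thm:mincut}) is in line with the paper.
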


Therefore, for small values of $k$, we can achieve efficent algorithms for $4+\epsilon$ approximate \mkc{} in AMPC. Note that there are no existing results in the MPC model, however our methods applied to the work of Ghaffari and Nowicki~\cite{ghaffari2020massively} yield:

\begin{corollary}
There is an algorithm that achieves a $(4 + \epsilon)$ approximation of the \mkc{} with high probability that can be implemented in the MPC model with $\bigO(n^{\epsilon})$ memory per machine in $\bigO(k \log n\log\log{n})$ rounds and $\bigO(m)$ total memory.
\end{corollary}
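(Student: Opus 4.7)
The plan is to reuse the outer \textsf{APX-SPLIT} meta-algorithm from Theorem~\ref{thm:vazirani'scuts} verbatim, but swap out the $\AMPC$ min-cut subroutine for the $\MPC$ algorithm of Ghaffari and Nowicki~\cite{ghaffari2020massively}. Recall that \textsf{APX-SPLIT} consists of $k$ outer iterations: in iteration $i$ it computes an (approximate) minimum cut of the current graph $G\setminus D$, and then augments $D$ with the edges of that cut. The approximation analysis in Theorem~\ref{thm:vazirani'scuts} only uses that the cut returned in each iteration is a $(2+\epsilon)$-approximate min-cut; it is oblivious to how that cut is computed. Since Ghaffari--Nowicki deliver exactly such a $(2+\epsilon)$-approximation in the fully-scalable $\MPC$ regime with $O(n^{\epsilon})$ memory per machine, $\widetilde O(n+m)$ total memory, and $O(\log n\log\log n)$ rounds, the approximation guarantee is immediate and we only need to argue the implementation and round count.

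Concretely, I would maintain between outer iterations the current edge set $E\setminus D$ explicitly in the distributed memory. After each invocation of the min-cut subroutine returns a cut $C$, a constant number of $\MPC$ sorting/compaction rounds (standard primitives in the $O(n^{\epsilon})$-memory regime) suffice to (i) broadcast the identifiers of $C$ to the machines holding the edges, (ii) mark those edges as belonging to $D$, and (iii) reshuffle the remaining edges into a contiguous layout so that the next min-cut invocation can treat $G\setminus D$ as its input graph on the same vertex set. Neither $|D|$ nor the intermediate representations ever exceed $O(m)$ words, so the total memory budget of $O(m)$ is preserved throughout; memory per machine stays at $O(n^{\epsilon})$.

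Summing up, the algorithm runs $k$ outer phases, each consisting of one call to the $O(\log n\log\log n)$-round Ghaffari--Nowicki min-cut routine plus $O(1)$ $\MPC$ rounds of bookkeeping, for a total of $O(k\log n\log\log n)$ rounds. The high-probability guarantee follows by a union bound over the $k$ calls, absorbing the $\log k \le \log n$ loss into the polylog factor in ``high probability.'' The $(4+\epsilon)$-approximation guarantee is inherited unchanged from Theorem~\ref{thm:vazirani'scuts}.

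The only nontrivial point is item (iii) above: in $\AMPC$ we could simply store $D$ in a shared hash table and have the min-cut subroutine skip any edge it reads whose endpoints lie in $D$, but in $\MPC$ we must physically deliver the updated graph to the machines before starting the next phase. However, this is a textbook application of parallel sorting and tagging in the fully-scalable $\MPC$ model, costing $O(1)$ rounds per phase and thus not affecting the asymptotic round count. Hence the main obstacle is purely presentational — confirming that the inter-phase graph update fits in the stated memory and round budgets — and no new algorithmic ideas are required beyond what Theorem~\ref{thm:vazirani'scuts} and Ghaffari--Nowicki already provide.
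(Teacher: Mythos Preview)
Your proposal is correct and matches the paper's approach: the corollary is stated without a standalone proof, as the paper simply notes that its \textsf{APX-SPLIT} framework (whose approximation analysis is model-agnostic) applied with Ghaffari--Nowicki's $O(\log n\log\log n)$-round $\MPC$ min-cut routine in place of the $\AMPC$ one yields the claimed bounds. One small clarification: \textsf{APX-SPLIT} actually runs the min-cut subroutine on each connected component of $G\setminus D$ (not on the whole graph) and picks the smallest, so each outer phase also needs an $\MPC$ connectivity call, but this costs $O(\log n)$ rounds in the fully-scalable regime and is absorbed by the per-phase $O(\log n\log\log n)$ budget.
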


Note there is still a logarithmic-in-$n$ improvement in the round complexity in AMPC over MPC no matter the value of $k$. Due to space constraints both these result are presented in the appendix.

\section{Minimum Cut in AMPC}

\newcommand{\levels}{\mathcal{L}}
\newcommand{\opt}{\mathsf{OPT}}



Karger and Stein~\cite{karger1996new} proposed a foundational edge contraction strategy for solving \mc{}: 
\begin{itemize}
    \item Create two copies of $G$, and independently on each, contract edges in a random order until there are at most $\frac{n}{\sqrt{2}}$ vertices.
    \item Recursively solve the problem on each contracted copy until they have constant size.
    \item Return the minimum of the cuts found on both copies.
\end{itemize}


\begin{lemma}[\cite{karger1996new}]\label{lem:karger}
The contraction process executed to the point where there are only $\frac{n}{t}$
vertices left preserves any fixed minimum cut with probability $\Omega\left(\frac{1}{t^2}\right)$.
\end{lemma}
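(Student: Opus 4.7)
The plan is to follow the classical Karger-style argument: fix any minimum cut $C$ of size $k$, track the probability that no edge of $C$ is ever picked during the first $n - n/t$ random contraction steps, and show this probability telescopes to $\Omega(1/t^{2})$.

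First I would note the following monotonicity: at any point during the contraction process, as long as no edge of $C$ has yet been contracted, the cut $C$ remains a valid cut in the current multigraph (with the same number of edges $k$). Consequently, the minimum cut of the current contracted multigraph is still at most $k$, and by the standard observation that the minimum degree is at most twice the minimum cut, every vertex (supernode) in the current multigraph has degree at least $k$. Hence if $i$ supernodes remain, the current number of edges is at least $ik/2$.

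Next I would bound the per-step survival probability. When $i$ supernodes remain, a uniformly random edge is chosen for contraction; the probability that this edge lies in $C$ is at most $k/(ik/2) = 2/i$, so the probability that $C$ survives this step, conditioned on having survived so far, is at least $(i-2)/i$. Multiplying these bounds from $i = n$ down to $i = \lceil n/t \rceil + 1$ gives
\[
\Pr[C \text{ survives}] \;\ge\; \prod_{i=\lceil n/t\rceil+1}^{n} \frac{i-2}{i} \;=\; \frac{\lceil n/t\rceil \cdot (\lceil n/t\rceil - 1)}{n(n-1)},
\]
which is $\Omega(1/t^{2})$ by a direct calculation since numerator and denominator telescope (each $i$ in the numerator cancels with the $i$ two steps later in the denominator, leaving only the two smallest numerators and the two largest denominators).

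I do not expect a genuine obstacle here; the argument is a textbook application of Karger's contraction analysis. The only subtle point to articulate carefully is the conditional nature of the per-step bound: the inequality $|E| \ge ik/2$ holds only on the event that $C$ has not yet been destroyed, which is exactly the event we are conditioning on when we chain the survival probabilities. Once that conditioning is stated cleanly, the telescoping product immediately yields the $\Omega(1/t^{2})$ bound claimed by the lemma.
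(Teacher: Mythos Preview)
Your overall plan is exactly the standard Karger argument, and the paper does not give its own proof of this lemma (it simply cites \cite{karger1996new}), so there is nothing to compare against beyond the textbook derivation you are reproducing.

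That said, one step of your write-up is logically garbled and should be fixed. You argue: ``the minimum cut of the current contracted multigraph is still at most $k$, and by the standard observation that the minimum degree is at most twice the minimum cut, every vertex has degree at least $k$.'' From ``min cut $\le k$'' together with ``min degree $\le 2\cdot(\text{min cut})$'' you would only get min degree $\le 2k$, which is the wrong direction and does not yield degree $\ge k$. The correct (and simpler) observation is that the degree of any supernode in the contracted multigraph equals the weight of a cut in the \emph{original} graph, hence is at least the original minimum cut $k$; equivalently, contraction can only increase the minimum cut, so the contracted graph has min cut $\ge k$ and therefore min degree $\ge k$. This holds unconditionally, so you do not even need the survival of $C$ for this particular inequality (you do need it so that exactly $k$ edges of $C$ remain available to be hit). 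Once this is stated correctly, your telescoping product and the resulting $\Omega(1/t^{2})$ bound are fine.
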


According to Lemma~\ref{lem:karger}, naively contracting random edges until there are only two vertices remaining preserves at least one minimum cut with probability $\Omega\left(\frac{1}{n^2}\right)$. Thus, we need to repeat the naive contraction process at least $O\left(n^2\log n\right)$ times so that we have a high \emph{probability of success}, i.e., preserving a minimum cut. However, Karger and Stein~\cite{karger1996new} show that their recursive strategy  succeeds with probability $\Omega\left(\frac{1}{\log n}\right)$. In turn, running $O\left(\log^2 n\right)$ instances of the recursive strategy is enough to find a minimum cut with high probability. 

Roughly speaking, the choice of $t = \sqrt{2}$ as the inverse of the branching factor assures that a minimum cut is preserved with probability $\frac{1}{t^2} = \frac{1}{2}$ throughout the contractions in each copy. Thus, the probability of success, say $P(n)$, for $n$ vertices is bounded by:
\begin{align} \label{eq:probsucc}
P(n) \geq 1 - \left(1 - \frac{1}{2}\cdot P\left(\frac{n}{\sqrt{2}}\right)\right)^2
\end{align}

Note that the random contractions in two copies are assumed to be independent, and the probability of success for each copy is at least $\frac{1}{2}\cdot P\left(\frac{n}{\sqrt{2}}\right)$ since we recurse on the resulting contracted graph with $\frac{n}{\sqrt{2}}$ vertices.
 Inequality (\ref{eq:probsucc}) implies that at the $k$-th level of recursion (counting from the bottom), the probability of success is $\Omega\left(\frac{1}{k}\right)$, and in particular $\Omega\left(\frac{1}{\log n}\right)$ at the root of recursion.~\cite{karger1996new}. 

Let us now give some high-level insight into the approach by Ghaffari's and Nowicki. Ghaffari and Nowicki~\cite{ghaffari2020massively} observed that if we only desire a $(2 + \epsilon)$ approximate cut, we can use a better bound for the probability of preserving a minimum cut, or alternatively, the success probability.

\begin{lemma}[\cite{ghaffari2020massively,karger1996new}]\label{lemma:success}
On an $n$-vertex graph $G$, let $C$ be a minimum cut with weight $\lambda$. Fix an arbitrary $\epsilon \in (0,1)$. The described random
contraction process that contracts $G$ down to $\frac{n}{t}$
vertices either at some step creates a singleton cut of size at most
$(2 + \epsilon)\lambda$ or preserves $C$ - i.e., it does not contract any of its edges - with probability at least $\frac{1}{t^{1 - \epsilon / 3}}$.
\end{lemma}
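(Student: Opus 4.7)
\medskip
\noindent\emph{Proof plan.} My plan is to analyze the random contraction process step by step, carefully distinguishing between the two ``good'' events: $A$ = ``at some point during the process the current multigraph has minimum weighted degree at most $(2+\epsilon)\lambda$'' (i.e.\ a singleton cut of size $\leq (2+\epsilon)\lambda$ has appeared), and $B$ = ``no edge of the fixed minimum cut $C$ is ever contracted''. I want to lower-bound $\Pr[A \cup B]$. The central observation I will exploit is purely deterministic: if a multigraph on $k$ vertices has minimum weighted degree strictly greater than $(2+\epsilon)\lambda$, then its total edge weight is $W > \tfrac{k(2+\epsilon)\lambda}{2}$, so when we pick an edge with probability proportional to weight, the probability of selecting an edge of $C$ is at most $\lambda/W < \tfrac{2}{k(2+\epsilon)}$.

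Next, I will turn this one-step bound into a multi-step bound by an inductive, worst-case-over-histories argument. Let $P(n', s)$ denote the infimum, over all multigraphs on $n'$ vertices in which $C$ is still intact, of the probability that $A \cup B$ occurs during the remaining $s$ contractions. If the graph at some step already has minimum degree $\leq (2+\epsilon)\lambda$, then $A$ has already occurred and the probability is $1$. Otherwise, the probability of contracting a $C$-edge at this step is at most $\tfrac{2}{n'(2+\epsilon)}$; conditioning on choosing a non-$C$ edge $e$, the resulting graph $G/e$ has $n'-1$ vertices and $C$ intact, so the remaining probability is at least $P(n'-1, s-1)$. Dropping the (nonnegative) contribution from contracting $C$-edges yields the recursion
\[
P(n', s) \;\geq\; \Bigl(1 - \tfrac{2}{n'(2+\epsilon)}\Bigr)\, P(n'-1, s-1).
\]
Unrolling from $(n, T)$ down to $(n/t, 0)$ where $T = n - n/t$ and using the trivial base case $P(n/t, 0) = 1$ (since $C$ intact at the end gives $B$), I obtain
\[
\Pr[A \cup B] \;\geq\; \prod_{n' = n/t + 1}^{n} \Bigl(1 - \tfrac{2/(2+\epsilon)}{n'}\Bigr).
\]

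Finally, setting $c = 2/(2+\epsilon)$, a standard calculation (either via $\sum_{n'} \ln(1 - c/n') \geq -c \ln t - O(1/n)$ or via the Gamma-function identity $\prod (n'-c)/n' = \Gamma(n+1-c)\Gamma(n/t+1)/[\Gamma(n/t+1-c)\Gamma(n+1)]$ combined with Stirling) shows this product is at least $t^{-c} = t^{-2/(2+\epsilon)}$ up to lower-order factors that can be absorbed into the slack. It remains to check that $\tfrac{2}{2+\epsilon} \leq 1 - \tfrac{\epsilon}{3}$ for every $\epsilon \in (0,1)$; this follows by cross-multiplying and verifying $\epsilon \geq \epsilon^2$, which holds on $(0,1]$. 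Since $t \geq 1$ and the map $a \mapsto t^{-a}$ is decreasing, this inequality of exponents implies $t^{-2/(2+\epsilon)} \geq t^{-(1-\epsilon/3)}$, completing the bound.

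\medskip
\noindent\emph{Main obstacle.} The delicate part is that the events ``$C$ survives'' and ``no small singleton ever appears'' are intertwined random events over the full trajectory, so one cannot simply condition step by step on ``the minimum degree stays large.'' Setting up $P(n', s)$ as an \emph{infimum} over all admissible multigraphs is the crux: it lets the induction pessimize over the current configuration at every step, so the only thing that needs to be tracked going forward is the per-step chance of hitting $C$, which is bounded cleanly by the min-degree dichotomy. The remaining difficulty is purely a calculation — converting the product into the stated $t^{-(1-\epsilon/3)}$ form — and the elementary inequality $\tfrac{2}{2+\epsilon} \leq 1 - \tfrac{\epsilon}{3}$ on $(0,1]$ is precisely what makes this conversion go through.
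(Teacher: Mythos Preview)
The paper does not prove this lemma itself; it is quoted with citations to Ghaffari--Nowicki and Karger--Stein and used as a black box. Your argument is correct and is precisely the standard one from those references: the dichotomy ``either the current minimum degree is already $\le (2+\epsilon)\lambda$, or the total edge weight exceeds $k(2+\epsilon)\lambda/2$ and hence the chance of hitting $C$ at this step is below $\tfrac{2}{(2+\epsilon)k}$,'' followed by telescoping the resulting product and the exponent comparison $\tfrac{2}{2+\epsilon} \le 1 - \tfrac{\epsilon}{3}$.

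Two minor remarks. First, your infimum-over-configurations definition of $P(n',s)$ is a perfectly clean way to handle the conditioning, arguably more careful than necessary but certainly valid. Second, you are slightly underselling the product estimate: for $0<c\le 1$ one has the elementary pointwise inequality $\tfrac{k-c}{k}\ge\bigl(\tfrac{k-1}{k}\bigr)^{c}$ (the difference is concave in $c$ and vanishes at $c=0,1$), which telescopes to give $\prod_{k=n/t+1}^{n}\bigl(1-\tfrac{c}{k}\bigr)\ge t^{-c}$ \emph{exactly}, with no lower-order loss to absorb. This removes any worry about the endpoints $\epsilon\to 0,1$ where your exponent inequality becomes tight.
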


A \emph{singleton cut} is a partitioning of graph vertices so that there is only one vertex on one side, i.e., $\delta(S)$ so that $|S| = 1$. Assuming that one is able to verify whether a singleton cut of a small size has been formed during the contraction process, they show that this greater probability of success can boost the recursive process. In short, consider the $k$-th level of recursion, where level $0$ corresponds to the bottom level. Let $\frac{n}{t_{k}}$ be the size of a single recursive instance  at level $k$, and denote by $s_{k}$ the total number of instances on this level. For all $k$, they ensure $s_{k} = t_{k}^{1 - \epsilon / 3} $. 

Now, let $x_{k}^{1 - \epsilon / 3}$ be the branching factor on level $k$. That is, the recursion produces $x_{k}^{1 - \epsilon / 3}$ copies of the instance at level $k$, and on each of them independently contracts edges in a random order until the number of vertices is bigger than $\frac{n}{t_{k}} \cdot \frac{1}{x_{k}}$. If we have an algorithm that is able to \textit{track} whether a small singleton cut appeared in each of these random processes, we either get a singleton cut that $(2+\epsilon)$ approximates a minimum cut or a minimum cut is preserved with probability $x_{k}^{1 - \epsilon / 3}$. Since we made $x_{k}^{1-\epsilon / 3}$ copies, by a similar argument as in Karger's approach, we get that, in the latter case, the probability of preserving a minimum cut is $\Omega\left(\frac{1}{k}\right)$. 

Finally, observe that on the $k$-th level of recursion, the most costly operation is copying a $k$-th level instance $x_{k}^{1 - \epsilon / 3}$ times in order to contract edges in each of these instances. Since the instance has size $\frac{n}{t_{k}}$ and we have $s_{k}$ instances, processing these tasks in parallel requires $\frac{n}{t_{k}} \cdot s_{k} \cdot x_{k}^{1-\epsilon / 3}$ space. If one want to fit this in $O(n)$ space, then it must be that $x_{k} \le t_{k}^{(\epsilon / 3) / (1 - \epsilon / 3)}$. Anyway, we get that the number of contractions we can make on $k$-th level is polynomial in the number of contractions we made on higher levels, and if the recurrence is solved, then it follows that it will be $O(\log\log n)$ levels until we reach a graph of a constant size.

Ghaffari and Nowicki~\cite{ghaffari2020massively}, use Lemma~\ref{lemma:success} and the above boosting scheme to show an $O(\log\log{n}\cdot\log{n})$-round MPC algorithm for \mc{}. The main non-trivial part of their algorithm involves tracking the smallest singleton cut on each recursion level, which they do in $O(\log{n})$ rounds because of the divide and conquer nature of their approach. Effectively, they assign all edges random and unique edge weights, and contract all uncontracted edges in decreasing order by edge weight. It can then be shown that all that needs to be done is to compute the MST of this graph and contract these edges accordingly (all other edges will be automatically contracted when another edge is contracted). We reduce the number of rounds for singleton cut tracking down to $O(1)$ rounds in the AMPC model. We aim to prove the following theorem.

\begin{customthm}{\ref{thm:mincut}}
\thmmincut
\end{customthm}

To track singleton cuts, the first step is to find a \emph{low depth decomposition} of the current MST. At a high level, a low depth decomposition of a tree is a labeling of its vertices with values $1$ through $d$, where $d$ is the depth. This label must satisfy the following: for every level $i\in[d]$, the connected components induced on vertices with label at least $i$ must contain at most one vertex for each $i$. This defines a recursive splitting process: starting at depth 1, there must be at most one vertex $v$ with the minimum label, so we can split the tree into multiple parts by removing $v$. Then we simply recurse on each connected component, considering the next set of labels, and knowing the process will always split each connected component once at a time. This is the general idea captured by both this and previous works. However, in order to increase the efficiency of this step, we require a new decomposition structure (see Definition~\ref{def:lowdepth}) and new methods for finding the decomposition. Notice that it is always true that at each level, each connected component contains at most one vertex at the next level.

In Section~\ref{sec:low-depth}, we show how to find a low depth decomposition with depth $O(\log^2{n})$ in AMPC in $O(1/\epsilon)$ rounds (Lemma~\ref{lem:treedecomp}) with $O(n^\epsilon)$ space per machine. Roughly speaking, we create a heavy-light decomposition of the MST, where we store ``heavy paths'' consisting of edges connecting vertices to their children with the largest number of descendants and isolated ``light nodes''. We replace each heavy path with a complete binary tree whose leaves contain the vertices in the path, which gives us an efficient structure to obtain our labeling. This yields our low depth decomposition.

In the next step, we compute the size of
of every singleton cut $S$ that is created during the process. Note that the contractions are inherently sequential and the number of contractions we need to make at step $k$ is $x_{k} \in O(n)$. However, each singleton cut is a connected component on the MST containing a specific edge $e$, whose contraction -- in the increasing order of contracting MST edges -- results in subset $S$, if we only allow the edges that have a smaller weight than $e$. We partition these connected components based on the vertex in the cut with the lowest level in the heavy-light decomposition of the MST. This way, we can compute every level of the low depth decomposition in parallel with only an $O(\log^2{n})$ blowup in total memory. In Section~\ref{sec:singleton}, we show that we can track every singleton cut in the contraction process in $O(1)$ AMPC rounds. A high level pseudocode of the main algorithm is given in Algorithm~\ref{alg:min_cut_full}.

\begin{algorithm}[ht]
\SetAlgoLined
\KwData{A graph $G = (V(G), E(G))$, a parameter $k$.}
\KwResult{$(2 + \epsilon)$ approximation of \mc{}.}
\If{$|G| \in n^{\epsilon}$}
{\KwRet{\mc{} of $G$ calculated on a single machine}}
Let $\widehat{G_{1}}$, $\ldots$, $\widehat{G_{k}}$ be copies of $G$ with assigned random weight on edges (independently for each copy)\;
\underline{In parallel for all $i\in[k]$, $S_{i}$ $\leftarrow$ \textsf{MinSingletonCut}$(\widehat{G_{i}})$}\;
In parallel for all $i\in[k]$, $G_{i}$ $\leftarrow$ copy of $\widehat{G_{i}}$ after first $k$ contractions\;
In parallel, $C_{i}$ $\leftarrow$ \textsf{AMPC-MinCut}$(G_{i})$\;
\KwRet{$\min(S_{1}, \ldots, S_{k}, C_{1}, \ldots, C_{k})$}\;
\caption{\textsf{AMPC-MinCut} \\(An algorithm that calculates $(2 + \epsilon)$ approximation of \mc{} in $G$. The novel part is underlined. )}
\label{alg:min_cut_full}
\end{algorithm}

Note that \textsf{MinSingletonCut} (Algorithm~\ref{alg:smallestsingletoncut}) is introduced in Section~\ref{sec:singleton} and it leverages \textsf{LowDepthDecomp} (Algorithm~\ref{alg:lowdepthdecomp}) from Section~\ref{sec:low-depth}.

\section{Generalized Low Depth Tree Decomposition}\label{sec:low-depth}

This section and the next address our algorithmic formulation and analysis. Note that all omitted proofs are deferred to the Appendix.

In order to efficiently compute the singleton cuts in parallel, we first need to compute an efficient decomposition of the MST. The low depth tree decomposition Ghaffari and Nowicki~\cite{ghaffari2020massively} introduce is a very specific decomposition with $i$ levels such that at each level $\ell$, any connected component of size $s$ on vertices at that level or higher has a single vertex at level $\ell$ that separates the component into two components with size at least $s/3$ each. Unfortunately, it is unclear how to calculate this precise decomposition efficiently in AMPC. To work around this, we introduce a more generalized version of the low depth tree decomposition, show that it can be computed in AMPC, and later show that we can leverage this to obtain our \mc{} algorithm.

\begin{definition}\label{def:lowdepth}
A \textbf{generalized low depth tree decomposition} of some tree $T$ is a labeling $\ell:V(T) \to [h]$ of vertices with \textbf{levels} for decomposition height $h\in O(\log^2n)$ such that for each level $i$, the connected components induced on $T^i = \{v\in T: \ell(v) \geq i\}$ have at most one vertex labeled $i$ each.
\end{definition}

Notice we do not define how a level is assigned; we simply require it is assigned to satisfy the property on connected components. We describe one way to do that in this section.

To see what such a decomposition looks like, consider a process where at timestep $t$ we look at the subgraph induced on the vertices $v$ with $\ell(v) \geq t$ (i.e., $T^t$). Consider a connected component $C$ and let $v$ be its minimum level vertex. Then $\ell(v) \geq t$, and it is the only vertex at that level in $C$. At timestep $\ell(v) + 1$, $C$ becomes separated into multiple components who all contain a vertex adjacent to $v$. This process defines forests with smaller and smaller trees as time passes, and eventually results in isolated vertices. The completion time of this process depends on the height of the decomposition, which in our case is $O(\log^2n)$. We will, of course, make this more efficient in Section~\ref{sec:singleton}.

It is not that hard to see that Ghaffari and Nowicki's low depth tree decomposition is a specific example of generalized tree decomposition with depth $O(\log n)$. They put a single vertex in the first level and then simply recurse on the two trees in the remaining forest. Note that they require additional properties of this decomposition to obtain their result, specifically that each new component has size at least $\frac13$ of the original component, but we will see later that these are not necessary for finding the singleton cuts.

Like Ghaffari and Nowicki in MPC, we prove this can be computed efficiently in, instead, AMPC. 

\begin{lemma}\label{lem:treedecomp}
Computing a generalized low depth tree decomposition of an $n$-vertex tree can be done in $O(1/\epsilon)$ AMPC rounds with $O(n^\epsilon)$ memory per machine and $O(n\log^2 n)$ total memory.
\end{lemma}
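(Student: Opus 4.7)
My plan is to construct the decomposition via a \emph{heavy-light decomposition} of $T$, with each heavy path replaced by a complete binary tree; the labels are then read off from depths in the resulting augmented tree $T'$. Root $T$ arbitrarily, compute subtree sizes, and mark each parent-to-largest-subtree-child edge as \emph{heavy} and the remaining parent-to-child edges as \emph{light}; heavy edges form vertex-disjoint paths. Replace each maximal heavy path $(p_1,\dots,p_k)$ (top-to-bottom) by a complete binary tree $B_p$ whose leaves, in left-to-right order, are $p_1,\dots,p_k$, and attach $B_p$'s root above $p_1$ via the original light edge entering the path. Call the result $T'$. For each internal node $u$ of each $B_p$, set $\pi(u)$ to be the rightmost leaf of $u$'s left subtree; standard properties of complete binary trees make $\pi$ an injection from internal nodes to leaves. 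Finally, define $\ell(v)$ to be the depth (in $T'$) of the unique internal node $u$ with $\pi(u)=v$, or $v$'s own depth in $T'$ if no such $u$ exists.

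To verify this satisfies Definition~\ref{def:lowdepth}, I would argue that for each level $i$, the connected components of $T^i=\{v:\ell(v)\ge i\}$ correspond to the original-vertex traces of the $T'$-subtrees rooted at $T'$-depth exactly $i$; within each such subtree, the definition of $\pi$ together with the fallback convention guarantees at most one original vertex with label exactly $i$. The height bound $h=O(\log^2 n)$ follows from the standard heavy-light estimate that any root-to-leaf walk in $T'$ traverses $O(\log n)$ light edges while each heavy path contributes at most $O(\log n)$ binary-tree depth.

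For the AMPC implementation I would run, in $O(1/\epsilon)$ rounds with $O(n^\epsilon)$ memory per machine, the standard pipeline of Euler tour, prefix sums for subtree sizes, classification of heavy edges, and a tour-contiguous layout of heavy paths; these primitives are available in the AMPC toolbox of Behnezhad et al. Each $B_p$ is stored implicitly by indexing its internal nodes as triples $(p,\text{level},\text{offset})$, so $\pi$ values are computable locally in a single round, and each original vertex fetches its $\pi$-preimage's depth through one adaptive hash-table query. The $O(n\log^2 n)$ total-memory budget accommodates storing the $O(\log^2 n)$-deep ancestral information in $T'$ together with the per-heavy-path binary-tree overlays.

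The main obstacle is computing the depth in $T'$ of every original vertex, because such a leaf may sit at the end of an ancestor chain of length $\Theta(\log^2 n)$ composed of $O(\log n)$ heavy paths separated by light edges. Naive pointer chasing along this chain costs $\Theta(\log^2 n)$ rounds and is precisely the bottleneck that makes $\MPC$ constructions slow. Adaptive read access resolves it: partition each vertex's ancestor chain in $T$ into blocks of $n^\epsilon$ consecutive light-and-binary-tree segments, and have one machine per block prefetch all segment data from the hash table in a single adaptive round; $O(1/\epsilon)$ rounds suffice to accumulate all contributions. Every other ingredient reduces to Euler tours and prefix sums already known to run in $O(1/\epsilon)$ AMPC rounds, so this depth-aggregation step is the only genuinely new hurdle.
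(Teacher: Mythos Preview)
Your construction is essentially the paper's: heavy--light decomposition, replace each heavy path by an (almost) complete binary tree to form an expanded tree $T'$ of depth $O(\log^2 n)$, and label each original vertex by the $T'$-depth of the internal binary-tree node that ``points'' to it. The only cosmetic difference is that you use $\pi(u)=$ rightmost leaf of $u$'s left subtree, whereas the paper uses the mirror convention (leftmost leaf of the right subtree); the two are symmetric and the paper's case-by-case verification of Definition~\ref{def:lowdepth} transfers verbatim.

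One point is overcomplicated. You call the $T'$-depth computation ``the only genuinely new hurdle'' and propose blocking the $\Theta(\log^2 n)$-long ancestor chain into $n^\epsilon$-sized pieces over $O(1/\epsilon)$ rounds. But $\log^2 n = o(n^\epsilon)$, so the entire root-to-$v$ path in $T'$ fits in one machine's local memory; a single machine can simply follow parent pointers adaptively and read the whole path in \emph{one} round. That is exactly how the paper handles it (Lemma~\ref{lem:decompcorrectness}), and the $O(1/\epsilon)$ round cost of the overall lemma comes entirely from the underlying primitives you already cite---rooting, subtree sizes, connectivity, pre-order numbering---not from the labeling step. Your blocking argument is not wrong (it degenerates to a single block), just unnecessary.
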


The rest of this section is dedicated to proving Lemma~\ref{lem:treedecomp}. The formal and complete algorithm is shown in Algorithm~\ref{alg:lowdepthdecomp} and further details and definitions can be found later in this section. At a high level, our algorithm proceeds as follows:
\begin{enumerate}
\item Root the tree and orient the edges [line~\ref{lin:root}].
\item Contract heavy paths in a \textit{heavy-light decomposition} of $T$ into \textit{meta vertices} to construct a \textit{meta tree}, $T_M$ [lines~\ref{lin:heavylightstart} to~\ref{lin:heavylightend}].
\item For each meta vertex, create a \textit{binarized path}, a binary tree whose leaves are the vertices in the heavy path, in order. Expanding meta vertices in this manner yields our \textit{expanded meta tree} [lines~\ref{lin:expandvertsstart} to~\ref{lin:expandvertsend}].
\item Label each vertex according to properties of the expanded meta tree. For all new vertices (i.e., vertices created in step 3) $v$, label $v$ with the depth of the highest vertex $u$ in the same meta node such that $v$ is the leftmost leaf descending from the rightmost child of $u$ in the binarized path of the meta node [lines~\ref{lin:labelvertsstart} to~\ref{lin:labelvertsend}].
\end{enumerate}

Each of these steps correspond to the following subsections. For instance, step 1 corresponds to Section~\ref{sec:low-depth}.1. All relevant terminology related to these steps are additionally found in the corresponding subsections. Lemma~\ref{lem:treedecomp} is proven at the end of the final subsection.

\begin{algorithm}[ht]
\SetAlgoLined
\KwData{A tree $T = (V(T), E(T))$.}
\KwResult{A mapping $\ell:V(T)\to \mathbb{N}$ of tree vertices to levels.}
 Initialize $\ell:V(T)\to \mathbb{N}$\;
 Root and orient $T$\;\label{lin:root}
 Let $T_H = (V(T), \{e\in E(T): e\text{ is heavy}\}$\;\label{lin:heavylightstart}
 Let $\mathcal{P}$ be the connected components of $T_H$\;
 Let $T_M = (\mathcal{P}, \{(P_1,P_2): P_1,P_2\in\mathcal{P}, \exists (u_1,u_2)\in V(P_1)\times V(P_2) \text{ such that } (u_1,u_2) \in E(T)\})$\;\label{lin:heavylightend}
 \For{$v\in T_M$ of heavy path $P_v$ in parallel} {
    Let $V(T_v)$ be a vertex set of size $2|P_v|-1$ with associated indices $1,\ldots,2|P_v|-1$, denoted by $i_u$\;\label{lin:bintreestart}\label{lin:expandvertsstart}
    Let $T_v = (V(T_v), \{(u,p_u): i_{p_u} = \lfloor i_u /2\rfloor \})$\;\label{lin:bintreeend}
    Pre-order traverse $T$ and sort $P_v$ accordingly\;\label{lin:mapleavesstart}
    Pre-order traverse $T_v$ and let $L$ be its sorted leaves\;
    For all $i\in [|P_v|]$, map $P_v[i]$ to $L[i]$\;\label{lin:mapleavesend}\label{lin:expandvertsend}
    \For{$u \in V(T_v)$}{
        Find path $P^u$ to the root of the expanded meta-tree\;\label{lin:labelvertsstart}\label{lin:findpath}
        Let $u'\in V(T_v)\cap P^u$ be such that $u$ is the leftmost descendant of $u'$'s right child (otherwise $u'=u$)\;\label{lin:findleftright}
        Label $\ell(u) = d(u')$\;\label{lin:labelvertsend}\label{lin:label}
    }
  }
  \KwRet{$\ell$ limited to the original vertices in $T$}\;
 \caption{\textsf{LowDepthDecomp} \\(Computing a generalized low depth tree decomposition of an input tree in AMPC)}
\label{alg:lowdepthdecomp}
\end{algorithm}

\subsection{Rooting the Tree}
Like in Ghaffari and Nowicki, the first thing we need to do in line~\ref{lin:root} of Algorithm~\ref{alg:lowdepthdecomp} is compute an orientation of the edges. Fortunately this, along with rooting the tree, can be done quickly in AMPC by the results of Behnezhad et al.~\cite{behnezhad2019massively} in their Theorem 7.

\begin{lemma}[Behnezhad et al.~\cite{behnezhad2019massively}]\label{lem:orient}
Given a forest $F$ on $n$ vertices, the trees in $F$ can be rooted and edges can be oriented in $O(1/\epsilon)$ AMPC rounds w.h.p. using $O(n^\epsilon)$ local memory and $O(n\log n)$ total space w.h.p.
\end{lemma}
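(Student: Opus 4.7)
My plan is to reduce rooting and edge orientation on the forest $F$ to list ranking on an Euler tour of each tree, which is the standard parallel tree-rooting technique, and to lean on AMPC's adaptive reads to execute list ranking in $O(1/\epsilon)$ rounds rather than the $\Omega(\log n)$ an analogous $\MPC$ approach would need.

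First I would compute the connected components of $F$ using a standard AMPC connectivity primitive that runs in $O(1/\epsilon)$ rounds with $O(n^\epsilon)$ local memory and $\widetilde{O}(n)$ total memory; afterwards each vertex knows the identity of its tree via a component ID stored in the hash table. Within each component I designate a canonical root, for instance the smallest-ID vertex, which takes an additional constant number of rounds by an adaptive-read scan over the component IDs.

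Next I would construct an Euler tour of each tree. For every undirected edge $\{u,v\}$ I create two directed half-edges $(u,v)$ and $(v,u)$, and at each vertex $v$ I link these half-edges in a cyclic order induced by sorting $v$'s neighbors by ID: the \emph{successor} of $(u,v)$ is defined to be $(v,w)$, where $w$ is the neighbor of $v$ immediately following $u$ in that cyclic order. Since $F$ is a forest, the total number of half-edges is at most $2(n-1)$, so the successor pointers can be computed by grouping each vertex with its incident half-edges; high-degree vertices are handled by splitting their adjacency list across several machines and resolving the cyclic successor via adaptive reads into the hash table. Cutting the resulting cycle at the root's first outgoing half-edge yields, for each tree, a linear linked list whose traversal order is precisely the Euler tour.

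The core step is then list ranking on this linked list, which in AMPC can be performed in $O(1/\epsilon)$ rounds with $O(n^\epsilon)$ local memory by repeatedly selecting an independent set of a constant fraction of list elements and splicing them out via adaptive pointer lookups, shrinking the list by a constant factor per round until it fits on a single machine. From the ranks I obtain, for every vertex $v$, the positions of its half-edges in the tour; a depth is then the prefix sum of $+1$ for each downward half-edge and $-1$ for each upward one, computed by another list-ranking-style pass. Finally, every undirected edge $\{u,v\}$ is oriented from the shallower endpoint to the deeper one, and each non-root vertex's parent is read off from its first ancestor encountered in the tour; all auxiliary data (successor pointers, ranks, depths) occupy $O(\log n)$ bits per half-edge, yielding $O(n\log n)$ total memory.

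The main obstacle is the list-ranking subroutine: it is precisely the step where AMPC's adaptive-read capability is essential, since without it the pointer-jumping based alternatives need $\Omega(\log n)$ rounds, and it is also what drives the total-memory bound because each round's splicing requires storing constant-factor-many auxiliary pointers per surviving element. Once list ranking is available as a black box in $O(1/\epsilon)$ rounds, the remaining pieces — connectivity, Euler-tour construction, depth computation, and edge orientation — each add only $O(1/\epsilon)$ further rounds, giving the claimed bound.
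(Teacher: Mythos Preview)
The paper does not prove this lemma; it is imported verbatim as Theorem~7 of Behnezhad et al.\ and used as a black box, so there is no in-paper proof to compare against. Your Euler-tour plus list-ranking outline is in fact the standard route behind the cited result, so at the architectural level you are aligned with what that reference does.

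There is, however, a genuine gap in your list-ranking step. You write that one ``select[s] an independent set of a constant fraction of list elements and splic[es] them out via adaptive pointer lookups, shrinking the list by a constant factor per round until it fits on a single machine.'' A constant-factor shrink per round is the classical randomized list-ranking recipe, and it needs $\Theta(\log n)$ rounds, not $O(1/\epsilon)$; even stopping once the list has size $n^\epsilon$ still costs $(1-\epsilon)\log n$ rounds. The $\AMPC$ speedup comes from a different mechanism: because a machine can \emph{adaptively} follow up to $\Theta(n^\epsilon)$ successor pointers within a single round, one contracts the list by a factor of $n^\epsilon$ per round, so $O(1/\epsilon)$ rounds suffice. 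Without invoking this polynomial-factor contraction explicitly, your argument as written does not deliver the stated round bound. Once you replace the independent-set splicing description with the $n^\epsilon$-step pointer-chasing argument, the rest of your reduction (connectivity, Euler-tour construction, prefix-sum depths, orientation) goes through within the claimed memory budgets.
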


Here, w.h.p. means ``with high probability.'' This completes the first step of our algorithm.

\subsection{Meta Tree Construction}\label{sec:metatree}

We also leverage Ghaffari and Nowicki's notion of heavy-light decompositions for our AMPC algorithm, which can be found from lines~\ref{lin:heavylightstart} through~\ref{lin:heavylightend} in Algorithm~\ref{alg:lowdepthdecomp}. This process allows us to quickly decompose the tree into a set of disjoint paths of \textit{heavy edges}, which are defined as follows (note that our definition slightly deviates from Ghaffari and Nowicki~\cite{ghaffari2020massively}, where the heavy edge must extend to the child with the largest subtree without requiring this subtree to be that large, though it is the same as the definition used by Sleator and Tarjan~\cite{sleator1981a}):

\begin{definition}[Sleator and Tarjan~\cite{sleator1981a}]
Given a tree $T$ and a vertex $v\in T$, let $\{u_i\}_{i\in k}$ be the set of children of $v$ where the subtree rooted at $u_1$ is the largest out of all $u_i$. If there is no strictly largest subtree, we arbitrarily choose exactly one of the children with a largest subtree. Then $(u_1,v)$ is a \textbf{heavy edge} and $(u_i,v)$ is a \textbf{light edge} for all $1 < i \leq k$.
\end{definition} 

\begin{figure}
    \centering
    \scalebox{0.57}{\input{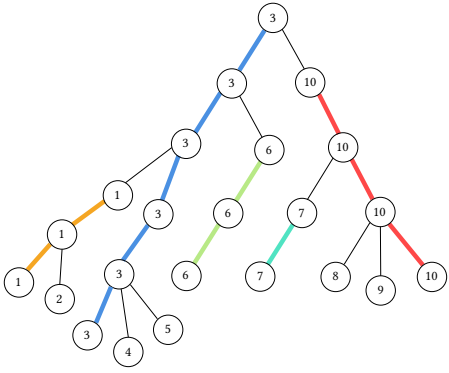}}
    \caption{The heavy-light decomposition of an example tree.}
    \label{fig:heavylight}
\end{figure}

Then the definition of a heavy path follows quite simply.

\begin{definition}[Ghaffari and Nowicki~\cite{ghaffari2020massively}]
Given a tree $T$, a \textbf{heavy path} is a maximal length path consisting only of heavy edges in $T$.
\end{definition}

Ghaffari and Nowicki then make the observation that the number of light edges and heavy paths is highly limited in a tree. This comes from a simple counting argument, where if you consider the path from root $r$ to some vertex $v$, any time you cross a light edge, the size of the current subtree is reduced by at least a factor of 2. This holds even with our different notion of heavy edges since subtrees rooted at children of light edges are still much smaller compared to the subtree rooted at the parent vertex. This bounds the number of light edges between $r$ and $v$, where each pair of light edges are separated by at most one heavy path, and therefore it also bounds the number of heavy paths.

\begin{observation}[Ghaffari and Nowicki~\cite{ghaffari2020massively}]\label{obs:lightheavy}
Consider a tree $T$ oriented towards root $r$. For each vertex $v$, there are only $O(\log n)$ light edges and only $O(\log n)$ heavy paths on the path from $v$ to $r$.
\end{observation}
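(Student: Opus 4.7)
}
The plan is to prove the bound on light edges first and then derive the bound on heavy paths as an immediate consequence. Fix the path $v = v_0, v_1, \ldots, v_\ell = r$ from $v$ to the root in the oriented tree $T$. It will be more convenient to reason in the reverse direction, walking from the root down towards $v$, and to track how the size of the subtree rooted at the current vertex shrinks.

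The key step is the following halving claim: whenever $(v_{i+1}, v_i)$ is a light edge (i.e., we descend from parent $v_{i+1}$ to a non-heavy child $v_i$), we have $|T_{v_i}| \leq |T_{v_{i+1}}|/2$, where $T_u$ denotes the subtree of $T$ rooted at $u$. This follows directly from the definition: since $v_i$ is not the heavy child of $v_{i+1}$, there exists another child $u^\star$ of $v_{i+1}$ whose subtree is at least as large as $T_{v_i}$. Therefore
\[
|T_{v_{i+1}}| \;\geq\; 1 + |T_{v_i}| + |T_{u^\star}| \;\geq\; 1 + 2|T_{v_i}|,
\]
which yields $|T_{v_i}| \leq (|T_{v_{i+1}}| - 1)/2 < |T_{v_{i+1}}|/2$. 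Note that this argument uses only the fact that the heavy child has the (weakly) largest subtree, so it still works under the tie-breaking rule stated in Sleator and Tarjan's definition used in the paper.

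Given the halving claim, bounding the number of light edges is immediate. If there are $k$ light edges on the path from $r$ to $v$, then by iterating the claim starting from $|T_r| = n$ along the descending path, the subtree at the endpoint of the last light edge has size at most $n/2^k$. Since every subtree has size at least $1$, we obtain $n/2^k \geq 1$, i.e., $k \leq \log_2 n$, which is the desired $O(\log n)$ bound on light edges. For the bound on heavy paths, observe that by maximality any two distinct heavy paths intersecting the $v$-to-$r$ path must be separated along it by at least one light edge: otherwise the two heavy paths together with the heavy edges between them would form a single longer heavy path, contradicting maximality. Hence the number of distinct heavy paths met along the $v$-to-$r$ walk is at most one more than the number of light edges, which is $O(\log n)$.

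The only subtle point, and what I expect to be the main thing to be careful about, is the tie-breaking in the definition of a heavy edge: because the paper does not require the heavy child's subtree to be strictly larger than half of the parent's, the halving inequality has to be written as $|T_{v_{i+1}}| \geq 1 + 2|T_{v_i}|$ rather than the more commonly quoted $|T_{v_i}| < |T_{v_{i+1}}|/2$ obtained from a strict majority assumption. Once this is handled, the argument is entirely a counting argument and no further work is needed.
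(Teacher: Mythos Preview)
Your proof is correct and follows exactly the informal counting argument the paper sketches just before stating the observation: each light edge halves the subtree size (so there are at most $\log_2 n$ of them), and since consecutive heavy paths on the $v$-to-$r$ path must be separated by a light edge, the number of heavy paths is bounded by one more than the number of light edges. You have simply filled in the details the paper omits, including the careful handling of tie-breaking in the heavy-edge definition.
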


Using the definition of heavy edge from Sleator and Tarjan~\cite{sleator1981a} instead of from Ghaffari and Nowicki, we get an additional nice property. This is because in our definition, every internal vertex has one descending heavy edge to one child.

\begin{observation}[Sleator and Tarjan~\cite{sleator1981a}]\label{obs:heavypath}
Given a tree $T$ and an internal vertex $v\in T$, $v$ must be on exactly one heavy path. For a leaf $\ell \in T$, $\ell$ must be on at most one heavy path.
\end{observation}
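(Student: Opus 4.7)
The plan is to exploit the fact that under the Sleator-Tarjan definition, each non-leaf vertex $v$ has exactly one descending heavy edge — the one to the uniquely chosen child with the largest subtree — while a leaf has none. Moreover, every non-root vertex has at most one edge to its parent, and this edge is heavy iff the vertex is the heavy child of its parent. Combining these facts, any vertex in $T$ has at most two incident heavy edges (one ascending to its parent, one descending to its heavy child), and any internal vertex has at least one (the descending one).

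For the first claim, let $v$ be an internal vertex. Existence is immediate: since $v$ has a descending heavy edge, the singleton edge already constitutes a heavy path containing $v$, so the maximal heavy path extending it contains $v$. For uniqueness I would argue by contradiction: suppose $v$ lies on two distinct maximal heavy paths $P_1$ and $P_2$. Each $P_i$ is a simple path in $T$, so it uses at most two edges incident to $v$, and those edges must be heavy. Because $v$ has at most two incident heavy edges in total, $P_1$ and $P_2$ together use only these edges at $v$, so their union at $v$ still has degree at most $2$. Hence $P_1 \cup P_2$ is itself a simple path of heavy edges through $v$. If $P_1 \neq P_2$, then $P_1 \cup P_2$ is strictly longer than at least one of them, contradicting the maximality of that path. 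Therefore exactly one heavy path contains $v$.

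For the second claim, let $\ell$ be a leaf. Since $\ell$ has no children, it has no descending heavy edge, so its only possible incident heavy edge is the one to its parent, which exists precisely when $\ell$ is its parent's heavy child. If that edge exists, the same maximality argument as above shows $\ell$ lies on exactly one heavy path (the unique maximal one containing this ascending edge); otherwise $\ell$ has no incident heavy edge and lies on no heavy path. In either case $\ell$ is on at most one heavy path. The main subtlety is the uniqueness step — ensuring that two distinct maximal heavy paths cannot both contain the same vertex — which is handled uniformly by the degree-two bound on heavy edges at any vertex together with maximality.
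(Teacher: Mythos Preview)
Your argument is correct and aligns with the paper's intent; the paper itself does not give a formal proof but only the one-line justification preceding the observation (``every internal vertex has one descending heavy edge to one child''), which is exactly the fact you build on.

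One small tightening is warranted in the uniqueness step. From the degree-$2$ bound \emph{at $v$ alone} you cannot yet conclude that $P_1 \cup P_2$ is a simple path --- you need that bound at every vertex of $P_1 \cup P_2$. But the same reasoning you used for $v$ (at most one heavy edge upward to the parent, at most one downward to the heavy child) applies to every vertex of $T$, so the subgraph of heavy edges has maximum degree $2$ throughout. In a tree, any such subgraph is a vertex-disjoint union of simple paths, and the maximal heavy paths are precisely its connected components; hence each vertex lies in at most one, and each internal vertex (having an incident heavy edge) lies in exactly one. With this adjustment your proof is complete.
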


Our first goal is to compute what we call a \textit{meta tree}. This is a decomposition of our tree that will allow us to  effectively handle heavy edges. It is quite analogous to Ghaffari and Nowicki's notion of the heavy-light decomposition, which partitions the tree into heavy and light edges.

\begin{definition}
Given a tree $T$, the \textbf{meta tree} of $T$, denoted $T_M$, comes from contracting all the heavy paths in $T$. We call the vertices of $T$ \textbf{original vertices} and the vertices of $T_M$ \textbf{meta vertices}.
\end{definition}

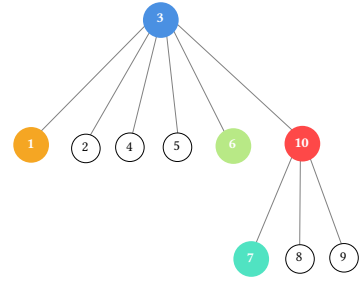
\begin{figure}
    \centering
    \scalebox{0.6}{\tikzset{every picture/.style={line width=0.45pt}} 

\begin{tikzpicture}[x=0.75pt,y=0.75pt,yscale=-1,xscale=1]

\draw [color={rgb, 255:red, 128; green, 128; blue, 128 }  ,draw opacity=1 ]   (387,185.51) -- (350,274.18) ;
\draw [color={rgb, 255:red, 128; green, 128; blue, 128 }  ,draw opacity=1 ][fill={rgb, 255:red, 255; green, 255; blue, 255 }  ,fill opacity=1 ]   (391.33,187.84) -- (390.67,275.18) ;
\draw [color={rgb, 255:red, 128; green, 128; blue, 128 }  ,draw opacity=1 ][fill={rgb, 255:red, 255; green, 255; blue, 255 }  ,fill opacity=1 ]   (397.67,188.18) -- (429,274.18) ;
\draw [color={rgb, 255:red, 128; green, 128; blue, 128 }  ,draw opacity=1 ]   (261.67,80.84) -- (171.33,172.51) ;
\draw [color={rgb, 255:red, 128; green, 128; blue, 128 }  ,draw opacity=1 ]   (267.67,81.84) -- (211.67,179.18) ;
\draw [color={rgb, 255:red, 128; green, 128; blue, 128 }  ,draw opacity=1 ]   (272,84.18) -- (248.33,180.18) ;
\draw [color={rgb, 255:red, 128; green, 128; blue, 128 }  ,draw opacity=1 ]   (278.33,84.51) -- (289,179.51) ;
\draw [color={rgb, 255:red, 128; green, 128; blue, 128 }  ,draw opacity=1 ]   (283.33,83.18) -- (330,176.18) ;
\draw [color={rgb, 255:red, 128; green, 128; blue, 128 }  ,draw opacity=1 ]   (286,78.84) -- (388.67,173.51) ;
\draw  [draw opacity=0][fill={rgb, 255:red, 74; green, 144; blue, 226 }  ,fill opacity=1 ] (259,77) .. controls (259,68.72) and (265.72,62) .. (274,62) .. controls (282.28,62) and (289,68.72) .. (289,77) .. controls (289,85.28) and (282.28,92) .. (274,92) .. controls (265.72,92) and (259,85.28) .. (259,77) -- cycle ;\draw (237,70) node [anchor=north west,text=white][inner sep=0.75pt]   [align=left] {\begin{minipage}[lt]{52.6pt}\setlength\topsep{0pt}
\begin{center}
\textbf{3}
\end{center}

\end{minipage}};
\draw  [draw opacity=0][fill={rgb, 255:red, 245; green, 166; blue, 35 }  ,fill opacity=1 ] (150,182) .. controls (150,173.72) and (156.72,167) .. (165,167) .. controls (173.28,167) and (180,173.72) .. (180,182) .. controls (180,190.28) and (173.28,197) .. (165,197) .. controls (156.72,197) and (150,190.28) .. (150,182) -- cycle ;\draw (128,176) node [anchor=north west, text=white][inner sep=0.75pt]   [align=left] {\begin{minipage}[lt]{52.6pt}\setlength\topsep{0pt}
\begin{center}
\textbf{1}
\end{center}

\end{minipage}};
\draw  [color={rgb, 255:red, 0; green, 0; blue, 0 }  ,draw opacity=1 ][fill={rgb, 255:red, 255; green, 255; blue, 255 }  ,fill opacity=1 ] (199,185) .. controls (199,178.37) and (204.37,173) .. (211,173) .. controls (217.63,173) and (223,178.37) .. (223,185) .. controls (223,191.63) and (217.63,197) .. (211,197) .. controls (204.37,197) and (199,191.63) .. (199,185) -- cycle ;\draw (174,178) node [anchor=north west][inner sep=0.75pt]   [align=left] {\begin{minipage}[lt]{52.6pt}\setlength\topsep{0pt}
\begin{center}
$2$
\end{center}

\end{minipage}};
\draw  [draw opacity=0][fill={rgb, 255:red, 254; green, 72; blue, 72 }  ,fill opacity=1 ] (378,181) .. controls (378,172.72) and (384.72,166) .. (393,166) .. controls (401.28,166) and (408,172.72) .. (408,181) .. controls (408,189.28) and (401.28,196) .. (393,196) .. controls (384.72,196) and (378,189.28) .. (378,181) -- cycle ;\draw (356,175) node [anchor=north west,text=white][inner sep=0.75pt]   [align=left] {\begin{minipage}[lt]{52.6pt}\setlength\topsep{0pt}
\begin{center}
\textbf{10}
\end{center}

\end{minipage}};
\draw  [draw opacity=0][fill={rgb, 255:red, 80; green, 227; blue, 194 }  ,fill opacity=1 ] (335,278) .. controls (335,269.72) and (341.72,263) .. (350,263) .. controls (358.28,263) and (365,269.72) .. (365,278) .. controls (365,286.28) and (358.28,293) .. (350,293) .. controls (341.72,293) and (335,286.28) .. (335,278) -- cycle ;\draw (313,272) node [anchor=north west,text=white][inner sep=0.75pt]   [align=left] {\begin{minipage}[lt]{52.6pt}\setlength\topsep{0pt}
\begin{center}
\textbf{7}
\end{center}

\end{minipage}};
\draw  [draw opacity=0][fill={rgb, 255:red, 184; green, 233; blue, 134 }  ,fill opacity=1 ] (320,183) .. controls (320,174.72) and (326.72,168) .. (335,168) .. controls (343.28,168) and (350,174.72) .. (350,183) .. controls (350,191.28) and (343.28,198) .. (335,198) .. controls (326.72,198) and (320,191.28) .. (320,183) -- cycle ;\draw (298,176) node [anchor=north west,text=white][inner sep=0.75pt]   [align=left] {\begin{minipage}[lt]{52.6pt}\setlength\topsep{0pt}
\begin{center}
\textbf{6}
\end{center}

\end{minipage}};
\draw  [color={rgb, 255:red, 0; green, 0; blue, 0 }  ,draw opacity=1 ][fill={rgb, 255:red, 255; green, 255; blue, 255 }  ,fill opacity=1 ] (236,184) .. controls (236,177.37) and (241.37,172) .. (248,172) .. controls (254.63,172) and (260,177.37) .. (260,184) .. controls (260,190.63) and (254.63,196) .. (248,196) .. controls (241.37,196) and (236,190.63) .. (236,184) -- cycle ;\draw (211,178) node [anchor=north west][inner sep=0.75pt]   [align=left] {\begin{minipage}[lt]{52.6pt}\setlength\topsep{0pt}
\begin{center}
$4$
\end{center}

\end{minipage}};
\draw  [color={rgb, 255:red, 0; green, 0; blue, 0 }  ,draw opacity=1 ][fill={rgb, 255:red, 255; green, 255; blue, 255 }  ,fill opacity=1 ] (276,184) .. controls (276,177.37) and (281.37,172) .. (288,172) .. controls (294.63,172) and (300,177.37) .. (300,184) .. controls (300,190.63) and (294.63,196) .. (288,196) .. controls (281.37,196) and (276,190.63) .. (276,184) -- cycle ;\draw (251,178) node [anchor=north west][inner sep=0.75pt]   [align=left] {\begin{minipage}[lt]{52.6pt}\setlength\topsep{0pt}
\begin{center}
$5$
\end{center}

\end{minipage}};
\draw  [color={rgb, 255:red, 0; green, 0; blue, 0 }  ,draw opacity=1 ][fill={rgb, 255:red, 255; green, 255; blue, 255 }  ,fill opacity=1 ] (379,278) .. controls (379,271.37) and (384.37,266) .. (391,266) .. controls (397.63,266) and (403,271.37) .. (403,278) .. controls (403,284.63) and (397.63,290) .. (391,290) .. controls (384.37,290) and (379,284.63) .. (379,278) -- cycle ;\draw (354,271) node [anchor=north west][inner sep=0.75pt]   [align=left] {\begin{minipage}[lt]{52.6pt}\setlength\topsep{0pt}
\begin{center}
$8$
\end{center}

\end{minipage}};
\draw  [color={rgb, 255:red, 0; green, 0; blue, 0 }  ,draw opacity=1 ][fill={rgb, 255:red, 255; green, 255; blue, 255 }  ,fill opacity=1 ] (416,277) .. controls (416,270.37) and (421.37,265) .. (428,265) .. controls (434.63,265) and (440,270.37) .. (440,277) .. controls (440,283.63) and (434.63,289) .. (428,289) .. controls (421.37,289) and (416,283.63) .. (416,277) -- cycle ;\draw (391,271) node [anchor=north west][inner sep=0.75pt]   [align=left] {\begin{minipage}[lt]{52.6pt}\setlength\topsep{0pt}
\begin{center}
$9$
\end{center}

\end{minipage}};

\end{tikzpicture}}
    \caption{The \emph{meta-tree} of the same tree from Figure~\ref{fig:heavylight} is demonstrated in this figure.}
    \label{fig:metatree}
\end{figure}

Note that contracting all heavy paths simultaneously is valid because, by Observation~\ref{obs:heavypath}, all heavy paths must be disjoint. Additionally, all internal meta vertices are contracted heavy paths (as opposed to original vertices), again by Observation~\ref{obs:heavypath}. We note that in AMPC, since connectivity is easy, it is additionally quite easy to contract the heavy paths of $T$ into single vertices.

\begin{lemma}\label{lem:heavylight}
Given a tree $T$, the meta tree $T_M$ can be computed, rooted, and oriented in AMPC in $O(1/\epsilon)$ rounds with $O(n^\epsilon)$ memory per machine and $O(n\log^2n)$ total space w.h.p.
\end{lemma}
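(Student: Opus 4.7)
The plan is to execute a small pipeline of standard AMPC primitives. Starting from the rooted, oriented $T$ produced by Lemma~\ref{lem:orient}, I would first compute the subtree size $s(v)$ for every $v \in V(T)$. In AMPC this can be done in $O(1/\epsilon)$ rounds with $O(n^\epsilon)$ local memory by building an Euler tour of $T$ (the list-ranking-style computation that Behnezhad et al.\ use for rooting also delivers an Euler tour), and then reading subtree sizes off as differences of tour positions; the Euler tour fits in $O(n)$ total space, so we are well within the $O(n\log^2 n)$ budget.

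Given the $s(v)$ values, classifying each edge as heavy or light is a purely local computation at each vertex: for every internal $v$, adaptively fetch $s(u)$ for each child $u$ and designate the unique child with largest subtree (breaking ties arbitrarily but consistently, e.g.\ by the smallest child ID) as the heavy child; mark $(v,u)$ heavy and all other child edges light. This takes a constant number of AMPC rounds because each vertex only needs to read a single coordinate of data per child from the shared hash table.

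Next I would contract the heavy paths. By Observation~\ref{obs:heavypath} the heavy edges form a disjoint union of paths, so the heavy-path subgraph consists of a set of connected components, each one a heavy path. I would invoke the AMPC connectivity algorithm from Behnezhad et al.~\cite{behnezhad2019massively} (which runs in $O(1/\epsilon)$ rounds with $O(n^\epsilon)$ local memory) on the graph $T_H = (V(T),\{e \in E(T) : e \text{ is heavy}\})$ to obtain a component ID $c(v)$ for each $v$; these component IDs name the meta vertices. I would then build $T_M$ by emitting, for every light edge $(u,v) \in E(T)$, the meta-edge $(c(u),c(v))$, storing the whole edge set in the hash table so that subsequent rounds can access it adaptively. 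Because $T_M$ is a tree on at most $n$ meta vertices with at most $n-1$ meta edges, and we also wish to remember, per meta vertex, the ordered list of original vertices it contains (needed in Section~\ref{sec:metatree} for the binarized paths), the total space is $O(n)$, again well within the $O(n\log^2 n)$ budget. Finally, I re-apply Lemma~\ref{lem:orient} to root and orient $T_M$ in another $O(1/\epsilon)$ rounds.

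The only subtle step is guaranteeing that the heavy-child choice is made consistently across the two endpoints of each candidate edge so that a single well-defined heavy path emerges at each internal vertex; since the choice is made at the parent after reading all child subtree sizes, this is in fact automatic once we fix a deterministic tie-breaking rule. Summing the round complexities of the three AMPC primitives (Euler tour / subtree sizes, connectivity, and re-rooting) gives $O(1/\epsilon)$ rounds overall, and the space bounds never exceed what is claimed, so Lemma~\ref{lem:heavylight} follows with high probability inherited from the Behnezhad et al.\ subroutines.
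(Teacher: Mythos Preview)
Your proposal is correct and mirrors the paper's approach almost exactly: compute subtree sizes via Behnezhad et al., mark heavy/light edges, run AMPC connectivity on the heavy-edge forest to contract paths into meta vertices, reattach the light edges as meta-edges, and re-root with Lemma~\ref{lem:orient}. One small caveat: identifying the heavy child is not quite ``a constant number of AMPC rounds'' when a vertex has degree $\omega(n^\epsilon)$---the paper distributes the children across machines and aggregates the maximum in $O(1/\epsilon)$ rounds---but this does not affect your overall $O(1/\epsilon)$ bound.
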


This completes the second step of our decomposition algorithm.

\subsection{Expanding Meta Vertices}

In order to label the vertices, we need a way to handle the heavy paths corresponding to each meta vertex. Let $v\in T_M$ be a meta vertex, and $P_v$ be the heavy path of original vertices in $T$ corresponding to $v$. Note that we have no stronger bound on the length of a heavy path than $O(n)$. Therefore, a recursive partitioning, or labeling of vertices that has polylogarithmic depth must be able to cleverly divide heavy paths. We can do this with a new data structure.

\begin{definition}
Given some path $P$, a \textbf{binarized path} is an almost complete binary tree $T$ with $|P|$ leaves where there is a one-to-one mapping between $P$ and the leaves of $T$ such that the pre-order traversal of $P$ and $T$ limited to its leaves agree.
\end{definition}

By ``agree'', we mean that if a vertex $v\in P$ comes before a vertex $u\in P$ in the pre-order traversal of $P$, then it also does in the pre-order traversal of $T$. To characterize this tree, we make a quick observation: 

\begin{observation}\label{obs:binarytree}
An almost complete binary tree on $n$ leaves has $2n-1$ vertices, $\lfloor \log_2n\rfloor + 1$ max depth, and every layer is full except the last, which has $2n-2^{\lfloor\log_2 n\rfloor+1}$ vertices.
\end{observation}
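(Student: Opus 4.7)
The plan is to verify the three claims in Observation~\ref{obs:binarytree} by direct structural counting, exploiting the defining property of an almost complete binary tree: every internal vertex has exactly two children, and every level is full except possibly the bottommost, which fills in from the left.

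First, for the vertex count, the plan is to use the standard leaf--internal correspondence in a full binary tree. A one-line induction on $n$ (each ``split'' of a leaf into two leaves adds one leaf and one internal vertex) shows that the number of internal vertices equals $n-1$, and hence the total vertex count is $n + (n-1) = 2n - 1$.

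Next, for the maximum depth $d = \lfloor \log_2 n\rfloor + 1$, the plan is to observe that level $i$ of any binary tree contains at most $2^i$ vertices when the root is placed at level $0$. Since the almost complete layout packs the levels greedily from the top, $d$ is forced to be the minimum integer such that the top $d$ levels can accommodate $n$ leaves, and a direct case analysis (on whether $n$ is a power of two or strictly between two consecutive powers) yields the claimed expression $\lfloor\log_2 n\rfloor + 1$.

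Finally, for the last-layer count, the plan is to combine the two previous parts. Let $x$ denote the number of vertices at level $d$; since levels $0, 1, \ldots, d-1$ are full, they contain $\sum_{i=0}^{d-1} 2^i = 2^d - 1$ vertices. Equating the total vertex count to $2n-1$ yields $x = 2n - 2^d = 2n - 2^{\lfloor \log_2 n\rfloor + 1}$, as claimed. The main subtlety throughout is aligning the depth convention so that all three formulas hold simultaneously, and in particular so that the expression for the last-layer count correctly evaluates to $0$ when $n$ is a power of two and the bottommost level is empty.
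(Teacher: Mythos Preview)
Your argument is correct; the paper does not actually prove Observation~\ref{obs:binarytree} at all, treating it as a standard fact about heap-indexed binary trees and moving on. Your direct counting argument (leaf/internal-vertex balance for $2n-1$, level capacities for the depth, and subtraction for the last layer) is exactly the routine verification one would supply if asked, and your remark about the power-of-two edge case is the right way to reconcile the stated max-depth formula $\lfloor\log_2 n\rfloor+1$ with the last-layer count vanishing there.
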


Additionally, we can find a relationship between the ancestry of triplets in $P$ based off of the order of the three vertices. While this is not required for expanding meta vertices, it is a property of the binarized path that will be useful when we label vertices later.

\begin{observation}\label{obs:ancestors}
Given a binarized path $T$ of a path $P$, for any $u,u',u''\in P$ that appear in that order (or reversed), if $v$ is the lowest common ancestor of $u$ and $u'$ and $v'$ is the lowest common ancestor of $u$ and $u''$, then $v'$ is an ancestor of $v$ or $v'=v$.
\end{observation}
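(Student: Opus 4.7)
The plan is to prove the observation by exploiting the defining property of the binarized path: the leaves of any subtree form a contiguous interval in the pre-order leaf ordering, which (by construction) coincides with the order of vertices along $P$. So the first step I would take is to record this interval lemma: for any internal node $w$ of $T$, the leaves descended from $w$ occupy a contiguous block in the pre-order listing of the leaves, and moreover the leaves of $w$'s left subtree all precede those of its right subtree. Both facts follow immediately from the recursive definition of pre-order traversal, combined with the ``agree'' condition in the definition of the binarized path.

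Assuming $u, u', u''$ appear in that order along $P$ (the reversed case is symmetric and I would handle it with left/right swapped), I would set $v = \mathrm{LCA}(u,u')$ and argue by a case split on whether $u''$ lies in the subtree rooted at $v$. Since $v = \mathrm{LCA}(u,u')$, the leaves $u$ and $u'$ lie in different subtrees of $v$, and because $u$ precedes $u'$ in pre-order, $u$ lies in $v$'s left subtree and $u'$ in $v$'s right subtree.

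In the first case, suppose $u''$ is a descendant of $v$. Since the leaves in $v$'s subtree form a contiguous block and $u''$ comes after $u'$ in the pre-order, $u''$ must also lie in $v$'s right subtree. Then $u$ (left) and $u''$ (right) meet exactly at $v$, so $v' = \mathrm{LCA}(u,u'') = v$, as required. In the second case, suppose $u''$ is not a descendant of $v$. Then $v' = \mathrm{LCA}(u,u'')$ must be an ancestor of both $u$ and $u''$; since every ancestor of $u$ is comparable to $v$ in the tree and $v'$ has $u''$ outside $v$'s subtree, $v'$ cannot equal or lie strictly below $v$, so $v'$ must be a strict ancestor of $v$. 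Combining the two cases gives the desired conclusion.

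The only mildly subtle step is the interval property of the pre-order leaf ordering under the binarized-path construction, since nothing in the paper's notation directly asserts it; I would spell that out from the definition before invoking it. Beyond that, the argument is pure case analysis on where $u''$ falls relative to $v$'s subtree, and the ``reversed order'' clause of the statement follows by the same proof with the left/right roles of $v$'s children exchanged.
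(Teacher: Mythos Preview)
Your proof is correct and rests on the same key fact as the paper's: the leaves of any subtree of $T$ form a contiguous block of $P$ in the pre-order ordering. The paper packages this via an upward traversal from $u$ that grows a contiguous leaf set $L_p$ and notes that $u'$ must enter $L_p$ no later than $u''$, whereas you state the interval property explicitly and case-split on whether $u''$ lies in $v$'s subtree; the two arguments are essentially the same.
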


To create the tree, we do the following for every $v\in T_M$:
\begin{enumerate}
\item Create an almost complete binary tree $T_v$ with $|P_v|$ leaves, linking children to parents and noting if a vertex is a left or right child [lines~\ref{lin:bintreestart} and~\ref{lin:bintreeend}].
\item Do a pre-order traversal of $T_v$ and $P_v$ and map the vertices in $P_v$ to the leaves of $T_v$ such that the pre-order traversal of $P_v$ and of $T_v$ limited to its leaves agree. [lines~\ref{lin:mapleavesstart} to~\ref{lin:mapleavesend}].
\end{enumerate}

Next, it is pretty direct to see that the produced tree is a binarized path.

\begin{observation}\label{obs:binpathcorrect}
The process described above produces a binarized path $T_v$ of $P_v$ for all $v$.
\end{observation}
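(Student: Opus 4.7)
My plan is to verify the two defining properties of a binarized path separately: first, that $T_v$ constructed in step 1 is an almost complete binary tree with $|P_v|$ leaves; and second, that step 2 yields a valid bijection between $P_v$ and the leaves of $T_v$ under which the two pre-order traversals agree.

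For the first property, I would unpack the construction in lines~\ref{lin:bintreestart}--\ref{lin:bintreeend} of Algorithm~\ref{alg:lowdepthdecomp}. The vertex set has size $2|P_v|-1$, and the parent relation $i_{p_u}=\lfloor i_u/2\rfloor$ is exactly the standard binary-heap indexing. A short induction on $i_u$ shows that this indexing produces the unique almost complete binary tree on $2|P_v|-1$ nodes: every index $i \le |P_v|-1$ has two children $2i,2i+1$, and every index $i \ge |P_v|$ has no children and is therefore a leaf. Counting, there are exactly $|P_v|$ leaves, which matches Observation~\ref{obs:binarytree}.

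For the second property, the argument is essentially by construction. In lines~\ref{lin:mapleavesstart}--\ref{lin:mapleavesend} we pre-order traverse $P_v$ to obtain an ordering of its vertices, pre-order traverse $T_v$ to obtain an ordered list $L$ of its leaves, and then map the $i$-th vertex of $P_v$ to $L[i]$. This is a bijection because both lists have length $|P_v|$ (by the first property) and both enumerations visit each element exactly once. To see that the two pre-order traversals agree, fix any $u,u'\in P_v$ with $u$ appearing before $u'$ in the pre-order traversal of $P_v$; by the mapping rule, $u$ is assigned to a leaf with smaller list-index in $L$ than $u'$, and since $L$ is itself built by pre-order traversing $T_v$, $u$'s image precedes $u'$'s image in the pre-order traversal of $T_v$ restricted to its leaves.

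The only subtle point, and the one I would present most carefully, is verifying that the heap-indexed construction actually realizes an almost complete binary tree with the required leaf count; once that is in hand, the mapping step of the proof reduces to the tautology that sorting two sequences by the same criterion and aligning by index preserves the relative order. Thus both conditions of Definition of a binarized path hold, and $T_v$ is a binarized path of $P_v$, which completes the proof.
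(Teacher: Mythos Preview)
Your proposal is correct. The paper does not actually supply a proof of this observation: it merarks that it is ``pretty direct to see that the produced tree is a binarized path'' and later, in the proof of Lemma~\ref{lem:expandverts}, simply cites the observation for correctness. Your argument fills in exactly the details the paper leaves implicit, verifying separately (i) that the heap-indexed construction on $2|P_v|-1$ nodes yields an almost complete binary tree with $|P_v|$ leaves, and (ii) that aligning the two pre-order enumerations by index gives an order-preserving bijection. Both steps are sound; in particular your leaf count via the threshold $i\le |P_v|-1$ versus $i\ge |P_v|$ is the standard heap argument and matches Observation~\ref{obs:binarytree}. So your write-up is more explicit than the paper's, but there is no methodological difference to speak of.
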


We prove that this can be done in the proper constraints.

\begin{lemma}\label{lem:expandverts}
The heavy paths of a tree can be converted into binarized paths in $O(1/\epsilon)$ AMPC rounds with $O(n^\epsilon)$ local memory and $O(n\log n)$ total space w.h.p.
\end{lemma}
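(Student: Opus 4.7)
The plan is to decompose the lemma into two phases that correspond to the two stages of the inner loop of Algorithm~\ref{alg:lowdepthdecomp}: (i) constructing the binary tree skeleton $T_v$ for each meta vertex $v$ (lines~\ref{lin:bintreestart}--\ref{lin:bintreeend}), and (ii) computing the pre-order bijection between $P_v$ and the leaves of $T_v$ (lines~\ref{lin:mapleavesstart}--\ref{lin:mapleavesend}). Both phases reduce to standard AMPC primitives, and Observation~\ref{obs:binpathcorrect} then gives correctness.

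For (i), observe that $T_v$ is defined \emph{implicitly} by its indices: vertex $i_u$ has parent $\lfloor i_u/2\rfloor$, so once each meta vertex $v$ knows $|P_v|$ and a base offset for a contiguous index range of length $2|P_v|-1$, the tree structure is determined with no inter-machine communication. The sizes $|P_v|$ are available from the heavy-light contraction of Section~\ref{sec:metatree}, and a single parallel prefix-sum over the meta vertices---known to run in $O(1/\epsilon)$ AMPC rounds with $O(n^\epsilon)$ local memory---assigns each $v$ a disjoint block of indices inside a global range of length at most $2n$. Each machine then writes its allocated parent pointers locally.

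For (ii), I would first compute a pre-order numbering of the original tree $T$ via the standard Euler-tour-plus-list-ranking approach; these primitives are implementable in $O(1/\epsilon)$ AMPC rounds with $O(n^\epsilon)$ local memory and $\widetilde{O}(n)$ total memory (see Behnezhad et al.~\cite{behnezhad2019massively}). The pre-order rank of each original vertex induces an ordering on every $P_v$, which is extracted by a single parallel sort in $O(1/\epsilon)$ AMPC rounds. On the $T_v$ side, since $T_v$ is an almost complete binary tree with explicit level-order indexing, the pre-order rank of each leaf can be computed \emph{locally} in $O(\log|P_v|)=O(\log n)$ time by walking its index toward the root via $\lfloor\cdot/2\rfloor$ and summing contributions from left subtrees, with no communication whatsoever. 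Finally, mapping the $i$-th element of the sorted $P_v$ to the $i$-th pre-order leaf of $T_v$ is one further parallel write round to the shared hash table.

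Memory accounting is straightforward: because heavy paths partition the original vertices (Observation~\ref{obs:heavypath}), $\sum_v (2|P_v|-1)\le 2n$, so the aggregated binarized-path data structures fit in $O(n)$ space, comfortably within the $O(n\log n)$ total budget even after overheads from sorting and list-ranking. The main obstacle is not correctness---which follows immediately from Observation~\ref{obs:binpathcorrect} once the mapping is in place---but rather verifying that the composition of Euler tour, list-ranking, sorting, and prefix-sums still fits inside $O(1/\epsilon)$ rounds with $O(n^\epsilon)$ local memory; since each of these is an established AMPC primitive achieving those bounds individually, composing a constant number of them preserves the complexity and yields the lemma.
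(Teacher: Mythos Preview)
Your proposal is correct and follows essentially the same approach as the paper: both build each $T_v$ implicitly via level-order indices with parent $\lfloor i/2\rfloor$, invoke the pre-order numbering primitive of Behnezhad et al.~\cite{behnezhad2019massively} to order $P_v$, and then perform a direct rank-to-rank map to the leaves, appealing to Observation~\ref{obs:binpathcorrect} for correctness. Your additional details (the prefix-sum to allocate disjoint index blocks and the local computation of leaf pre-order ranks in $T_v$ by walking the index toward the root) are reasonable elaborations that the paper leaves implicit, but they do not change the underlying argument.
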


\subsection{Labeling Vertices}

Our next goal is to label the vertices with the level they should be split on. Consider, hypothetically, expanding the meta tree $T_M$ such that every heavy path for a meta vertex $v$ is replaced with its binarized path (which is an almost complete binary tree) $T_v$, and the tree continues at the leaves corresponding to the nodes in the heavy path. Note that only some vertices in the hypothetical tree correspond to vertices in the original tree $T$. Specifically, the internal nodes of each component subtree $T_v$ are not vertices in $T$, but the leaves correspond exactly to the vertices in $T$.

Ultimately, for a vertex $u\in T$ in meta vertex $v$, let $u'$ be the vertex in $T_{v}$ such that $u$ is the leftmost leaf-descendant of the right child of $u'$ in $T_{u_M}$ (or if this doesn't exist, $u'=u$). Then we will label $\ell(u) = d(u')$ where $d$ is the depth in the expanded meta tree. Following this, our vertex labeling process will be as follows for each $v\in T_M$ and $u\in T_v$:
\begin{enumerate}
\item Vertex $u$ finds the path $P^u$ from $u$ to the root of $T_M$, assuming the meta vertices are expanded [line~\ref{lin:findpath}].
\item Let $u'$ be the highest vertex in $T_v$ such that $u$ is the leftmost descendant of the right child of $u$. If there is no such vertex, let $u'=u$ [line~\ref{lin:findleftright}].
\item Label $u$ with the depth (assuming roots have depth 1) of $u'$ in the expanded $T_M$ [line~\ref{lin:label}].
\end{enumerate}

We start by making a quick observation that comes directly from Observations~\ref{obs:lightheavy} and~\ref{obs:binarytree}.

\begin{observation}\label{obs:depth}
The max depth of $T_M$ with meta nodes expanded (``the expanded $T_M$'') into binary trees is $O(\log^2n)$.
\end{observation}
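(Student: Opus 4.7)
My plan is to bound the depth of any root-to-vertex path in the expanded $T_M$ by decomposing it according to the heavy-light structure and the sizes of the binarized paths.

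First, I would fix an arbitrary vertex $u$ in the expanded $T_M$ and consider the unique path $\pi$ from $u$ up to the root. I would classify the edges on $\pi$ into two categories: (i) edges that lie within some binarized path $T_v$ (the tree that replaced the heavy path $P_v$ of a meta vertex $v$), and (ii) edges that connect the top of one binarized path to a leaf of another (these correspond to the original light edges of $T$, which are exactly the edges of the meta tree $T_M$ between distinct meta vertices). The length of $\pi$ is the sum of the contributions from both categories.

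Next, I would apply Observation~\ref{obs:lightheavy}, which guarantees that the path from any vertex to the root in the original oriented tree $T$ crosses $O(\log n)$ light edges and passes through $O(\log n)$ heavy paths. Translated to the expanded $T_M$, this means $\pi$ contains $O(\log n)$ type-(ii) edges and enters at most $O(\log n)$ distinct binarized paths $T_v$. Within each such $T_v$, $\pi$ contributes at most the depth of $T_v$; since $T_v$ is an almost complete binary tree on $|P_v| \le n$ leaves, Observation~\ref{obs:binarytree} bounds this depth by $\lfloor \log_2 |P_v| \rfloor + 1 = O(\log n)$.

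Combining these bounds, the total length of $\pi$ is at most $O(\log n)$ binarized paths times $O(\log n)$ depth each, plus $O(\log n)$ light edges, which yields $O(\log^2 n) + O(\log n) = O(\log^2 n)$. Since $u$ was arbitrary, the max depth of the expanded $T_M$ is $O(\log^2 n)$. I do not anticipate any real obstacle here; the argument is essentially a product of the two cited observations, and the only minor subtlety is verifying that entering a new binarized path along $\pi$ always corresponds to crossing a light edge of the original tree, which follows directly from the definition of $T_M$ and the disjointness of heavy paths (Observation~\ref{obs:heavypath}).
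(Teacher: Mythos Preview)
Your proposal is correct and is exactly the argument the paper has in mind: the observation is stated as coming ``directly from Observations~\ref{obs:lightheavy} and~\ref{obs:binarytree},'' and you have simply written out that product argument in full. There is nothing to add.
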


This will be greatly helpful in showing the efficiency of our algorithm. We now show that this final part can be implemented efficiently, which is sufficient to prove our main lemma.

\begin{lemma}\label{lem:decompcorrectness}
The process described above finds a generalized low depth tree decomposition of original tree $T$ of height $h\in O(\log^2 n)$ in 1 round with $O(n^\epsilon)$ local memory and $O(n\log^2n)$ total space.
\end{lemma}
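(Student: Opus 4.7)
My plan is to verify three properties: that the labeling is a valid generalized low depth tree decomposition, that its height is $O(\log^2 n)$, and that the labeling step runs in one AMPC round within the stated memory bounds.

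For correctness, I would argue by contradiction: suppose two distinct vertices $u_1, u_2 \in T$ both receive label $i$ and share a connected component of $T^i$. Their assigned binarized-path ancestors $u_1' \neq u_2'$ both lie at depth $i$ in the expanded meta tree, so their lowest common ancestor $z$ has depth $d(z) < i$. The plan is to exhibit a vertex of label strictly less than $i$ on the unique path in $T$ between $u_1$ and $u_2$. If $z$ is itself an original vertex, i.e., a leaf of some binarized path $T_{v_z}$, then the descents from $z$ to $u_1'$ and to $u_2'$ in the expanded meta tree leave $z$ through distinct children, which correspond to disjoint light subtrees of $z$ in $T$; hence the $T$-path between $u_1$ and $u_2$ must pass through $z$. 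Since $z$'s assigned vertex $z'$ is either $z$ itself (when $z$ is the leftmost leaf of $T_{v_z}$) or a strict ancestor of $z$ in the binarized path, $\ell(z) = d(z') \leq d(z) < i$, giving the contradiction.

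If instead $z$ is internal to some binarized path $T_{v_z}$, I use its assigned original vertex $f(z)$---the leftmost leaf-descendant of $z$'s right child---for which by construction $\ell(f(z)) = d(z) < i$. The descents from $z$ enter the left and right subtrees of $z$ in $T_{v_z}$ respectively and each eventually exits $T_{v_z}$ at some original vertex of $P_{v_z}$; I would show that these two exit positions strictly sandwich the position of $f(z)$ along $P_{v_z}$, so that the $T$-path from $u_1$ to $u_2$ traverses this stretch of the heavy path and passes through $f(z)$. A small sub-argument is needed to rule out the right-side exit coinciding with $f(z)$ itself, which would force $u_2'$ to equal $z$; this is excluded because $u_2'$ is at depth $i$ while $z$ is at depth $d(z) < i$. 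The height bound $h \in O(\log^2 n)$ then follows directly from Observation~\ref{obs:depth}, since every label is the depth of some vertex in the expanded meta tree.

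For the AMPC implementation, each machine processes one vertex $u$. Because $T_{v_u}$ is a near-complete binary tree whose leaf pre-order matches the positional order on $P_{v_u}$, the assigned ancestor $u'$ and its depth inside $T_{v_u}$ are both determined by $u$'s leaf-index via simple index arithmetic (identifying the deepest right-turn on the root-to-$u$ path). To convert $u'$'s depth in $T_{v_u}$ into its depth in the expanded meta tree, we add the depth of the root of $T_{v_u}$, a per-meta-vertex quantity obtainable by adaptively chasing the $O(\log^2 n)$-length chain of meta-ancestors in $T_M$---which fits in $O(n^\epsilon)$ local memory for any constant $\epsilon > 0$---all within a single AMPC round. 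Aggregating over all $n$ vertices yields $O(n\log^2 n)$ total memory. The main technical obstacle is the geometric ordering step in the internal-$z$ case, which hinges on the matching of the leaf pre-order of the binary tree $T_{v_z}$ with the heavy-path order of $P_{v_z}$: leaves of $z$'s left (resp.\ right) subtree correspond exactly to positions strictly before (resp.\ at or after) the position of $f(z)$, and the non-degeneracy follows from the uniqueness of the mapping $u' \mapsto f(u')$ between internal vertices of each binarized path and its non-leftmost leaves.
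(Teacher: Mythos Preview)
Your argument is correct and takes a genuinely different route from the paper. The paper proves uniqueness of the level-$i$ vertex in each component of $T^i$ by induction on $i$: it first uses the inductive hypothesis to show that the external neighborhood $N(C)$ of any component $C\subseteq T^i$ contains a unique vertex of maximum level $j<i$, and then carries out a three-case analysis (child across a light edge, parent across a light edge, or neighbor on the same heavy path) on this boundary vertex to locate the unique minimum-label vertex inside $C$. You instead give a direct contradiction: two putative level-$i$ vertices $u_1,u_2$ in the same component of $T^i$ yield distinct depth-$i$ vertices $u_1',u_2'$ in the expanded meta tree whose lowest common ancestor $z$ has depth strictly less than $i$, and either $z$ itself (if it is a leaf of its binarized path) or the canonical leaf $f(z)$ (if $z$ is internal) is an original vertex of label $d(z)<i$ lying on the $T$-path between $u_1$ and $u_2$. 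This is cleaner and avoids the induction entirely; the paper's approach develops more structural control over how components sit relative to heavy paths and the meta tree (which resurfaces, e.g., in Lemma~\ref{lem:two-edges-decom}), but that extra structure is not needed for the present statement.

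One small remark: your sub-argument that the right-side exit coinciding with $f(z)$ ``would force $u_2'=z$'' is not correct as stated---$u_2'$ can perfectly well lie in a binarized path strictly below $f(z)$ in the expanded meta tree. But the strict sandwich you are aiming for is unnecessary: since the left exit has position strictly less than $\text{pos}(f(z))$ and the right exit has position at least $\text{pos}(f(z))$, the heavy-path segment between the two exits already contains $f(z)$, so the $T$-path from $u_1$ to $u_2$ passes through $f(z)$ regardless, and the contradiction goes through. (The only degenerate case is $u_2=f(z)$ itself, which is ruled out directly since then $\ell(u_2)=d(z)<i$.)
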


\section{Calculating the smallest singleton cut
}\label{sec:singleton}

In this section, we show a $O(1/\epsilon)$ round AMPC algorithm that executes a series of contractions and outputs the size of the smallest singleton cut that appeared during the contraction process. That is we prove the following result.

\begin{theorem}\label{thm:smallest-singleton}
There exists an AMPC algorithm that given a graph $G$ with unique weights on edges calculates the minimum singleton cut that appears during the contraction process in $O(1/\epsilon)$ rounds using $O(n^{\epsilon})$ local memory and $O((n + m) \log^2{n})$ total space.
\end{theorem}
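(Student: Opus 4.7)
The plan is to reduce finding the smallest singleton cut to independent per-vertex computations at each level of a generalized low-depth decomposition of the minimum spanning tree, all processed in parallel.

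First I would compute a minimum spanning tree $T$ of the weighted input graph $G$ via a standard AMPC MST routine in $O(1/\epsilon)$ rounds. Because the edge weights are unique, the contraction process is equivalent to contracting the edges of $T$ one at a time in weight order, so every singleton cut $S$ that ever appears corresponds to a connected subtree of $T$ formed by the edges on one side of a weight threshold; the quantity we need for each such $S$ is $\delta_G(S)$, its cut in $G$. Hence it suffices to enumerate this family of subtrees (indexed by MST edges) and report $\min_S \delta_G(S)$. I would then invoke Algorithm~\ref{alg:lowdepthdecomp} (Lemma~\ref{lem:treedecomp}) to obtain a labelling $\ell: V(T)\to[h]$ of height $h\in O(\log^2 n)$. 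The structural key is that every such $S$ has a unique minimum-label vertex $v_S$: the vertices of $S$ lie in one connected component of $T^{\ell(v_S)}$, which by Definition~\ref{def:lowdepth} contains exactly one vertex of that label. This lets us group the subtrees by their top vertex $v_S$ and process all top vertices of each label in parallel.

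For a fixed label $i$ and a vertex $v$ with $\ell(v)=i$, the subtrees with $v_S=v$ form a nested, weight-parametrised family confined to $v$'s connected component in $T^i$. I would replicate each such component, together with the non-tree edges of $G$ incident to it, onto a fresh group of machines, paying a factor of $h\in O(\log^2 n)$ in total memory and exactly matching the claimed $O((n+m)\log^2 n)$ budget. Inside the component, the MST edges are sorted by weight and scanned in weight order; as each edge is added to $v$'s growing subtree, the running value of $\delta_G(S)$ changes in an associative way --- non-tree edges of $G$ whose newly-absorbed endpoint crosses the boundary are added, and edges whose other endpoint is now inside are removed. Running an AMPC prefix-sum/scan over this sequence in $O(1/\epsilon)$ rounds while recording the minimum along the way gives the smallest cut with top vertex $v$; taking the overall minimum over all $(i,v)$ pairs yields the theorem.

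The main obstacle is keeping each per-$(i,v)$ computation within the $O(n^\epsilon)$ per-machine memory, since a single component may be much larger than $n^\epsilon$ and the number of non-tree edges incident to the growing subtree may be as large as $\Theta(n+m)$. The resolution is to carry out the weight-ordered scan recursively in $O(1/\epsilon)$ AMPC layers, using the model's adaptive reads to chunk the edge sequence into $O(n^\epsilon)$-sized blocks, compute partial cut increments locally, and combine them into prefix sums --- the standard AMPC scan primitive. Aggregating the local minima produced by these scans across all levels and all top vertices yields the smallest singleton cut in $O(1/\epsilon)$ total rounds, completing the proof.
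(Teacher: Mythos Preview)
Your high-level plan matches the paper's: compute the MST, build the generalized low-depth decomposition, assign each singleton cut $S$ to its unique minimum-label vertex $v_S$, replicate the data once per level (paying the $O(\log^2 n)$ factor in total space), and reduce the per-level problem to a prefix-minimum computation. That part is fine.

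The gap is in the sentence ``as each edge is added to $v$'s growing subtree, the running value of $\delta_G(S)$ changes in an associative way \ldots\ compute partial cut increments locally.'' The increment caused by absorbing a vertex $u$ at time $t$ depends on which of $u$'s $G$-neighbours are already inside the bag at time $t$; that information is \emph{not} local to the block of MST edges containing $t$, so a generic scan/prefix-sum over the MST-edge sequence does not produce the increments. Equivalently, to turn the problem into a prefix-sum you must first compute, for every vertex $x$ in $v$'s component, its join time $\mathrm{join}(x)$ --- the maximum weight on the tree path from $x$ to $v$ --- and then convert each $G$-edge $(x,y)$ into the interval $[\min(\mathrm{join}(x),\mathrm{join}(y)),\,\max(\mathrm{join}(x),\mathrm{join}(y))]$. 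Only after that does the problem become the ``minimum number of overlapping intervals'' problem that a sort plus prefix-sum solves. You never say how to obtain $\mathrm{join}(x)$ in $O(1/\epsilon)$ rounds, and adaptive reads alone do not suffice because the $x$--$v$ path can have length $\Theta(n)$.

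The paper fills exactly this gap: it roots each component of $T^i$ at its level-$i$ vertex, builds a heavy-light decomposition with an RMQ structure on each heavy path (their Theorem~\ref{thm:heavy-light-rmq}), and then answers each path-max query with $O(\log n)$ adaptive reads. This yields, for every $G$-edge, the time interval in which it belongs to $\nbag(v,\cdot)$ (their Lemma~\ref{lem:intervals-computation}); the minimum bag degree is then obtained by sorting interval endpoints and taking a minimum prefix sum (Lemma~\ref{lem:intervals-imp}). Your proposal would be complete once you insert this path-query mechanism between ``replicate the component'' and ``run the scan.''
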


\subsection{Contraction process}
We view the contraction process of a weighted graph $G = (V, E, w : E \rightarrow [n^3])$ as a sequential process in which we iterate over multiple timesteps $0$ to $n^3$. For a given time $i$, we contract the edge $e$ having $w(e) = i$ to a single vertex. Let $G_{0}, \ldots, G_{n^3}$ be the sequence of graphs created in the process, where $G_{0}$ denotes the graph before any contraction and $G_{n^3}$ denotes the graph after all contractions. Via a quick comparison to Kruskal's algorithm, it is clear that the edges whose contraction changed the topology of the graph must belong to the minimum spanning tree of the weighted graph $G$ (since weights are unique, the MST is unique as well). Let $T = (V, E_{T}, w : E_{T} \rightarrow [n^3])$ be the minimum spanning tree of $G$. 

From the previous observation, it is enough to consider only contracting edges from tree $T$, which we will focus on in the rest of this section. It will also be convenient visualize vertices as simply being grouped instead of fully contracted. 

\begin{definition}\label{def:bag}
A \textbf{bag} of vertex $v$ at \textit{time} $t \in [n^3]$, which we denote $\bag(v, t)$, is the set of vertices that can be reached from $v$ using only edges of tree $T$ of weight at most $t$. We denote $\nbag(v,t)$ for set of \textbf{neighbors of a bag}, that is set of these vertices $u$ that do not belong to the bag and there exists an edge connecting $u$ and any vertex of the bag of weight greater than $t$. The \textbf{degree of a bag}, denoted $\Delta\bag(v, t)$, is the size of the set $\nbag(v, t)$.
\end{definition}

If we proceed with our edge contraction process, where an edge with weight $t$ is contracted at time $t$, then $\bag(v,t)$ is the set of all vertices that have been contracted with $v$ at time $t$.
The value $\Delta \bag(v,t)$ is simply the degree of the vertex that corresponds to contracted vertices. Therefore, the following simple observations holds.

\begin{observation}\label{obs:bags-eq-contr}
The value of the minimum singleton cut in the contraction process of the weighted graph $G$ is equal to the following:
$$\min_{v \in V, t \in [n^3]} \Delta \bag(v, t).$$
\end{observation}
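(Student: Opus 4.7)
The plan is to establish a bijective correspondence between singleton cuts appearing in the contracted graphs $G_0, G_1, \ldots, G_{n^3}$ and the bag-degrees $\Delta\bag(v,t)$ for $v \in V$ and $t \in [n^3]$. The key structural fact, which follows from the Kruskal-style observation made just before the statement (that only MST-edge contractions alter the topology of the evolving graph), is that two original vertices $u$ and $v$ are merged into the same super-vertex of $G_t$ if and only if they are connected in the MST $T$ using only edges of weight at most $t$. By Definition~\ref{def:bag}, this is precisely the condition $u \in \bag(v,t)$, so the super-vertices of $G_t$ are in one-to-one correspondence with the distinct bags at time $t$.

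With this identification in hand, I would fix a time $t$ and a super-vertex $S$ of $G_t$, writing $S = \bag(v,t)$ for some $v \in V$. The size $\delta(S)$ of the singleton cut determined by $S$ in $G_t$ equals the number of super-vertices of $G_t$ adjacent to $S$, which equals the number of super-vertices joined to $S$ by at least one original edge of weight strictly greater than $t$; indeed, any edge of weight at most $t$ has both endpoints inside the bag and is therefore internal to $S$. By the definition of $\nbag(v,t)$, this quantity is exactly $|\nbag(v,t)| = \Delta\bag(v,t)$. Conversely, for every choice of $v$ and $t$, the bag $\bag(v,t)$ appears as a super-vertex of $G_t$ whose singleton cut has size $\Delta\bag(v,t)$.

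Taking the minimum over all $v \in V$ and $t \in [n^3]$ on both sides of this correspondence yields the claimed equality. The main (and only) subtlety to check is that $\Delta\bag$ is defined as $|\nbag(v,t)|$, i.e., a count of distinct neighboring vertices rather than a sum over parallel edges; the argument therefore uses consistently the convention that multiple edges between two bags collapse into a single adjacency in $G_t$, which matches the standard view of graph contraction adopted throughout the paper.
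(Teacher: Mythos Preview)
Your approach is exactly the paper's: the sentence preceding the observation already says that $\bag(v,t)$ is the set of vertices contracted with $v$ at time $t$ and that $\Delta\bag(v,t)$ is the degree of the corresponding super-vertex, so the observation is meant to be immediate from that identification. Your write-up simply spells this out in more detail.

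One small slip worth noting: you equate $\delta(S)$ in $G_t$ with the number of \emph{super-vertices} adjacent to $S$, and then justify this by saying parallel edges collapse under contraction. That is not the convention the paper actually uses. Definition~\ref{def:bag} literally defines $\nbag(v,t)$ as a set of original \emph{vertices}, and later (Lemma~\ref{lem:time-intervals}, Observation~\ref{obs:intervals}, Lemma~\ref{lem:intervals-computation}) the paper treats membership in $\nbag$ edge-by-edge and counts $\Delta\bag$ via one interval per edge $e\in E$, i.e.\ as a multigraph degree. So the intended reading of $\Delta\bag(v,t)$ is the number of crossing edges (the usual Karger convention), not the number of neighboring super-vertices. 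Your bijection argument goes through verbatim once you replace ``number of super-vertices adjacent to $S$'' by ``number of edges of $G$ with exactly one endpoint in $S$ and weight greater than $t$''; the final paragraph about collapsing parallel edges should then be dropped.
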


\subsection{Simulating tree contractions with low depth decomposition}

By Observation~\ref{obs:bags-eq-contr} our goal is to calculate the value of
$$\min_{v \in V, t \in [n^3]} \Delta \bag(v, t).$$
To find this, we could calculate the value $\min_{t \in [n^3]} \Delta \bag(v, t) $ for every vertex $v$ independently in parallel. However, this would require a minimum of $\Omega(n \cdot (n + m))$ total space, which roughly corresponds to replicating the whole graphs for each independent instance. There are two key observations that will allow us to reduce the space complexity. First, bags are determined solely from the topology of tree $T$. Second, for larger $t$, it is likely the case that $\bag(u, t) = \bag(v, t)$, so we would like to remove this redundant computation. Therefore, we will exploit tree properties and the low depth decomposition to partition the work and avoid redundancy.

Let $\ell : V \rightarrow [h], h \in O(\log^2 n)$ be the labeling from the generalized low depth decomposition of tree $T$ (see Definition~\ref{def:lowdepth}). Let us asses to each bag a uniquely chosen vertex.
\begin{definition}
The \textbf{leader} of a bag, denoted $\bagLeader(v, t)$, is the vertex $u$ with the smallest label $\ell(u)$ among all vertices from $\bag(v, t)$. We define a number $\ldrtime(v)$ to be the greatest number $0 \le t' \le n^3$ such that $\bagLeader(v, t') = v$. 
\end{definition}

Let us first argue the correctness of the above definitions.

\begin{lemma}\label{lem:ldrtime}
The leader of every bag can be determined uniquely. Also, for every vertex $v \in V$ it holds: the number $\ldrtime(v)$ exists,  $\ldrtime(v) \ge 0$, and for every $0 \le t' \le \ldrtime(v)$ we have that $\bagLeader(v , t') = v$.
\end{lemma}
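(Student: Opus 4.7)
The plan is to exploit the defining property of the generalized low depth decomposition together with the observation that $\bag(v,t)$ is a connected subtree of $T$. First I would establish uniqueness of the leader. Fix any bag $\bag(v,t)$ and let $i^{*} = \min_{u \in \bag(v,t)} \ell(u)$. Every vertex of the bag has label at least $i^{*}$, so the bag is contained in $T^{i^{*}} = \{u : \ell(u) \geq i^{*}\}$. By definition of $\bag(v,t)$ -- vertices reachable from $v$ using tree edges of weight at most $t$ -- the bag induces a connected subgraph of $T$, and therefore lies entirely within a single connected component $C$ of $T^{i^{*}}$. Invoking Definition~\ref{def:lowdepth}, $C$ contains at most one vertex labeled exactly $i^{*}$. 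Since by choice of $i^{*}$ at least one vertex of the bag attains the value $i^{*}$, that vertex is unique in $C$, hence unique in the bag, and it is the unique leader.

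Next I would handle the statements about $\ldrtime(v)$. At $t' = 0$ no edges of $T$ have been contracted, so $\bag(v,0) = \{v\}$ and trivially $\bagLeader(v,0) = v$. Therefore the set $\{t' \in \{0, 1, \ldots, n^{3}\} : \bagLeader(v,t') = v\}$ is nonempty and finite, so its maximum exists, which witnesses both that $\ldrtime(v)$ is well defined and that $\ldrtime(v) \geq 0$. For the monotonicity claim, I would use the fact that as $t$ grows the set of contracted tree edges only grows, so whenever $t' \leq t''$ we have $\bag(v,t') \subseteq \bag(v,t'')$. If $v$ achieves the minimum label in $\bag(v, \ldrtime(v))$, then $v$ also achieves the minimum label in every subset of $\bag(v, \ldrtime(v))$ that still contains $v$, in particular in $\bag(v,t')$ for every $0 \leq t' \leq \ldrtime(v)$. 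Combined with the uniqueness established in the previous paragraph, this yields $\bagLeader(v,t') = v$ for all such $t'$.

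I do not anticipate a serious obstacle here; the proof is essentially a careful unpacking of the two definitions once one notes that bags are connected subtrees. The one subtlety worth flagging is that the labels assigned by the low depth decomposition are not globally unique -- two vertices can share a label in general -- so uniqueness of the leader must be derived from the structural guarantee of Definition~\ref{def:lowdepth} applied at the specific level $i^{*}$, rather than from any global tie-breaking. Everything else is routine.
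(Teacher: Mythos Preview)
Your proposal is correct and follows essentially the same approach as the paper: uniqueness via connectivity of the bag together with the defining property of the low depth decomposition, existence of $\ldrtime(v)$ from $\bag(v,0)=\{v\}$, and the monotonicity claim from the nesting $\bag(v,0)\subseteq\bag(v,1)\subseteq\cdots$. The only cosmetic difference is that for uniqueness the paper argues by contradiction along the tree path between two putative minimizers, whereas you invoke Definition~\ref{def:lowdepth} directly at level $i^{*}$; these are equivalent.
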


Using the fact that each bag has exactly one leader, we can reformulate the expression $\min_{v \in V, t \in [n^3]}$ $\Delta \bag(v, t)$ as follows
$$\min_{v \in V, t \in [n^3]} \Delta \bag(v, t) = \min_{v \in V} \min_{0 \le t \le \ldrtime(v)} \Delta \bag(v, t).$$
We then will distribute the work needed to calculate the right-hand side of the above equality by requiring each vertex to calculate the minimal degree among bags for which it is the leader:
$$\min_{0 \le t \le \ldrtime(v)} \Delta \bag(v, t).$$

Let $i$ be a number in $\left[\ceil{log^{2}n}\right]$. Let $L_i$ (the $i$th level) be the set of vertices $v\in V$ with low depth decomposition label $\ell(v) = i$, and $L_{\leq i}$ be that with label $\ell(v) \leq i$ (for convenience we assume that $ L_{\le 0} = \emptyset$). Let $T^{i}$ be the tree $T$ with $L_{\le i - 1}$ removed.  
The following observation, derived from the fact that a bag is a connected subgraph of $T$ and the leader has lowest value $\ell(\cdot)$, relates bag location to the topology of the low depth decomposition.

\begin{observation}\label{obs:disjoint}
For every $i \in \left[\ceil{log^{2}n}\right]$, $v \in L_{i}$, and $0 \le t \le \ldrtime(v)$, the set $\bag(v, t)$ belongs to a single connected component of graph $T^{i}$. For any two $u, v \in L_{i}$, sets $\bag(u, \ldrtime(u))$ and $\bag(v, \ldrtime(v))$ belong to different components of graph $T^{i}$.
\end{observation}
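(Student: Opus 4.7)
The plan is to derive both parts directly from three building blocks already laid out: the definition of $\ldrtime$, the definition of $\bag$, and the defining property of a generalized low depth tree decomposition (Definition~\ref{def:lowdepth}), namely that each connected component of $T^{i}$ contains at most one vertex with label exactly $i$.

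First I would handle the single-component claim. Fix $i$, $v\in L_{i}$, and $0\le t\le \ldrtime(v)$. By Lemma~\ref{lem:ldrtime}, $\bagLeader(v,t)=v$ for every such $t$, so every $u\in\bag(v,t)$ satisfies $\ell(u)\ge \ell(v)=i$, which means $u\in V(T^{i})$. Hence $\bag(v,t)\subseteq V(T^{i})$. The bag is connected in $T$ by its definition (it is the set of vertices reachable from $v$ using tree edges of weight $\le t$), and a connecting path inside $T$ only uses tree edges whose endpoints lie in $\bag(v,t)\subseteq V(T^{i})$; thus every such edge is also present in $T^{i}$. Consequently $\bag(v,t)$ is connected in $T^{i}$, which forces it to lie entirely within one connected component of $T^{i}$.

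For the second part, take distinct $u,v\in L_{i}$ and let $C_{u}$, $C_{v}$ be the connected components of $T^{i}$ containing $\bag(u,\ldrtime(u))$ and $\bag(v,\ldrtime(v))$, respectively (these exist by part 1). Since $u\in\bag(u,\ldrtime(u))\subseteq C_{u}$ and $v\in\bag(v,\ldrtime(v))\subseteq C_{v}$, and both $u$ and $v$ carry label exactly $i$, the defining property of the generalized low depth tree decomposition---that every connected component of $T^{i}$ contains at most one vertex of $L_{i}$---immediately rules out $C_{u}=C_{v}$. Therefore the two bags sit in distinct components of $T^{i}$.

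There is essentially no real obstacle here: the statement is a clean structural consequence of the leader definition and the decomposition property, and no inductive argument or case analysis on $t$ is required. The only minor care needed is the observation that connectivity of $\bag(v,t)$ inside $T$ transfers to connectivity inside $T^{i}$ once we know all vertices of the bag lie in $V(T^{i})$, which in turn ensures that each tree edge used to connect them is still present in $T^{i}$.
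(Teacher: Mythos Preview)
Your argument is correct and matches the reasoning the paper alludes to: the observation is stated without a formal proof, only with the remark that it is ``derived from the fact that a bag is a connected subgraph of $T$ and the leader has lowest value $\ell(\cdot)$,'' which is exactly what you unpack. Your treatment of both parts---using Lemma~\ref{lem:ldrtime} to place the bag inside $V(T^{i})$, noting that connectivity in $T$ transfers to $T^{i}$, and then invoking Definition~\ref{def:lowdepth} for the separation claim---is precisely the intended justification.
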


Recall, that we wanted to calculate the value
$$\min_{0 \le t \le \ldrtime(v)} \Delta \bag(v, t)$$ 
for every $v \in V$, which we rewrote as 

$$\min_{v \in V} \min_{0 \le t \le \ldrtime(v)} \Delta \bag(v, t).$$

Grouping by vertices in the same layers, we get

\begin{align*}
&\min_{v \in V} \min_{0 \le t \le \ldrtime(v)} \Delta \bag(v, t) \\
&\qquad \qquad = \min_{i \in \left[\ceil{log^{2}n}\right]} \min_{v \in L_{i}} \min_{0 \le t \le \ldrtime(v)} \Delta \bag(v, t).
\end{align*}

By Observations~\ref{obs:disjoint}, we can hope that computing the value
$$\min_{v \in L_{i}} \min_{0 \le t \le \ldrtime(v)} \Delta \bag(v, t),$$

can be done in parallel without exceeding global memory limit of $O(m \log^2 n)$, since for different $v \in L_{i}$, their bags up to time $\ldrtime(v)$ belong to different components of $T^{i}$, thus we might avoid redundant work. The details of computing this value are presented in the next section. Let us now formalize the progress so far.
\begin{lemma}\label{lem:to-level-reduction}
Given a tree $T$ and a graph $G = (V, E, w : E \rightarrow [n^3])$ as an input, calculating the value $\min_{v \in V, t \in [n^3]} \Delta \bag(v, t)$
can be reduced to $O(\log^{2} n)$ instances of calculating values
$$\min_{v \in L_{i}} \min_{0 \le t \le \ldrtime(v)} \Delta \bag(v, t),$$
for $i \in \left[\ceil{log^{2}n}\right]$. 
The reduction can be implemented in AMPC with $O(1 / \epsilon)$ rounds, $O((n + m) \log^2{n})$ total space, and $O(n^{\epsilon})$ local memory.
\end{lemma}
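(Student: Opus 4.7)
The plan is to use the labeling $\ell$ produced by Lemma~\ref{lem:treedecomp} as a scaffold that splits the single global minimization over $(v,t)$ into $O(\log^{2} n)$ per-level subproblems whose inputs can be staged in parallel. First I would invoke Lemma~\ref{lem:treedecomp} to compute $\ell : V \to [h]$ with $h \in O(\log^{2} n)$ in $O(1/\epsilon)$ AMPC rounds, $O(n \log^{2} n)$ total space, and $O(n^{\epsilon})$ local memory, and place the labels in the distributed hash table so that subsequent rounds can adaptively query the label of any vertex.

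Next, I would compute $\ldrtime(v)$ for every $v \in V$. By Observation~\ref{obs:disjoint}, the component $C_v$ of $v$ in $T^{\ell(v)}$ contains $v$ as its unique vertex of label $\ell(v)$ and no vertex of label $<\ell(v)$, so $v$ remains the leader of its bag for exactly as long as the bag stays inside $C_v$. Consequently, $\ldrtime(v)+1$ equals the smallest $t$ for which the connected component of $v$ in $T$ restricted to edges of weight $\le t$ already contains a vertex of label $<\ell(v)$. This is a bottleneck-path computation on $T$ with labels as auxiliary data, and I would solve it by a Kruskal-style sort-and-merge over the edges of $T$: maintain the minimum label of each component as edges are inserted in increasing weight order, and record, for each $v$, the weight of the edge whose insertion first pulls a strictly smaller label into $v$'s component. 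Implemented on top of the AMPC tree-contraction primitives available via Lemma~\ref{lem:orient}, this costs $O(1/\epsilon)$ rounds and $O(n)$ total space.

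Third, I would stage the per-level data. For every $i \in [h]$ and every component $C$ of $T^{i}$ that contains some $v \in L_i$, I replicate $C$ together with all edges of $G$ incident to a vertex of $C$ onto a group of machines keyed by $(i, \mathrm{id}(C))$. By Observation~\ref{obs:disjoint}, the components of $T^{i}$ containing vertices of $L_i$ are pairwise vertex-disjoint, so the data replicated at level $i$ has size $O(n+m)$, and summing over $h = O(\log^{2} n)$ levels gives the advertised $O((n+m)\log^{2} n)$ total memory. A vertex participates at level $i$ precisely when its component in $T^{i}$ contains a vertex of $L_i$, which each vertex can determine by a hash-table lookup on its precomputed component identifier, and the distribution is carried out by a single AMPC sort on keys $(i, \mathrm{id}(C))$ in $O(1/\epsilon)$ rounds. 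Together with the identity
\begin{equation*}
\min_{v \in V,\, t \in [n^{3}]} \Delta\bag(v, t) \;=\; \min_{i \in [h]} \min_{v \in L_i} \min_{0 \le t \le \ldrtime(v)} \Delta\bag(v, t)
\end{equation*}
already derived in the text, this realizes the desired reduction.

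The main obstacle I expect is the second step: the Kruskal-style merging must respect the $O(n^{\epsilon})$ local memory bound, so no single machine may host a component's full edge list. I plan to work around this by maintaining each component's minimum-label record in the distributed hash table and processing edge insertions in sorted batches via the AMPC sort-and-aggregate pattern, so that every round each machine reads and writes only $O(n^{\epsilon})$ data while global progress is made. A secondary subtlety in the third step is parallelizing the per-level component-identifier assignment across all $h$ levels at once; this is exactly where the $\log^{2} n$ factor in total memory is spent.
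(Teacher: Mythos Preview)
The paper's own proof of this lemma is considerably lighter than what you propose. After invoking Lemma~\ref{lem:treedecomp} to obtain the labeling $\ell$, the paper simply materializes $O(\log^{2} n)$ tuples $(T,\ell,E,L_i)$, one per level, at total space $O((n+m)\log^{2} n)$; the correctness is exactly the identity you quote. Nothing else is done in the reduction: computing $\ldrtime(v)$, identifying components of $T^{i}$, and all per-level processing are deferred to the downstream lemmas (Lemma~\ref{lem:ldr-computation} and Lemma~\ref{lem:bag-n-calculating}). So your second step is not needed for \emph{this} lemma at all, and your third step, while correct, is more intricate than the paper's plain replication.

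That reorganization would be harmless except that your second step contains a genuine gap. The Kruskal-style sort-and-merge you describe --- process tree edges in weight order, maintain per-component minimum label, record when a smaller label first enters $v$'s component --- is sequential in nature, and ``processing edge insertions in sorted batches via the AMPC sort-and-aggregate pattern'' does not yield $O(1/\epsilon)$ rounds: the component structure after one batch determines which merges are meaningful in the next, so batches cannot be collapsed without further ideas. The appeal to Lemma~\ref{lem:orient} does not help, as that lemma only roots and orients a tree. By contrast, the paper (in Lemma~\ref{lem:ldr-computation}) computes $\ldrtime(v)$ for $v\in L_i$ without any merging, using the structural fact of Lemma~\ref{lem:two-edges-decom} that each component of $T^{i}$ has at most two tree edges to $L_{\le i-1}$; these two edges are read off the decomposition with $O(\log^{2} n)$ adaptive queries, and $\ldrtime(v)$ is their minimum weight. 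If you want to keep $\ldrtime$ inside the reduction, that is the argument to use; otherwise, drop step two and let the subproblem solver handle it, as the paper does.
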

\begin{proof}
The correctness follows from the above discussion. 
For the implementation, the generalized low depth decomposition of $T$ can be determined in $O(1 / \epsilon)$ rounds with $O(n \log^{2} n)$ total space by Lemma~\ref{lem:treedecomp}. Consider now $O(\log^{2})$ tuples of format $(T, \ell, E, L_{i})$. Preparing them requires $O((n + m) \log^2 n)$ total space and the above discussion shows that the value 
$$\min_{v \in L_{i}} \min_{0 \le t \le \ldrtime(v)} \Delta \bag(v, t)$$
for every $i \in \left[\ceil{log^{2}n}\right]$ can be computed from the tuple $(T, \ell, E, L_{i})$, thus the lemma follows.
\end{proof}

\subsection{Resolving the problem for vertices on the same level.}

Following Lemma~\ref{lem:to-level-reduction}, we fix $i \in \left[\ceil{log^{2}n}\right]$ and set $L_{i}$. We calculate:
$$\min_{v \in L_{i}} \min_{0 \le t \le \ldrtime(v)} \Delta \bag(v, t).$$ 
In this approach, we will frequently query the minimum value over a path in a tree, thus the following result is helpful. 
\begin{theorem}[Behnezhad et al. \cite{behnezhad2019exponentially}]\label{thm:heavy-light-rmq}
Consider a rooted, weighted tree $T$, the heavy-light decomposition of this tree together with an RMQ data structure that supports queries on heavy paths can be computed in $O(1 / \epsilon)$ AMPC rounds using $O(n^{\epsilon})$ local memory and $O(n \log n)$ total space. If the aforementioned data structures are precomputed, then obtaining a minimum value on a path of a tree can be calculated with $O(\log n)$ queries to global memory.
\end{theorem}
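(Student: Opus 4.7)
The plan is to construct the data structures in two phases: first a structural phase that produces the heavy-light decomposition, then an auxiliary phase that equips each heavy path with an RMQ index, and finally to show that the path-minimum query decomposes into $O(\log n)$ heavy-path segments.

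First I would root and orient $T$ via Lemma~\ref{lem:orient}, which costs $O(1/\epsilon)$ AMPC rounds with $O(n^\epsilon)$ local memory and $O(n\log n)$ total space. To identify heavy edges in the Sleator--Tarjan sense we need subtree sizes at every vertex. In AMPC, subtree sizes can be computed in $O(1/\epsilon)$ rounds via an Euler-tour technique: list successor pointers around the rooted tree, convert the cyclic Euler tour into a linear array using list ranking, then compute subtree sizes from the difference of first/last occurrences in the Euler array (prefix sums and list ranking are known $O(1/\epsilon)$-round AMPC primitives). Once subtree sizes are known, each internal vertex independently marks a single descending heavy edge (tie-broken arbitrarily) in one additional round. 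Collecting the resulting chains into maximal heavy paths reduces to a connectivity computation on the subgraph of heavy edges, which is again a standard $O(1/\epsilon)$-round AMPC task.

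Once heavy paths are isolated, each path $P$ of length $|P|$ is indexed independently: we sort $P$ along its direction, then build a sparse-table / segment-tree-style RMQ so that any contiguous interval of $P$ can be queried in $O(1)$ global-memory lookups. Building all such RMQs in parallel takes $O(1/\epsilon)$ rounds and $O(n\log n)$ total space in aggregate (each vertex participates in exactly one heavy path by Observation~\ref{obs:heavypath}, and an $O(\log n)$-entry sparse table per vertex suffices). At this point, the precomputation budget of $O(n\log n)$ total space, $O(n^\epsilon)$ local memory, and $O(1/\epsilon)$ rounds is met.

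For the query bound, consider a path in $T$ from $u$ to $v$. I would first find their LCA $w$ (precomputable using the same heavy-light structure) and then split the query into the $u \to w$ and $v \to w$ halves. Each half ascends the tree; by Observation~\ref{obs:lightheavy} it crosses at most $O(\log n)$ heavy paths. Within a single heavy path, the ascending portion is a contiguous subinterval whose minimum is returned by one RMQ query, costing $O(1)$ global-memory reads; crossing to the next heavy path costs another $O(1)$ reads for the light edge. Taking the minimum over the $O(\log n)$ segments gives the path minimum using $O(\log n)$ queries, as claimed. The main subtlety I expect is ensuring that all of the building-block primitives (Euler tour construction, list ranking, connectivity for chain formation, and parallel sparse-table construction) can be composed within $O(1/\epsilon)$ total rounds while respecting the $O(n^\epsilon)$ local memory constraint; this amounts to verifying that none of the intermediate data (per-vertex path identifiers, offsets within heavy paths, RMQ table rows) exceeds $O(n^\epsilon)$ on any one machine, which follows by distributing each heavy path across machines by index range.
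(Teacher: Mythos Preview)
The paper does not prove this statement; it is quoted verbatim as a black-box result of Behnezhad et al.~\cite{behnezhad2019exponentially} and used without further argument. So there is no ``paper's own proof'' to compare your proposal against.

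That said, your sketch is the standard route and is essentially what the cited work does: root the tree, compute subtree sizes (the paper itself notes this is an $O(1/\epsilon)$-round AMPC primitive in the proof of Lemma~\ref{lem:heavylight}), mark heavy edges, group them into maximal paths via connectivity, and build per-path sparse tables. Your appeal to Observation~\ref{obs:lightheavy} for the $O(\log n)$ query bound is exactly the intended mechanism. The only place I would tighten is the LCA step: you assert it is ``precomputable using the same heavy-light structure,'' which is true, but in the AMPC setting one usually avoids a separate LCA structure and instead walks both endpoints upward along heavy paths simultaneously (jumping to heavy-path heads), stopping when they land on the same heavy path; this already fits in the $O(\log n)$ global-memory queries you budget and keeps the argument self-contained.
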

We will also make use of the following theorem.
\begin{theorem}[Behnezhad et al. \cite{behnezhad2019near}]\label{thm:prefix-sums}
For a given sequence of integer numbers $S$ of length $n$, computing the minimum prefix sum over all prefix sums can be done in $O(1 / \epsilon)$ AMPC rounds using $O(n^{\epsilon})$ local memory and $O(n \log n)$ total space.
\end{theorem}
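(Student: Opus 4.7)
The plan is to build a balanced aggregation tree of fan-out $n^\epsilon$ over the input sequence $S$ and to sweep the relevant summary statistics up the tree, one level per AMPC round. First I would partition $S[1\ldots n]$ into $n/n^\epsilon$ consecutive blocks of size $n^\epsilon$, let each block be a leaf, and build a tree of depth $\lceil 1/\epsilon\rceil$ in which every internal node has exactly $n^\epsilon$ children corresponding to a contiguous range of $S$. Since each AMPC round can process one full level of this tree, the round budget is immediately $O(1/\epsilon)$.

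The key observation is that minimum-prefix-sum is tree-aggregatable once we carry at every node $v$ a pair $(\sigma(v),\mu(v))$, where $\sigma(v)$ is the total sum of the entries in the range of $S$ covered by $v$ and $\mu(v)$ is the minimum prefix sum taken over that range. For a leaf, a single machine reads its $n^\epsilon$ entries from the previous-round hash table and computes $(\sigma,\mu)$ by a sequential scan. For an internal node $v$ with children $c_1,\ldots,c_{n^\epsilon}$ in left-to-right order, one machine reads all children's statistics and sets
\[
\sigma(v) \;=\; \sum_{i=1}^{n^\epsilon} \sigma(c_i), \qquad \mu(v) \;=\; \min_{1\le j\le n^\epsilon}\Bigl(\mu(c_j) + \sum_{i<j}\sigma(c_i)\Bigr),
\]
then writes the pair to the next-round hash table. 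A short induction on depth shows that $(\sigma(v),\mu(v))$ equals the true sum and true minimum prefix sum of the range covered by $v$, so $\mu$ at the root is the answer.

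For the resource budget, the total number of tree nodes is $O(n/n^\epsilon + n/n^{2\epsilon} + \cdots) = O(n)$, each storing $O(\log n)$ bits, giving $O(n\log n)$ total memory. Each internal-node aggregation reads at most $n^\epsilon$ values and fits on one machine, so the $O(n^\epsilon)$ local-memory bound is respected, and the round count equals the tree depth $\lceil 1/\epsilon\rceil$. The main thing to verify, and arguably the only nontrivial point in the proof, is that the combiner above is associative, i.e., that $(\sigma,\mu)$ forms a monoid under the rule $(\sigma_A,\mu_A)\cdot(\sigma_B,\mu_B) = (\sigma_A+\sigma_B,\,\min(\mu_A,\sigma_A+\mu_B))$, so that bracketing children into the aggregation tree does not change the final value. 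This associativity is the standard segment-tree fact for range-minimum-prefix-sum queries and can be checked by direct expansion of both groupings of three adjacent ranges.
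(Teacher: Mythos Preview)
The paper does not prove this statement; it is quoted as a black-box result from Behnezhad et al.\ \cite{behnezhad2019near} and used without argument. There is therefore no ``paper's own proof'' to compare against.

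That said, your aggregation-tree argument is correct and is essentially the standard way one proves such a bound. The pair $(\sigma,\mu)$ with the combiner $(\sigma_A,\mu_A)\cdot(\sigma_B,\mu_B)=(\sigma_A+\sigma_B,\min(\mu_A,\sigma_A+\mu_B))$ is indeed an associative monoid, so the bottom-up sweep over an $n^\epsilon$-ary tree of depth $\lceil 1/\epsilon\rceil$ computes the correct value in $O(1/\epsilon)$ rounds, with each internal-node aggregation fitting in $O(n^\epsilon)$ local memory. Two minor remarks: first, the number of internal tree nodes is in fact only $O(n^{1-\epsilon})$, not $\Theta(n)$, so the $O(n\log n)$ total-space bound is dominated by storing the input sequence itself; second, nothing in your argument uses adaptive reads, so this actually already works in plain $\MPC$ (which is consistent with the cited reference).
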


Finally, we show that the construction of the low depth decomposition provided in Section~\ref{sec:low-depth} gives easy access to edges that connect vertices of higher labels with vertices of smaller labels.
\begin{lemma}\label{lem:two-edges-decom}
For any connected component $C^i$ in $T^i$, there are at most $2$ tree edges between $C^i$ and $V \setminus T^i$ according to the low depth decomposition $\ell$ given in Lemma~\ref{lem:decompcorrectness}. Moreover, both edges can be calculated in $O(1 / \epsilon)$ AMPC rounds with $O(n^{\epsilon})$ memory per machine and $O(n \log^2 n)$ total memory.
\end{lemma}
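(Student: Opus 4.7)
The plan is to prove the at-most-2 boundary edges bound by exposing the structural shape of $C^{i}$ through the heavy-light decomposition, and then to read off the two edges using the precomputed heavy-light RMQ of Theorem~\ref{thm:heavy-light-rmq}, all within $O(1/\epsilon)$ AMPC rounds.

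First, I would identify the unique topmost vertex $v_{\text{top}}$ of $C^{i}$ in the rooted tree $T$. Because $C^{i}$ is a connected subtree of $T$, every other vertex of $C^{i}$ has its $T$-parent also in $C^{i}$; hence $v_{\text{top}}$ is well-defined, and the $T$-edge from $v_{\text{top}}$ to its parent (if any) is the first candidate boundary edge. The core structural claim I then plan to establish is that for every $v \in C^{i}$ and every light child $c$ of $v$ in $T$, the entire $T$-subtree rooted at $c$ lies in $C^{i}$. The base inequality comes from the construction in Lemma~\ref{lem:decompcorrectness}: a light child $c$ of $v$ is the leftmost leaf of its child binarized path, so $u'_{c} = c$ and $\ell(c) = d(c) = d(v) + 1$ in the expanded meta tree; combined with $\ell(v) = d(u'_{v}) \leq d(v)$, this gives $\ell(c) > \ell(v) \geq i$. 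A parallel depth bound shows that every non-leftmost leaf of the same child binarized path has its $u'$ at expanded-tree depth at least that of the root of that binarized path, which again strictly exceeds $i$; iterating the two bounds through further light edges covers every descendant of $c$.

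Combined with the elementary observation that $C^{i}$ intersects each heavy path in a contiguous subpath (otherwise a missing interior vertex would disconnect $C^{i}$ in $T$), this structural claim pins down $C^{i}$ as the union of a single contiguous subpath $[j, j'']$ of the heavy path $P_{w_{\text{top}}}$ containing $v_{\text{top}}$, together with all light-descendant $T$-subtrees hanging off these positions. Hence the only possible second boundary edge of $C^{i}$ is the heavy edge from $P_{w_{\text{top}}}[j'']$ to $P_{w_{\text{top}}}[j''+1]$ when $j'' < |P_{w_{\text{top}}}|$, yielding the claimed bound of at most two tree edges.

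For the computational part, I would precompute the heavy-light decomposition of $T$ and an RMQ structure on every heavy path using Theorem~\ref{thm:heavy-light-rmq} in $O(1/\epsilon)$ rounds and $O(n \log n)$ total space. Then, for each component, the top boundary edge is obtained in a single adaptive query from $v_{\text{top}}$'s $T$-parent, while the bottom boundary is located by an RMQ query starting at $v_{\text{top}}$'s position in $P_{w_{\text{top}}}$, asking for the first position whose label is strictly less than $i$. The main obstacle I foresee is carrying out the depth arithmetic needed to propagate the inequality $\ell > i$ across multiple child binarized paths; while the base case $\ell(c) > \ell(v)$ for a direct light child is immediate, extending it to deeper descendants requires uniformly controlling the depths of the roots of the child binarized paths in the expanded meta tree. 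Once this invariant is established, both the two-edge bound and the efficient identification of these edges follow.
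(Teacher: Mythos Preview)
Your approach is correct and rests on the same structural insight as the paper: $C^{i}$ is a contiguous subpath of a single heavy path together with all light-descendant subtrees hanging off it, so the only possible boundary tree edges are the (at most) two at the ends of that subpath. The paper phrases this by anchoring on the minimum-label vertex of $C^{i}$ and asserting that the component decomposes into a connected piece $P_{1}$ of that vertex's heavy path plus the attached light subtrees $P_{2}$; you anchor instead on the topmost vertex $v_{\text{top}}$ and argue the light-subtree containment directly via expanded-meta-tree depths. These anchors lie on the same heavy path (any vertex reached from $P_{1}$ through a light edge has strictly larger label than its $P_{1}$-parent, so the minimum-label vertex cannot sit below a light edge), so the two descriptions coincide. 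The ``obstacle'' you flag---pushing the inequality $\ell>i$ through deeper binarized paths---is exactly the step the paper leaves implicit; it goes through by the monotonicity $\ell(w)\ge d(r_{w_{M}})$ and the fact that roots of child binarized paths sit strictly deeper than their attaching leaf.

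The computational routes differ more visibly. The paper observes that, once the minimum-label vertex $v$ is known, the positions on $v$'s heavy path of the nearest left and right vertices with label below $\ell(v)$ are deterministic functions of $v$'s index and the path length (this is a consequence of the almost-complete binary-tree labeling), so they are read off in $O(1)$ local work without RMQ. Your route via the heavy-light RMQ of Theorem~\ref{thm:heavy-light-rmq} also lands within the stated bounds, but note that a raw RMQ returns a range minimum, not the first-below-threshold position; you would need a short binary search on top of it. The paper's direct computation avoids this and the RMQ preprocessing altogether.
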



Let us now turn to the proper part of this subsection.
First, we show how to compute values $\ldrtime(v)$ for all $v \in L_{i}$.
\begin{lemma}\label{lem:ldr-computation}
Given a tuple $(T, \ell, E, L_{i})$ for tree $T$, low depth decomposition $\ell$, set of weighted edges $E$, and levels $L_i$ for some $i\in [\lceil\log^2n\rceil]$, there exists an AMPC algorithm that calculates the value $\ldrtime(v)$
for every $v \in L_{i}$, in $O(1/\epsilon)$ rounds using $O(n^{\epsilon})$ local memory and $O((n + m) \log^2{n})$ global memory.
\end{lemma}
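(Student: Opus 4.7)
My plan is to derive a closed-form expression for $\ldrtime(v)$ in terms of boundary edges of $v$'s component in $T^i$ and tree path maxima, then implement the expression using the two AMPC primitives already provided, namely Lemma~\ref{lem:two-edges-decom} and Theorem~\ref{thm:heavy-light-rmq}.

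The key structural observation is that, by the generalized low depth decomposition property (Definition~\ref{def:lowdepth}), if $v\in L_i$ and $C^i$ denotes the component of $T^i$ containing $v$, then $v$ is the unique label-$i$ vertex in $C^i$, every other vertex of $C^i$ has label strictly greater than $i$, and every vertex outside $T^i$ has label strictly less than $i$. Thus $v$ is the leader of $\bag(v,t)$ for exactly those $t$ with $\bag(v,t)\subseteq C^i$: as soon as the bag acquires any vertex outside $T^i$, that vertex has a smaller label and dethrones $v$. For a boundary edge $e=(x_e,y_e)$ of $C^i$ (with $x_e\in C^i$ and $y_e\notin T^i$), the vertex $y_e$ first enters $\bag(v,t)$ at time $\max(w(e),\,P(v,x_e))$, where $P(v,x_e)$ is the maximum weight on the (unique, $C^i$-internal) tree path from $v$ to $x_e$; this is because the bag reaches $x_e$ only once every edge on that path has been contracted, and it then crosses $e$ once $e$ itself is contracted. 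Taking the earliest such event over the boundary, I obtain
\[
\ldrtime(v) \;=\; \Bigl(\min_{e}\, \max\bigl(w(e),\,P(v,x_e)\bigr)\Bigr) - 1,
\]
where the minimum ranges over the at most two boundary edges of $C^i$ given by Lemma~\ref{lem:two-edges-decom}; if $C^i$ has no boundary edges (which can only happen for the smallest occupied level), set $\ldrtime(v)=n^3$.

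With this formula, the algorithm has three AMPC phases. First, I invoke Lemma~\ref{lem:two-edges-decom} to enumerate, for each component $C^i$ of $T^i$, the at most two boundary edges and their endpoints in $C^i$, and tag each such edge with the (unique, if existent) representative $v\in L_i$ of that component; this costs $O(1/\epsilon)$ rounds and $O(n\log^2 n)$ total space. Second, I preprocess $T$ together with its edge weights using the heavy-light plus RMQ structure of Theorem~\ref{thm:heavy-light-rmq}, so that any subsequent path-max query $P(v,x_e)$ is answerable with $O(\log n)$ adaptive reads; this preprocessing takes $O(1/\epsilon)$ rounds and $O(n\log n)$ total space. Third, in parallel over $v\in L_i$, each machine adaptively reads the at most two boundary edges assigned to $v$'s component, performs the corresponding path-max queries, and outputs the formula. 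Since each $v$ issues only $O(\log n)$ adaptive reads of constant-size records, this phase fits in $O(n^\epsilon)$ local memory and $O(1/\epsilon)$ rounds, and the overall total-memory usage is dominated by the input tuple $(T,\ell,E,L_i)$ and the heavy-light preprocessing, both within the $O((n+m)\log^2 n)$ budget.

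The main subtlety I expect, and the reason the argument is not a one-liner, is that the smallest-weight boundary edge of $C^i$ may be contracted well before $v$'s bag grows enough to touch its $C^i$-endpoint $x_e$; so the naive answer $\min_e w(e)-1$ is wrong, and the correct stopping time is controlled by $\max(w(e),P(v,x_e))$. This is precisely what forces the use of Theorem~\ref{thm:heavy-light-rmq}'s path queries; the remaining ingredients (identifying $C^i$'s boundary edges, broadcasting them to the representative $v\in L_i$, and combining the two candidate values) are routine given the tools already established.
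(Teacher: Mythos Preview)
Your argument is correct and follows the same skeleton as the paper: identify the at most two boundary edges of $C^i$ via Lemma~\ref{lem:two-edges-decom} and read off $\ldrtime(v)$ from them. The paper's proof, however, is terse to the point of ambiguity: it asserts that the stopping time equals ``the largest weight of tree edges between $v$'s connected component in $T^{i}$ and $L_{\le i-1}$'' and then says ``we then simply find the minimum,'' without making clear whether only the boundary-edge weights enter or the full path maxima, and without invoking any path-query primitive. Your closed form $\ldrtime(v)=\min_e\max\bigl(w(e),P(v,x_e)\bigr)-1$ pins down exactly the subtlety you flag (a boundary edge may contract long before $v$'s bag reaches its $C^i$-endpoint, so $\min_e w(e)-1$ alone is wrong), and your explicit appeal to Theorem~\ref{thm:heavy-light-rmq} for the path maximum makes the implementation complete within the stated bounds. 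In short, your proof is a refinement of the paper's: same identification of boundary edges, plus the path-max term and the tool to compute it. One cosmetic remark: Theorem~\ref{thm:heavy-light-rmq} is phrased for path \emph{minimum}, but the identical heavy-light/RMQ machinery supports maximum, so no change is needed.
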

\begin{proof}
Consider vertex $v \in L_{i}$. Vertex $v$ ceases to be the leader of a bag at the first time $t$ when its bag is contracted with another bag containing at least one vertex of the set $L_{\le i - 1}$. According to the tree contraction process, time $t$ is equal to the largest weight of tree edges between $v$'s connected component in graph $T^{i}$ and the set of vertices $L_{\le i- 1}$. By Lemma~\ref{lem:two-edges-decom}, these edges can be extracted with at most $O(\log^2{n})$ queries to the low depth decomposition structure. We then simply find the minimum. Thus, all values $\ldrtime(v)$ for vertices from $L_{i}$ can be computed in constant number of rounds assumed the conditions stated in the lemma.
\end{proof}

\ignore{
We claim that $v$'s connected component in graph $T^{i}$, has at most two edges connecting it with set of vertices $L_{\le i - 1}$.  
The time of this contraction can be defined as $\min \{ w (e): e \in E_T(U_v, T \setminus U_v) \} $, where $E_T(U_v, T \setminus U_v)$ is the set of edges of $T$ between sets $U_{v}$ and $T \setminus U_v$. Now, the key observation is that the due to construction of the generalized low depth decomposition the set $E_T(U_v, T \setminus U_v)$ is of size at most $2$. To see this, we need to recall the structure of generalized low depth decomposition. It consists of $\log{n}$ binarized paths. If a vertex is the vertex with the lowest label on such a path, then the only edge  belonging to $E_T(U_v, T \setminus U_v)$ is the one that connects its binarized path to the path with the vertices on the lower level. 
On the other hand, if the vertex's label is not the lowest among other vertices from its path, then in the set $E_T(U_v, T \setminus U_v)$ there may be an additional edge connecting this vertex to a neighbor belonging to the same binarized path but with a lower label.

Each of those two edges can be find in $O(\log^{2}(n))$ sequential queries to the memory where the generalized low depth decomposition is stored.
This proves that for any $v \in L_{i}$ the value $\ldrtime(v)$ can be found in a single round in AMPC model. 
}

We can assume that values $\ldrtime(v) \in L_{i}$ are known. We would like to efficiently compute
$$\min_{0 \le t \le \ldrtime(v)} \Delta \bag(v, t),$$
for each $v \in L_{i}$.
For this, we make the following observation.

\begin{lemma}\label{lem:time-intervals}
Consider an edge $(x,y) =: e \in E$ and a vertex $v \in L_{i}$. All possible values $0 \le t' \le \ldrtime(v)$ at which $e$ belongs to set $\nbag(v, t')$ form a consecutive (possible empty) interval of integers $[a_{e}, b_{e}] \subseteq [0, \ldots, \ldrtime(v)]$, called also a $\textbf{time interval}$ with respect to $v$.
\end{lemma}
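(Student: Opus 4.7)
The plan is to characterize the times at which $e = (x,y)$ is a boundary edge of $v$'s bag via the two ``entrance times'' of its endpoints. Define $t_x := \min\{t \in [0, n^3] : x \in \bag(v,t)\}$, with the convention $t_x := +\infty$ if no such $t$ exists, and analogously $t_y$. The first step is to establish a monotonicity property: for any $t_1 \le t_2$, $\bag(v, t_1) \subseteq \bag(v, t_2)$. This is immediate from Definition~\ref{def:bag}, since enlarging $t$ only enlarges the set of tree edges usable for reachability from $v$.

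By monotonicity, $\{t' : x \in \bag(v, t')\}$ is the upward-closed interval $[t_x, n^3]$ (empty if $t_x = +\infty$), and similarly for $y$. The second step is to interpret ``$e$ belongs to $\nbag(v, t')$'' as $e$ being an edge that crosses the boundary of $\bag(v, t')$, i.e., exactly one of $x, y$ lies in $\bag(v, t')$; this is the natural reading in context and is precisely what is counted in the subsequent degree computations. The set of times satisfying this condition is the symmetric difference of the two upward-closed intervals above. If $t_x = t_y$ (which can occur when a single tree-edge contraction merges both $x$ and $y$ into $v$'s bag simultaneously), this symmetric difference is empty. Otherwise, taking $t_x < t_y$ without loss of generality, it equals the consecutive block $[t_x, t_y - 1]$. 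Intersecting with the admissible range $[0, \ldrtime(v)]$ preserves the interval structure, yielding $a_e := \max(t_x, 0)$ and $b_e := \min(t_y - 1, \ldrtime(v))$, with the interval declared empty when $a_e > b_e$ or when the symmetric difference was already empty.

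The argument is essentially definition-chasing combined with the monotone-growth property of bags, so I do not foresee a substantial obstacle. The only point requiring care is the mild notational abuse between $\nbag$ as a set of vertices and $e$ as an edge; once this is pinned down to ``exactly one endpoint of $e$ lies in the bag'', the symmetric-difference-of-two-upward-closed-intervals view makes the consecutive-integer conclusion automatic.
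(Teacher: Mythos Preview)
Your proposal is correct and follows the same route as the paper: the paper's proof is the single line ``the lemma follows immediately from the fact that $\bag(v,0)\subseteq\bag(v,1)\subseteq\cdots\subseteq\bag(v,n^3)$,'' and your entrance-time and symmetric-difference argument is just an explicit unpacking of exactly this monotonicity. Your handling of the notational abuse (reading ``$e\in\nbag(v,t')$'' as ``$e$ crosses the boundary of $\bag(v,t')$'') is also the intended meaning, consistent with how the intervals are computed in the subsequent lemma.
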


\begin{proof}
The lemma follows immediately from the fact that\linebreak $\bag(v, 0) \subseteq \bag(v, 1) \subseteq \ldots \subseteq \bag(v, n^3)$.
\end{proof}

\begin{figure}
    \centering
    \scalebox{0.7}{\tikzset{every picture/.style={line width=0.75pt}} 

\begin{tikzpicture}[x=0.75pt,y=0.75pt,yscale=-0.8,xscale=0.8]

\draw  [draw opacity=0][fill={rgb, 255:red, 74; green, 144; blue, 226 }  ,fill opacity=1 ] (300.5,95.25) .. controls (300.5,87.38) and (306.88,81) .. (314.75,81) .. controls (322.62,81) and (329,87.38) .. (329,95.25) .. controls (329,103.12) and (322.62,109.5) .. (314.75,109.5) .. controls (306.88,109.5) and (300.5,103.12) .. (300.5,95.25) -- cycle ;
\draw  [draw opacity=0][fill={rgb, 255:red, 74; green, 144; blue, 226 }  ,fill opacity=1 ] (300.5,165.25) .. controls (300.5,157.38) and (306.88,151) .. (314.75,151) .. controls (322.62,151) and (329,157.38) .. (329,165.25) .. controls (329,173.12) and (322.62,179.5) .. (314.75,179.5) .. controls (306.88,179.5) and (300.5,173.12) .. (300.5,165.25) -- cycle ;
\draw  [draw opacity=0][fill={rgb, 255:red, 74; green, 144; blue, 226 }  ,fill opacity=1 ] (300.5,235.25) .. controls (300.5,227.38) and (306.88,221) .. (314.75,221) .. controls (322.62,221) and (329,227.38) .. (329,235.25) .. controls (329,243.12) and (322.62,249.5) .. (314.75,249.5) .. controls (306.88,249.5) and (300.5,243.12) .. (300.5,235.25) -- cycle ;
\draw  [draw opacity=0][fill={rgb, 255:red, 74; green, 144; blue, 226 }  ,fill opacity=1 ] (300.5,305.25) .. controls (300.5,297.38) and (306.88,291) .. (314.75,291) .. controls (322.62,291) and (329,297.38) .. (329,305.25) .. controls (329,313.12) and (322.62,319.5) .. (314.75,319.5) .. controls (306.88,319.5) and (300.5,313.12) .. (300.5,305.25) -- cycle ;
\draw  [draw opacity=0][fill={rgb, 255:red, 126; green, 211; blue, 33 }  ,fill opacity=1 ] (370.5,235.25) .. controls (370.5,227.38) and (376.88,221) .. (384.75,221) .. controls (392.62,221) and (399,227.38) .. (399,235.25) .. controls (399,243.12) and (392.62,249.5) .. (384.75,249.5) .. controls (376.88,249.5) and (370.5,243.12) .. (370.5,235.25) -- cycle ;
\draw  [draw opacity=0][fill={rgb, 255:red, 126; green, 211; blue, 33 }  ,fill opacity=1 ] (370.5,305.25) .. controls (370.5,297.38) and (376.88,291) .. (384.75,291) .. controls (392.62,291) and (399,297.38) .. (399,305.25) .. controls (399,313.12) and (392.62,319.5) .. (384.75,319.5) .. controls (376.88,319.5) and (370.5,313.12) .. (370.5,305.25) -- cycle ;
\draw  [draw opacity=0][fill={rgb, 255:red, 126; green, 211; blue, 33 }  ,fill opacity=1 ] (230.5,236.25) .. controls (230.5,228.38) and (236.88,222) .. (244.75,222) .. controls (252.62,222) and (259,228.38) .. (259,236.25) .. controls (259,244.12) and (252.62,250.5) .. (244.75,250.5) .. controls (236.88,250.5) and (230.5,244.12) .. (230.5,236.25) -- cycle ;
\draw  [draw opacity=0][fill={rgb, 255:red, 126; green, 211; blue, 33 }  ,fill opacity=1 ] (230.5,305.25) .. controls (230.5,297.38) and (236.88,291) .. (244.75,291) .. controls (252.62,291) and (259,297.38) .. (259,305.25) .. controls (259,313.12) and (252.62,319.5) .. (244.75,319.5) .. controls (236.88,319.5) and (230.5,313.12) .. (230.5,305.25) -- cycle ;
\draw    (314.75,109.5) -- (314.75,151) ;
\draw    (300.5,165.25) -- (244.75,222) ;
\draw    (314.75,179.5) -- (314.75,221) ;
\draw    (329,165.25) -- (384.75,221) ;
\draw    (244.75,250.5) -- (244.75,291) ;
\draw    (314.75,249.5) -- (314.75,291) ;
\draw    (384.75,249.5) -- (384.75,291) ;
\draw  [draw opacity=0][fill={rgb, 255:red, 245; green, 166; blue, 35 }  ,fill opacity=1 ] (440.5,304.25) .. controls (440.5,296.38) and (446.88,290) .. (454.75,290) .. controls (462.62,290) and (469,296.38) .. (469,304.25) .. controls (469,312.12) and (462.62,318.5) .. (454.75,318.5) .. controls (446.88,318.5) and (440.5,312.12) .. (440.5,304.25) -- cycle ;
\draw    (399,235.25) -- (454.75,290) ;
\draw  [dash pattern={on 4.5pt off 4.5pt}]  (454.75,290) .. controls (518,193) and (440,77) .. (329,95.25) ;

\draw (310,88) node [anchor=north west][inner sep=0.75pt]  [color={rgb, 255:red, 255; green, 255; blue, 255 }  ,opacity=1 ]  {$3$};
\draw (310,157) node [anchor=north west][inner sep=0.75pt]  [color={rgb, 255:red, 255; green, 255; blue, 255 }  ,opacity=1 ]  {$2$};
\draw (310,228) node [anchor=north west][inner sep=0.75pt]  [color={rgb, 255:red, 255; green, 255; blue, 255 }  ,opacity=1 ]  {$1$};
\draw (310,298) node [anchor=north west][inner sep=0.75pt]  [color={rgb, 255:red, 255; green, 255; blue, 255 }  ,opacity=1 ]  {$2$};
\draw (380,228) node [anchor=north west][inner sep=0.75pt]  [color={rgb, 255:red, 255; green, 255; blue, 255 }  ,opacity=1 ]  {$4$};
\draw (380,297) node [anchor=north west][inner sep=0.75pt]  [color={rgb, 255:red, 255; green, 255; blue, 255 }  ,opacity=1 ]  {$5$};
\draw (239,228) node [anchor=north west][inner sep=0.75pt]  [color={rgb, 255:red, 255; green, 255; blue, 255 }  ,opacity=1 ]  {$4$};
\draw (239,297) node [anchor=north west][inner sep=0.75pt]  [color={rgb, 255:red, 255; green, 255; blue, 255 }  ,opacity=1 ]  {$5$};
\draw (449,297) node [anchor=north west][inner sep=0.75pt]  [color={rgb, 255:red, 255; green, 255; blue, 255 }  ,opacity=1 ]  {$6$};
\draw (393,200) node [anchor=north west][inner sep=0.75pt]  [font=\Large]  {$\mathbf{v}$};
\draw (329,258) node [anchor=north west][inner sep=0.75pt]  [font=\footnotesize]  {$6\mid [ 0,\underline{1}]$};
\draw (350,163) node [anchor=north west][inner sep=0.75pt]  [font=\footnotesize]  {$\underline{2} \mid [ 0,1]$};
\draw (417,229) node [anchor=north west][inner sep=0.75pt]  [font=\footnotesize]  {$\underline{1} \mid [ 0,0]$};
\draw (468,128) node [anchor=north west][inner sep=0.75pt]  [font=\footnotesize]  {$\underline{12} \mid [ 0,1]$};

\end{tikzpicture}}
    \caption{A sample structure of an MST tree. Firm edges are \emph{tree} edges, while dotted are \emph{non-tree} edges. The number inside vertices denote their levels. Different colors symbolize different binarized paths. The numbers underlined are times of contraction of corresponding edges. Next to these number the time intervals of these edges with respect to vertex $v$ are given. Since $\ldrtime(v) = 2$, thus all these intervals are contained in $[0,2]$.}
    \label{fig:intervals}
\end{figure}
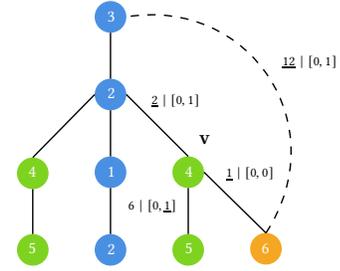

Additionally, the following observation shows, given edge time intervals, how to derive $\min_{0 \le t \le \ldrtime(v)} \Delta \bag(v, t)$ and clarifies the purpose of time intervals.

\begin{observation}\label{obs:intervals}
Fix a vertex $v \in V$ and consider time intervals $[a_{e}, b_{e}]$ with respect to $v$, for all $e \in E$. Denote the set of all intervals containing value $x$ by $I_x$. 
Then, computing the value
$$\min_{0 \le t \le \ldrtime(v)} \Delta \bag(v, t),$$
is equivalent to computing the minimum over all values $|I_x|$ for $x$ in the range $[0, \ldrtime(v)]$.
\end{observation}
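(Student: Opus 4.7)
The plan is to reduce the observation to the pointwise identity $\Delta\bag(v,t) = |I_t|$ for every $t \in [0,\ldrtime(v)]$. Once that identity is established, taking $\min$ over $t$ on both sides yields the claimed equivalence of the two minima, so the observation follows immediately.

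To prove the identity, I would unpack the definitions. By Lemma~\ref{lem:time-intervals}, an edge $e \in E$ contributes to $I_t$ precisely when $t \in [a_e, b_e]$, which by the construction of the time interval with respect to $v$ is exactly the set of times at which $e$ is a crossing edge of $\bag(v,t)$, i.e., one endpoint of $e$ lies in $\bag(v,t)$, the other lies outside, and $w(e) > t$ so that $e$ has not yet been contracted. On the other hand, $\Delta\bag(v,t)$, by Definition~\ref{def:bag}, counts exactly the edges currently crossing the boundary of the super-vertex representing $\bag(v,t)$ at time $t$. Hence both quantities enumerate the same set of edges, giving $|I_t| = \Delta\bag(v,t)$ for every $t$ in the range.

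The main step to justify carefully will be the monotonicity underlying the "consecutive interval" statement of Lemma~\ref{lem:time-intervals}, which in turn is what allows the identity to be read off cleanly: since $\bag(v,\cdot)$ is nondecreasing in $t$, each edge transitions through at most three phases—both endpoints inside the bag's future closure but not yet on the boundary, currently on the boundary, and finally absorbed into the bag—so the set of times at which $e$ contributes to $\nbag(v,\cdot)$ is an interval $[a_e, b_e]$. With this in hand, no further calculation is required, and the only mild care needed is to keep the interpretation of "$e$ belongs to $\nbag(v,t)$" consistent with $\Delta\bag$ being the contraction-degree of the bag, rather than a count of distinct outside vertices.
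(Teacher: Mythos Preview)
Your approach is correct and matches what the paper intends: the paper states this observation without proof, treating it as immediate from Lemma~\ref{lem:time-intervals} and the definition of $\Delta\bag$. Establishing the pointwise identity $\Delta\bag(v,t)=|I_t|$ by unpacking the definitions is the natural route, and your closing remark about reading $\nbag$ as a (multi)set of crossing edges rather than a set of distinct outside vertices is apt---that is indeed how the paper uses it in Lemma~\ref{lem:time-intervals} and throughout, despite the literal wording of Definition~\ref{def:bag}.
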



Since this task is `linear', it can be computed efficiently in AMPC.  We now discuss how to compute the intervals for all edges in $E$.

\begin{lemma}\label{lem:intervals-computation}
Given a tuple $(T, \ell, E, L_{i})$, there exists an AMPC algorithm that for every vertex $v \in L_{i}$ and every edge $e \in E$ calculates the maximal, non-empty time interval $[e_{a}, e_{b}] \subseteq [0, \ldots, \ldrtime(v)]$ of $e$ with respect to $v$.
The algorithm works in $O(1/\epsilon)$ rounds, uses $O(n^{\epsilon})$ local memory and $O((n + m) \log^2{n})$ global memory.
\end{lemma}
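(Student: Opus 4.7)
The plan is to first observe that the bag inclusion time $\tau_x(v) := \min\{t : x \in \bag(v,t)\}$ equals $W(v,x)$, the maximum weight of any edge on the unique tree path $P(v,x)$ in $T$. This is immediate from Definition~\ref{def:bag}: $x \in \bag(v,t)$ iff every edge of $P(v,x)$ has weight at most $t$. Consequently, $e=(x,y)$ belongs to $\nbag(v,t)$ precisely when exactly one of $x,y$ has joined $v$'s bag and $e$ has not yet been contracted, giving the closed-form endpoints
\begin{align*}
a_e &= \min(\tau_x(v),\tau_y(v)), \\
b_e &= \min\bigl(\max(\tau_x(v),\tau_y(v)) - 1,\ w(e) - 1,\ \ldrtime(v)\bigr),
\end{align*}
and $[a_e,b_e]$ is non-empty iff $a_e \le b_e$ (which, since weights are unique, requires $\tau_x(v) \ne \tau_y(v)$ and $\min(\tau_x(v),\tau_y(v)) \le \min(w(e)-1, \ldrtime(v))$).

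Next I would precompute the heavy-light decomposition of $T$ with an RMQ data structure supporting maximum-weight path queries --- a straightforward adaptation of Theorem~\ref{thm:heavy-light-rmq} with max replacing min --- in $O(1/\epsilon)$ rounds and $O(n\log n)$ total space. Each subsequent query $W(u,v)$ uses $O(\log n)$ sequential adaptive reads, so evaluating the closed-form expression above for a given pair $(e,v)$ takes $O(1)$ AMPC rounds. The remaining task is to enumerate only the small set of pairs $(e,v)$ that produce non-empty intervals, without exceeding the memory budget.

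This enumeration is the main obstacle. By Observation~\ref{obs:disjoint}, the bags $\bag(v,\ldrtime(v))$ for distinct $v \in L_i$ occupy distinct connected components of $T^i$, so each vertex $x$ has at most one candidate owner $v(x) \in L_i$ (namely the unique $L_i$ vertex, if any, in the $T^i$-component containing $x$), and each edge $e=(x,y)$ contributes at most two non-empty intervals, those for $v(x)$ and $v(y)$. Moreover, any $x$ with $\ell(x) < i$ cannot lie in $\bag(v,\ldrtime(v))$ for $v \in L_i$, since this would make $v$ lose leader status already. To compute $v(x)$ for all $x$ with $\ell(x) \ge i$ in parallel, I would mark each component of $T^i$ with its $L_i$ leader by adaptively propagating identities along $T$: the boundary edges of each component are precisely the (at most two) tree edges extracted via Lemma~\ref{lem:two-edges-decom}, and within a component, reachability is a direct application of the adaptive tree-rooting primitive of Lemma~\ref{lem:orient}, yielding $v(x)$ in $O(1/\epsilon)$ rounds using $O(n\log^2 n)$ space.

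With $v(x)$ known for every $x$, the final phase processes edges in parallel: each $e = (x,y) \in E$ retrieves $v(x), v(y)$ from the hash table, performs $O(1)$ RMQ queries to obtain $\tau_x$ and $\tau_y$ for each candidate $v$, evaluates the closed-form endpoints, and emits the (at most two) non-empty intervals. The round complexity is $O(1/\epsilon)$, local memory is $O(n^\epsilon)$, and the per-level output contains $O(m)$ intervals; summed across the $O(\log^2 n)$ levels handled in parallel via Lemma~\ref{lem:to-level-reduction}, the total global memory is $O((n+m)\log^2 n)$, matching the claim of the lemma.
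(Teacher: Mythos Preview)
Your proposal is correct and takes essentially the same approach as the paper. Both arguments identify, for each edge endpoint $x$, the unique candidate leader in $L_i$ (your $v(x)$, the paper's $r_x$) by rooting the components of $T^i$, then use heavy--light decomposition with RMQ to obtain the bag-entry time of each endpoint (your $\tau_x(v)=W(v,x)$, the paper's $\mw(x)$), and finally read off the interval. The only cosmetic differences are that you build the RMQ structure once on the full tree $T$ rather than on $T^i$, and you present a single closed-form formula (with a harmless redundant $w(e)-1$ term) in place of the paper's four-case split on whether $r_x,r_y$ are defined and equal; these are equivalent packagings of the same computation.
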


\begin{proof}
The algorithm starts by removing vertices $L_{\le i - 1}$ with all edges adjacent to them from tree $T$ which gives us $T^{i}$. Given decomposition $\ell$, this can be done in $O(1)$ rounds. By definition~\ref{def:lowdepth}, vertices $L_{i} = \{v_{1}, \ldots, v_{q}\}$ belong to different  trees. Next, the algorithm roots these trees that contain vertices from $L_{i}$ in $v_{1}, \ldots, v_{q}$ and calculates heavy-light decompositions of each  tree together with an RMQ structure on heavy paths. By Theorem~\ref{thm:heavy-light-rmq}, this can be done in $O(1 / \epsilon)$ within our memory constraints.

Let us now fix an edge $(x, y) =: e \in E$. Importantly, we consider here all edges of the graph $G$, not only tree edges $E_{T}$. 
Let $r_{x} \in \{\perp, v_{1}, \ldots, v_q\}$ be the root of this tree in $T^{i}$ to which the vertex $x$ belongs. 
If the vertex $x$ does not belong to any tree, that is $x \in L_{\le i - 1}$ since these are the vertices that have been removed, we write $r_{x} = \perp$.
Let $\mw(x)$ be the minimum weight over edges of path that connects vertex $x$ with vertex $r_{x}$ in graph $T^{i}$. Observe, that unless $r_{x} = \perp$ this value is well defined as $T^{i}$ is a collection of tree and there is exactly one path connecting is $x$ and $r_{x}$ in this graph. We extend the above definitions on $y$ in the natural way.

By Theorem~\ref{thm:heavy-light-rmq}, computing $r_{x}, r_{y}, \mw(x), \mw(y)$ takes $O(\log n)$ queries to the memory for a single edge. Therefore, we can compute these values for all edges $e \in E$ in $O(1)$ round under the conditions assumed in this lemma.

Observe that edge $e = (x, y)$ can have non-empty time intervals only with vertices $r_{x}$ and $r_{y}$. Any other vertex from $L_{i}$ belongs to a different connected component in graph $T_{i}$ and therefore its bag cannot contain $x$ nor $y$ while the vertex is the leader of its bag. Thus, all that is left to show is how $\mw(x)$ and $\mw(y)$ can help determine the time intervals in which edge $e$ belongs to $\nbag(r_{x})$ and $\nbag(r_{y})$. 
We consider the following cases.

\noindent \textit{Case $1.$} $r_{x} = \perp, r_{y} = \perp$. In this case, edge $(x,y)$ has no effect on degrees of bags of vertices $r_{x}$ and $r_{y}$ at any time. The algorithm skips such edges.
    
\noindent \textit{Case $2.$} $r_{x} = \perp, r_{y} \neq \perp$, (or symmetrically $r_{x} \neq \perp, r_{y} = \perp$). Since $T^{i}$ is a subset of the minimum spanning tree $T$, thus the first time when vertex $x$ belongs to $r_{x}$'s bag is the time $\mw(x)$. Now, $y$ starts to belong to $r_{x}$'s bag either at the time being equal to the maximal weight on the path between $r_{x}$ and $y$. Observer however, that this path has to contain vertices that does not belong to $T^{i}$ and therefore the maximal weight has to be greater than $\ldrtime(r_{x})$. What follows the correct interval in this case is:
$$[\mw(x), \ldrtime(r_{x})],$$
or an empty interval if $\mw(x) >  \ldrtime(r_{x})$.
    
\noindent \textit{Case $3.$} $r_{x} \neq \perp, r_{y} \neq \perp$. We distinguish two sub-cases:

\textit{Subcase $a)$} $r_{x} \neq r_{y}$. Since the path between $r_{x}$ and $r_{y}$ does not belong to $T^{i}$ we can proceed analogously to the \textit{Case} $2.$. The correct interval for vertex $x$ is
$$[\mw(x), \ldrtime(r_{x})],$$
or an empty interval if $\mw(x) >  \ldrtime(r_{x})$, while for vertex $y$ it is
$$[\mw(y), \ldrtime(r_{y})],$$
or an empty interval if $\mw(y) >  \ldrtime(r_{y})$

\textit{Subcase $b)$} $r_{x} = r_{y}$. Since $T^{i}$ is a subgraph of the minimum spanning tree $T$, we have that $\min(\mw(x), \mw(y))$ is the first time when at least one of $x$ and $y$ belongs to $r_{x}$'s bag, while the first time when both belong to $r_{x}$'s bag is $\max(\mw(x), \mw(y))$. Thus, the proper time interval for this edge:

\begin{align*}
[\min(\mw(x), \mw(y)),& \max(\mw(x), \mw(y))]
\\&\cap [1, \ldots, \ldrtime(r_{x})]
\end{align*}

We obtain that for every edge $e \in E$ all non-empty time intervals in which this edge belong to $\nbag$ of some vertex $v$ can be computed in $O(\log(n))$ queries to the memory. Therefore, computing these values for all edges can be done in constant number of rounds assumed $O(m \log n)$ total memory.
\end{proof}

Implementing Observation~\ref{obs:intervals} is purely technical.
\begin{lemma}\label{lem:intervals-imp}
There exist an AMPC algorithm that given a set of integer intervals $I = \{ [p_{1}, k_{1}], \ldots, [p_{n}, k_{n}] \}$, $\forall_{i \in [n]} [p_{i}, k_{i}] \subseteq [0, R]$ finds the minimal number of intersecting intervals in $O(1 / \epsilon)$ rounds using $O(n^{\epsilon})$ local memory and $O(n \log^2 n)$ total memory.
\end{lemma}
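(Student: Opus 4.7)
The plan is to reduce the problem to a standard sweepline computation, which decomposes cleanly into sorting, prefix sums, and a final minimum, each of which has known efficient AMPC implementations.

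First I would transform the $n$ intervals into $2n$ events as follows: for each interval $[p_i, k_i]$, emit an event $(p_i, +1)$ and an event $(k_i + 1, -1)$. I sort these events lexicographically by position (ties broken by putting $+1$ before $-1$, though this choice is unimportant because the $-1$ events are placed at $k_i + 1$ rather than $k_i$). After sorting, I compute the prefix sums of the $\pm 1$ values across the sorted event sequence. A standard sweepline argument shows that if $\sigma(j)$ denotes the prefix sum after the first $j$ events and those events cover all positions $\le x$, then $\sigma(j) = |I_x|$, the count of intervals containing $x$. Because $|I_x|$ is piecewise constant in $x$ and can only change at event positions, the minimum of $|I_x|$ over $[0, R]$ equals $\min_j \sigma(j)$ (with the convention $\sigma(0) = 0$, which accounts for any uncovered suffix or prefix of $[0,R]$ where the count is $0$).

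For the AMPC implementation, sorting $2n$ tuples by key can be done in $O(1/\epsilon)$ rounds using $O(n^\epsilon)$ local memory and $O(n \log n)$ total space via the standard MPC/AMPC sorting primitive (which AMPC inherits from MPC by \cite{behnezhad2019massively}). The prefix sums of the sorted sequence of $\pm 1$ values are computed in $O(1/\epsilon)$ rounds with $O(n^\epsilon)$ local memory and $O(n\log n)$ total space by Theorem~\ref{thm:prefix-sums}. Finally, the minimum of the $2n$ prefix-sum values (together with the initial value $0$) is obtained by a tournament-style reduction: each machine holds $O(n^\epsilon)$ values, so we shrink the active set by a factor of $n^\epsilon$ per round, requiring only $O(1/\epsilon)$ rounds. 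Summing these costs keeps us within $O(1/\epsilon)$ rounds, $O(n^\epsilon)$ local memory, and $O(n\log^2 n)$ total memory (in fact $O(n \log n)$ suffices, so the stated budget is loose).

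The only conceptual point requiring care is the equivalence between the minimum over the continuous range $[0, R]$ and the minimum over the finite set of event positions, which is handled by the observation that $|I_x|$ is constant on each interval between consecutive event positions, so including the sentinel value $0$ before the first event suffices to catch any uncovered region. No step is a genuine obstacle: the entire proof is essentially an assembly of the sorting, prefix-sum, and reduction primitives already established for AMPC, applied to the standard event-based encoding of interval depth.
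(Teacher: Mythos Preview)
Your proposal is correct and essentially identical to the paper's proof: both encode intervals as $\pm 1$ events, sort, and reduce to a minimum-prefix-sum computation via Theorem~\ref{thm:prefix-sums}. The only cosmetic difference is that you shift the closing events to $k_i+1$ whereas the paper keeps them at $k_i$ and instead compresses consecutive same-position events before taking prefix sums; these are equivalent ways of handling ties at integer endpoints.
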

\begin{proof}
First, the algorithm sorts the set $\{p_{1}, k_{1}, \ldots, p_{n}, k_{n} \}$ of all endpoints of these intervals in non-increasing order (ties are resolved with priority for endpoints $p_{i}$) obtaining a sequence $S$.
Consider assigning to every endpoint $p_{i}, i \in [n]$ from sequence $S$ value $+1$ and to every endpoint $k_{i}, i \in [n]$ value $-1$. This operation leads to a sequence $S'$ of pairs of format $($endpoint, value$)$.  Finally, let $S''$ be a sequence constructed from $S'$ in which all consecutive pairs that have the same first coordinate are compressed to a single pair in which the first coordinate is preserved and the second is the sum of second coordinates of contracted pairs. It can be observed that finding the minimal prefix sum of sequence made from second coordinates of pairs in $S''$ is equivalent to the minimal number of intersecting intervals. The construction of sequence $S''$ requires only sorting and contracting consecutive pairs which can be implemented in $O(1 / \epsilon)$ rounds in AMPC with the memory constrains stated in the lemma. To find the minimal prefix sum we use Theorem~\ref{thm:prefix-sums} which completes the proof.
\end{proof}
The above discussion is summarized in the following lemma.
\begin{lemma}\label{lem:bag-n-calculating}
There exists an AMPC algorithm that given a tuple $(T, \ell, E, L_{i})$ calculates the value
$$\min_{v \in L_{i}} \min_{0 \le t \le \ldrtime(v)} \Delta \bag(v, t)$$
in $O(1 / \epsilon)$ rounds using $O(n^{\epsilon})$ local memory and $O((n + m) \log^2 n)$ total memory.
\end{lemma}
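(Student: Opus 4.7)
The plan is to compose the tools already developed in this subsection into a single pipeline. First, I would invoke Lemma~\ref{lem:ldr-computation} on the input tuple $(T, \ell, E, L_{i})$ to obtain $\ldrtime(v)$ for every $v\in L_{i}$; by that lemma this takes $O(1/\epsilon)$ AMPC rounds within the memory budget. Next, I would apply Lemma~\ref{lem:intervals-computation} to compute, for every edge $e\in E$ and every vertex $v\in L_{i}$, the maximal non-empty time interval $[a_{e},b_{e}]\subseteq[0,\ldrtime(v)]$ during which $e$ belongs to $\nbag(v,t)$. By the case analysis in the proof of that lemma, each edge $e=(x,y)$ produces a non-empty interval only for the vertices $r_{x}$ and $r_{y}$ that root the components of $T^{i}$ containing $x$ and $y$, so the total number of non-empty intervals is $O(m)$.

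Once the intervals are in place, I would partition them by their owner vertex $v\in L_{i}$ and feed each bucket, separately and in parallel, into the subroutine from Lemma~\ref{lem:intervals-imp}. By Observation~\ref{obs:intervals}, running that subroutine on the intervals belonging to a vertex $v$ returns exactly $\min_{0\le t\le \ldrtime(v)}\Delta\bag(v,t)$. A final AMPC $\min$-aggregation over all $v\in L_{i}$ then yields the quantity the lemma asks for. Aggregation and bucketing are standard sorting/grouping primitives implementable in $O(1/\epsilon)$ AMPC rounds.

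For the resource bounds, each of the three building blocks costs $O(1/\epsilon)$ rounds and uses $O(n^{\epsilon})$ memory per machine, so the pipeline also fits in $O(1/\epsilon)$ rounds with the same local memory. For total memory, the intervals computed in Lemma~\ref{lem:intervals-computation} live in $O((n+m)\log^{2}n)$ global space; the parallel invocations of Lemma~\ref{lem:intervals-imp} across different $v\in L_{i}$ operate on disjoint buckets of intervals (each edge generates at most two intervals overall at this level), so their combined working memory is $O((n+m)\log^{2}n)$ as well. The only thing to verify carefully is that these parallel calls do not step on one another: because Observation~\ref{obs:disjoint} guarantees that the connected components of $T^{i}$ containing distinct leaders $v,u\in L_{i}$ are disjoint, the edges that can ever sit in $\nbag(v,t)$ for $t\le\ldrtime(v)$ are essentially disjoint from those for $u$ (up to a constant factor coming from the two endpoints of a cross-component edge), and the partition by owner can be realized by a single sort. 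Combining the round counts and the memory accounting yields the claimed $O(1/\epsilon)$ rounds, $O(n^{\epsilon})$ local memory, and $O((n+m)\log^{2}n)$ total memory, completing the proof.

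The main obstacle I anticipate is the bookkeeping for the parallel invocations of Lemma~\ref{lem:intervals-imp}: I must ensure that grouping the $O(m)$ intervals by their owner $v\in L_{i}$ and running $|L_{i}|$ independent copies of the interval-intersection routine really does stay inside $O((n+m)\log^{2}n)$ total memory rather than blowing up by an extra factor per level. This is precisely what Observation~\ref{obs:disjoint} and the two-edge boundary bound of Lemma~\ref{lem:two-edges-decom} are designed to control, and it is the step that most needs to be written out carefully.
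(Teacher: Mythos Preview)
Your proposal is correct and follows essentially the same pipeline as the paper's proof: compute $\ldrtime(v)$ via Lemma~\ref{lem:ldr-computation}, compute the $O(m)$ non-empty time intervals via Lemma~\ref{lem:intervals-computation}, bucket them by their owner vertex $v\in L_i$, apply Lemma~\ref{lem:intervals-imp} in parallel to each bucket (justified by Observation~\ref{obs:intervals}), and take a final minimum. Your attention to the disjointness of the buckets and the $O(m)$ bound on the total number of intervals is exactly the accounting the paper relies on; the only superfluous ingredient is the appeal to Lemma~\ref{lem:two-edges-decom}, which is already absorbed into Lemma~\ref{lem:ldr-computation} and not needed here.
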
 
\begin{proof}
Using Lemma~\ref{lem:ldr-computation} we are able to calculate value $\ldrtime$ for every $v \in L_{i}$ in constant number of rounds. By Lemma~\ref{lem:intervals-computation} we can calculate time all non-empty time intervals for every $e \in E$ and every $v \in L_{i}$. This requires $O(m \log^{2} n)$ total memory. Each time interval $[a, b]$ can be assigned a vertex $v$ with respect to whom it was calculated. Then, we group time intervals with respect to vertices from $L_{i}$ they were calculated.
This can be done in a single round with $O(m \log^2 n)$ global memory since there are only $O(m)$ non-empty time intervals. 
Finally, Lemma~\ref{lem:intervals-imp} guarantees that we can compute, for every $v \in L_{i}$, the minimum number of intersecting intervals in $O(1 / \epsilon)$ rounds with total memory proportional to the number of these intervals. Therefore, assumed $O(m \log^{2} n)$ global memory we can extend the last computation to a parallel computation for $v \in L_{i}$ while preserving the round complexity. By Observation~\ref{obs:intervals} this is equivalent to calculating
$$ \min_{0 \le t \le \ldrtime(v)} \Delta \bag(v, t),$$
for every $v \in L_{i}$. Since the minimum of the above values over $v \in L_{i}$ can be computed in a single round, the lemma is proven.
\end{proof}

\ignore{
\begin{algorithm}[ht]
\SetAlgoLined
\KwData{A tuple $(T, D_{T}, E, L_{i})$.}
\KwResult{$\min_{v \in L_{i}} \min_{0 \le t \le \ldrtime(v)} \Delta \bag(v, t)$}
\ForEach{edge $(x,y) \in E(G)$, in parallel,}{
    Compute values $ldr(x)$, $ldr(y)$, $\ldrtime(ldr(x))$, $\ldrtime(ldr(x))$ and values $mw(x)$, $mw(y)$\;
    Based on these values compute time interval $[a_{x}, b_{x}]$ ($[a_{x}, b_{x}]$ respectively) that determines the first and the last moment when edge $(x,y)$ becomes outer edge to bag in which $ldr(x)$ is the leader vertex (or vertex $ldr(y)$ is the leader respectively)\;
}
Sort in ascending order all endpoints of the time intervals computed in the previous step\;

\ForEach{vertex $v_{j} \in L_{i}$, in parallel,}{
Let $I_{v_{j}}$ be the set of time intervals that correspond to edges incident to $v_{j}$'s connected component in graph $T \setminus L_{i}$\;
Treated a beginning of an interval from $I_{v_{j}}$ as $+1$ and an end of an interval from $I_{v_{j}}$ as $-1$, compute the minimum prefix sum over the sequence of sorted endpoints using Behnezhad's algorithm~\cite{behnezhad2020parallel}\;
Assign the computed minimum $MSC(v_{j})$\;
}
\KwRet{values $MSC(v_{j})$ for each $v_{j} \in L_{i}$}\;
\caption{\textsf{SingeltionCutsSameLevel} \\(An algorithm that computes ..)}
\label{alg:smallestsingletoncutsingle}
\end{algorithm}
}

\subsection{The final algorithm.}
We are now able to prove Theorem~\ref{thm:smallest-singleton} and present the final algorithm, \textsf{SmallestSingletonCut}, that calculates the smallest singleton cut that appears in the contraction process of $G$. The pseudcode can be found in Figure~\ref{alg:smallestsingletoncut}, while the proof of correctness is below.


\begin{algorithm}
\SetAlgoLined
\KwData{Graph $G = (V, E, w : V \rightarrow [n^3])$.}
\KwResult{Size of the smallest singleton cut.}
Compute the minimum spanning tree $T$ of $G$\; \label{line:spanning-tree}
Compute the low depth decomposition $D_{T}$ of $T$\;
Prepare $O(\log^{2} n)$ tuples $(T, D_{T}, E, L_{i}), i \in \left[\ceil{log^{2}n}\right]$\; \label{line:tuples-b}
\ForEach{tuple $(T, D_{T}, E, L_{i})$}{
    Compute: $\mathsf{lc}_i \leftarrow \min_{v \in L_{i}} \min_{0 \le t \le \ldrtime(v)} \Delta \bag(v, t)$\;
}

\KwRet{$\min(\mathsf{lc}_1, \ldots, \mathsf{lc}_{\left[\ceil{log^{2}n}\right]})$}\; \label{line:tuples-e}
\caption{\textsf{SmallestSingletonCut}}
\label{alg:smallestsingletoncut}
\end{algorithm}

\begin{proof}[Proof of Theorem~\ref{thm:smallest-singleton}]
The correctness follows from Observation~\ref{obs:bags-eq-contr} and Lemmas~\ref{lem:to-level-reduction} and~\ref{lem:bag-n-calculating}.
Also the implementation details of lines $\ref{line:tuples-b}-\ref{line:tuples-e}$  are discussed in the these two lemmas. To calculate minimum spanning tree in line~\ref{line:spanning-tree} we use Lemma~\ref{lem:orient} while the implementation of the low depth decomposition from the next line is given by Lemma~\ref{lem:treedecomp}. 
\end{proof}

\section{AMPC algorithm for approximated minimum $k$-cut}\label{k-min-cut}
In this section, we show that given an algorithm that calculates $2+\epsilon$ approximation of a min cut, one can construct $4 + \epsilon$ approximation of minimum k-cut.
Consider the following greedy algorithm, called \textsf{APX-SPLIT}, that extends the classic result of Saran and Vazirani~\cite{SaranV95}. The algorithm works sequentially. In each iteration, it extends the approximated solution with the smallest non-trivial approximation of the minimum cut of the graph available at a given moment. Being precise, assume that at the beginning $i$-th iteration the algorithm has split the graph $G$ into $\ell_{i}$ connected components (we start with $c_{1} = 1$). Then the algorithm calculates $(2+ \epsilon)$-approximation of the minimum cut in each connected component and enlarges the solution by the smallest of these cuts, thereby increasing the number of components by at least $1$. The algorithm ends after the first iteration such that the number of connected components after this iteration is at least $k$. The pseudocode of the algorithm is given in the Algorithm~\ref{alg:apx-split}.

\begin{algorithm}[ht]
\SetAlgoLined
\KwData{A graph $G = (V(G), E(G))$ and a parameter $k$.}
\KwResult{A $(4 + \epsilon)$-approximation of minimum $k$-cut.}
Initialize $D \to \emptyset$\;
\While{$G' := (V(G), E(G) \setminus \bigcup_{d \in D} d)$ has less than $k$ connected components} {
    Let $C_{1}, \ldots, C_{l}$ be the set connected of components of $G'$\;
    $d_i^* \gets \textsf{AMPC-MinCut}(C_i)$ for all $i\in [l]$\;
    $j \gets \arg\min_{i\in[l]} weight(d_i)$\;
    Add $d_{j}^*$ to $D$\;
  }
  \KwRet{set of cuts $D$}\;
 \caption{\textsf{APX-SPLIT} \\(A greedy algorithm computing an approximation of the minimum $k$-cut in AMPC)}
\label{alg:apx-split}
\end{algorithm}
 
We will show by generalizing the idea of Saran and Vazirani that the aforementioned greedy algorithm is $(4+ \epsilon)$-approximation minimum $k$-cut. 

\begin{customthm}{\ref{thm:vazirani'scuts}}
\thmvazirani
\end{customthm}

We define $comps (c_{1} \cup \ldots \cup c_{k}) $ the number of components of the graph $G$ after removing all edges from the set $c_{1} \cup \ldots \cup c_{k}$. 

\begin{proof}

The standard line of proof, proposed in~\cite{SaranV95} for the case whe exact minimum cut is used at each splitting step, is to compare the cut selected by the \textsf{APX-SPLIT} algorithm to the approximated minimum $k$-cut obtained from the Gomory-Hu tree. The main difficulty is that in our case, we use \textit{only} $(2+\epsilon)$ approximation in each splitting step. This makes our proof different and novel compared to~\cite{SaranV95}.
Let us set a Gomory-Hu tree $H = (V(H) = V(G), E(G))$ of the graph $G$. The Gomory-Hu tree is defined as follows.
\begin{definition}[Gomory and Hu~\cite{GomoryHu61trees}]
Consider an arbitrary graph $G$.
A weighted tree $H = (V(H), E(H))$ with the set of vertices being equal $V(G)$ is called a Gomory-Hu tree of $G$, if for every pair of different vertices $s, t \in V(G)$ the minimum weight on the path between $s$ and $t$ in the tree $H$ is equal to the minimum $s$-$t$ cut in graph $G$. The existence and construction of Gomory-Hu trees was shown in~\cite{GomoryHu61trees}.
\end{definition}

Let us order edges of the tree $H$ (or equivalently cuts in the $G$ graph) with respect to non-decreasing weights. Denote $b^*_{1}, \ldots, b^*_{l-1}$ the sequence of the first $l \le k-1$ edges (cuts equivalently) from this order such that corresponding cuts split $G$ graph into at least $k$ connected components. Let $b_{1}, \ldots, b_{k-1}$ be a dual sequence of cuts corresponding to that sequence of edges $b^*_{1}, \ldots, b^*_{l-1}$, with this addition that we put each cut $b_{i}$ this number of times it increases the number of connected components in $G$ and possibly cut some suffix of such generated sequence to obtain exactly $k-1$ cuts. For such construction we have the following.
\begin{observation}[Saran and Vazirani~\cite{SaranV95}]\label{obs:comps}
The sequence of cuts $b_{1}, \ldots, b_{k-1}$ satisfies:
\begin{enumerate}
    \item the sequence $|b_{1}|, \ldots, |b_{k-1}|$ is non-decreasing,
    \item $\forall_{i \in [k-1]} comps(b_{1} \cup \ldots \cup b_{i} ) > i.$
\end{enumerate}
\end{observation}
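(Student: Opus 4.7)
The plan is to derive both properties from a careful accounting of multiplicities in the construction of $(b_1, \ldots, b_{k-1})$. For each index $j$ in the prefix, let $m_j := \text{comps}(b_1^* \cup \ldots \cup b_j^*) - \text{comps}(b_1^* \cup \ldots \cup b_{j-1}^*)$, with the convention $\text{comps}(\emptyset) = 1$; this is the non-negative number by which additionally removing $b_j^*$ increases the number of components of $G$. By the minimality in the choice of the prefix length $l$, the partial sums satisfy $\sum_{j=1}^{l-2} m_j < k-1 \le \sum_{j=1}^{l-1} m_j$, so the sequence $(b_1, \ldots, b_{k-1})$ is well defined as $m_1$ copies of $b_1^*$, followed by $m_2$ copies of $b_2^*$, and so on, with the final block of copies of $b_{l-1}^*$ truncated to make the total length exactly $k-1$.

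Property 1 is then immediate. Since $b_1^*, \ldots, b_{l-1}^*$ are listed in non-decreasing weight order and each Gomory-Hu edge weight equals the corresponding cut size in $G$, we have $|b_1^*| \le |b_2^*| \le \ldots \le |b_{l-1}^*|$. Every copy of $b_j^*$ in $(b_1, \ldots, b_{k-1})$ contributes the same value $|b_j^*|$, and blocks are arranged in increasing $j$, so $|b_1|, |b_2|, \ldots, |b_{k-1}|$ is non-decreasing.

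For Property 2, fix $i \in [k-1]$ and let $j(i)$ be the unique index with $\sum_{j' < j(i)} m_{j'} < i \le \sum_{j' \le j(i)} m_{j'}$. Because the cuts in positions $1,\ldots,i$ of $(b_1,\ldots,b_{k-1})$ cover each $b_j^*$ with $j \le j(i)$ at least once (and contain no other distinct cut), the idempotence of set union yields
$$
b_1 \cup \ldots \cup b_i \;=\; b_1^* \cup \ldots \cup b_{j(i)}^*.
$$
Applying the telescoping definition of the $m_{j'}$'s,
$$
\text{comps}(b_1 \cup \ldots \cup b_i) \;=\; \text{comps}(b_1^* \cup \ldots \cup b_{j(i)}^*) \;=\; 1 + \sum_{j'=1}^{j(i)} m_{j'} \;\ge\; 1 + i \;>\; i.
$$

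The main care point is confirming that the truncation of the final block does not break the set-equality used in Property 2. This is resolved by the minimality of $l$: truncation only occurs within the block of copies of $b_{l-1}^*$, so for every $i \le k-1$ the index $j(i)$ satisfies $j(i) \le l-1$, and by the defining inequality for $j(i)$ at least one copy of $b_{j(i)}^*$ already appears among $b_1, \ldots, b_i$. With this observation both properties follow as displayed, completing the proof.
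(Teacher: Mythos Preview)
The paper does not give its own proof of this observation; it is simply attributed to Saran and Vazirani~\cite{SaranV95}, so there is no in-paper argument to compare against. Your derivation of Property~1 from the non-decreasing ordering of the Gomory--Hu edge weights is correct.

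For Property~2 there is a genuine gap. You assert that the cuts in positions $1,\ldots,i$ cover each $b_j^*$ with $j\le j(i)$ at least once, and from this derive the set equality $b_1\cup\cdots\cup b_i = b_1^*\cup\cdots\cup b_{j(i)}^*$. But by your own construction $b_{j'}^*$ appears with multiplicity $m_{j'}$, so if some $m_{j'}=0$ with $j'<j(i)$ then $b_{j'}^*$ is entirely absent from the $b$-sequence and the equality fails; your telescoping computation of $\text{comps}$ then no longer applies to $b_1\cup\cdots\cup b_i$. What actually makes the observation true is the Gomory--Hu structural fact you never invoke: each tree-edge cut is a minimum $s$--$t$ cut in the connected graph $G$ and hence has both sides connected, and together with the laminarity of the subtree sides this yields that removing any $r$ distinct tree-edge cuts from $G$ leaves exactly $r+1$ components, i.e.\ every $m_j=1$. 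This is the substantive lemma in~\cite{SaranV95}; once it is in hand the duplication-and-truncation in the construction of $(b_1,\ldots,b_{k-1})$ is vacuous, $j(i)=i$, and your argument goes through. Without it, the set-equality step is unjustified.
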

Saran and Vazirani also proved that such selected (and possibly refactored) sequence of cuts is $(2-\frac{2}{k})$-approximation of the minimum $k$-cut. 
\begin{theorem}[Saran and Vazirani~\cite{SaranV95}]
The cut $\bigcup_{i \in [k-1]} b_{i}$ is $(2-\frac{2}{k})$ approximation of the minimum $k$-cut of $G$.
\end{theorem}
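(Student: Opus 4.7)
The plan is to prove the two claims of Theorem~\ref{thm:vazirani'scuts} separately: the $(4+\epsilon)$ approximation ratio and the AMPC implementation. For the approximation ratio, I would compare the cuts $d_1, \ldots, d_{k-1}$ chosen by \textsf{APX-SPLIT} with the Gomory-Hu-derived sequence $b_1, \ldots, b_{k-1}$ already set up in the excerpt. Since the excerpt invokes Saran-Vazirani's bound $\sum_i |b_i| \leq (2-2/k)\,\opt$, it suffices to establish the per-step comparison $|d_i| \leq (2+\epsilon) |b_i|$. Summing yields $\sum_i |d_i| \leq (2+\epsilon)(2-2/k)\,\opt \leq (4+2\epsilon)\,\opt$, and rescaling $\epsilon \gets \epsilon/2$ produces the stated $(4+\epsilon)$ bound.

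The heart of the argument is the following structural lemma, which I would prove as the main new contribution: at the start of iteration $i$ of \textsf{APX-SPLIT}, the minimum cut among the components $C_1, \ldots, C_\ell$ of the residual graph $G \setminus \bigcup_{j < i} d_j$ has weight at most $|b_i|$. Granting this, each call to \textsf{AMPC-MinCut} on $C_r$ returns a cut of weight at most $(2+\epsilon)\lambda(C_r)$, so the \emph{smallest} such cut $d_i$ has weight at most $(2+\epsilon)\min_r \lambda(C_r) \leq (2+\epsilon)|b_i|$, as required. To prove the lemma I would fix the Gomory-Hu tree $H$ of the \emph{original} graph $G$ and use a pigeonhole-style counting argument on its edges: since $\mathrm{comps}(b_1 \cup \ldots \cup b_i) > i$ (Observation~\ref{obs:comps}) while the residual graph has at most $i$ components (each of the $i-1$ previous cuts increases the component count by at most one more than the previous state, but the point is the component count is bounded by a function that lets at least two $b_j$-separated vertices co-locate), at least one $b_j$ with $j \leq i$ must separate two vertices that still lie in a common component $C$ of the residual graph. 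The edges of $b_j$ surviving in $G$ then form a cut inside $C$ of weight at most $|b_j| \leq |b_i|$ (using the non-decreasing weight property from Observation~\ref{obs:comps}). Crucially, this argument uses only the combinatorial fact that each $d_j$ increases the component count by at least one, never requiring $d_j$ to be an exact min cut of its component --- exactly the generality needed for the approximate setting.

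For the AMPC implementation, the outer loop of \textsf{APX-SPLIT} runs at most $k-1$ times. In each iteration I would invoke \textsf{AMPC-MinCut} (Theorem~\ref{thm:mincut}) on every current component in parallel; since the components are vertex- and edge-disjoint subgraphs of $G$, their combined total memory footprint sums to $\tilde O(n + m)$, fitting the global memory budget, while each individual instance respects the $O(n^\epsilon)$ local memory constraint. Each parallel batch uses $O(\log\log n)$ rounds by Theorem~\ref{thm:mincut}, and selecting the minimum-weight cut across components is a standard $O(1/\epsilon)$-round aggregation. Totaling yields $O(k \log\log n)$ rounds. The main obstacle I anticipate is the careful verification of the structural lemma: Saran and Vazirani's original proof proceeds by strong induction using exact min-cut optimality of the $d_j$'s, and one must verify that their Gomory-Hu counting argument extends to sequences of cuts that are only approximately optimal. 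A secondary, more technical concern is handling the boundary iteration where $\ell$ first reaches $k$: the algorithm may terminate with strictly more than $k$ components, which is fine for approximation (the true $\opt$ is a $k$-cut, and additional components can only add edges already charged in the sum), but the bookkeeping of the indices $b_i$ versus $b_i^*$ (repeated cuts in the dual sequence) needs to be matched with the iteration count of \textsf{APX-SPLIT}.
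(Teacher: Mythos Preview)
You have targeted the wrong statement. The theorem quoted is the cited Saran--Vazirani result that $\bigcup_{i\in[k-1]} b_i$ is a $(2-\tfrac{2}{k})$-approximation of the minimum $k$-cut; the paper does not prove this, it simply invokes~\cite{SaranV95}. Your proposal is instead a proof sketch for Theorem~\ref{thm:vazirani'scuts} (the \textsf{APX-SPLIT} $(4+\epsilon)$ bound), which is the surrounding theorem. I will comment on your sketch for that theorem, since that is evidently what you intended.

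Your central structural lemma is false as stated. You claim that at the start of iteration $i$ the minimum cut among the residual components is at most $|b_i|$, and hence $|d_i|\le(2+\epsilon)|b_i|$. But the residual graph at the start of iteration $i$ has \emph{at least} $i$ components, not at most $i$: each previous cut adds at least one component, and an approximate cut returned by \textsf{AMPC-MinCut} can disconnect a component into three or more pieces. When this happens the pigeonhole argument against $b_1,\ldots,b_i$ breaks down. Concretely, take the path $v_1\text{--}v_2\text{--}v_3\text{--}v_4\text{--}v_5$ with edge weights $1,10,10,1$ and $k=4$; then $|b_1|=|b_2|=1$, $|b_3|=10$. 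A legal $(2+\epsilon)$-approximate first cut is the singleton cut $\{v_2,v_3,v_4\}$ of weight $2$, after which the residual already has three components and the only nontrivial one, $\{v_2,v_3,v_4\}$, has minimum cut $10$. Thus $|d_2|\ge 10 >(2+\epsilon)|b_2|=2+\epsilon$.

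The paper avoids this trap by \emph{not} attempting a per-step bound against $b_i$. Instead it tracks the actual component count $\#c_j=\mathrm{comps}(d_1\cup\cdots\cup d_j)$ and proves the cumulative inequality
\[
|d_1\cup\cdots\cup d_j|\le(2+\epsilon)\,|b_1\cup\cdots\cup b_{\min(k-1,\#c_j)}|
\]
by induction on $j$. The inductive step compares $\#c_i+1$ of the $b$-cuts against the $\#c_i$ components of the $d$-residual, which is the correct pigeonhole, and yields $|d_{i+1}|\le(2+\epsilon)|b_{\#c_i+1}|$; since $\#c_{i+1}\ge\#c_i+1$, this increment is absorbed into the right-hand union. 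Your AMPC implementation analysis is fine and matches the paper's.
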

Having established the crucial properties of Gomory-Hu trees and corresponding cuts we can proceed to the proof of correctness of the \textsf{APX-SPLIT} algorithm.

Let $d_{1}, \ldots, d_{m}$ be the successive cuts selected by the \textsf{APX-SPLIT} algorithm. Note that with each new cut, at least one new component appears in the graph thus $m \le k - 1$.
We will show that the sum of these cuts' sizes is not greater than the sum of sizes of cuts $b_{1}, \ldots, b_{k - 1}$. Let $\#c_{1}, \ldots, \#c_{m}$ be a sequence of numbers where $\#c_{i} := comps(d_{1} \cup \ldots \cup d_{i})$.
We will show by induction that 
$$\forall_{j \in [m]} |d_{1} \cup \ldots d_{j}| \le (2 + \epsilon)|b_{1} \cup \ldots \cup b_{\min(k - 1, \#c_{j})}|,$$.

The idea behind the induction step defined in the previous line can be explained as follows: inclusion of cuts from $d_1$ to $d_i$ are at least $(2 + \epsilon)$ approximation of cut generated by inclusion of cuts of such prefix of sequence $b_{1}, \ldots, b_{k}$ that split $G$ on $\#c_{i}$ for connected components.

For the basis of induction we see that in the first step of the \textsf{APX-SPLIT} algorithm chooses $(2+ \epsilon)$-approximation of the smallest cut in the whole graph $G$. The $b_{1}$ is an $s$-$t$ cut therefore we have $|d_ {1}| \le (2+ \epsilon) |b_ {1}|$ which implies that $|d_ {1}| \le (2+ \epsilon) |b_ {1} \cup \ldots b_{\min(k - 1, \#c_{1})}|$. 

Now consider $i \in [m - 1]$. Since $i < m$, we observe that $\#c_{i} < k$. Otherwise the algorithm  \textsf{APX-SPLIT} would have executed only $m-1$ iterations instead of $m$. Consider cuts $b_{1}, \ldots, b_{\#c_i + 1}$. From the Observation~\ref{obs:comps} we have that $comps(b_{1} \cup \ldots \cup b_{\#c(i) + 1}) > \#c(i)$. On the other hand $comps(d_{1} \cup \ldots d_{i}) = \#c_{i}$. Since  $comps(b_{1} \cup \ldots \cup b_{\#c_{i} + 1}) > comps(d_{1} \cup \ldots d_{i})$ then there must be a cut $b_ {j}$, $j \in [\#c_{i} + 1]$ that is not covered by the first $i$ cuts from the sequence $d_{1}, \ldots, d_{m}$. Namely, we can choose $j$ such that $b_{j} \nsubseteq d_{1} \cup \ldots \cup d_{i}$. Moreover, since $b_{j}$ is an $s$-$t$ cut in the graph $G$ (with all edges included), thus it must split at least one connected component of the graph $G = (V, E \setminus (d_{1} \cup \ldots d_{i}))$ into two non-empty parts. Thus this cut is considered in the $i+1$-th iteration of the \textsf{APX-SPLIT} algorithm, which implies that $|d_{i + 1}| \le (2 + \epsilon)|b_{j}| \le (2 + \epsilon)|b_{\#c_{i} + 1}|$. 
Since $\#c_{i} + 1 \le \#c_{i + 1}$, we conclude that $|d_{1} \cup \ldots d_{i + 1}| \le (2 + \epsilon)|b_{1} \cup \ldots \cup b_{\min(k - 1, \#c_{i + 1})}|$, which proves the inductive step. 
Now, we see from Theorem~\ref{thm:vazirani'scuts} that the solution of $b_{1}, \ldots, b_{k - 1} $ is an $(2- \frac{2}{k})$-approximation of the minimum $k$-cut. Thus the solution $d_{1} \ldots d_{k-1}$ is $(2 + \epsilon)(2 - \frac{2}{k}) = $  approximation of the minimum $k$-cut. This proves the correctness of the algorithm.

It remains to be noted that a single iteration of the algorithm can be performed in $\bigO(\log\log n)$ rounds in the AMPC model with $\bigO(n^{\epsilon})$ memory per machine and in total memory $\bigO(m)$. The dominant operation is the calculation of a $(2+\epsilon)$ approximation of the minimum cut in each of the components. Its performance is analyzed in Theorem~\ref{thm:mincut}. The calculation of the smallest of all approximated cuts corresponding to different components is a standard operation and can be performed in $O(1)$ rounds. Also, Behnezhad et. al in~\cite{behnezhad2020parallel} showed that the number of components of a graph can be determined in $\bigO(1)$ rounds in AMPC with $\bigO(n^{\epsilon})$ memory per machine and $\bigO(m)$ total memory. This completes the performance analysis of the algorithm.
\end{proof}

\bibliographystyle{ACM-Reference-Format}
\bibliography{references}


\begin{thebibliography}{21}


\ifx \showCODEN    \undefined \def \showCODEN     #1{\unskip}     \fi
\ifx \showDOI      \undefined \def \showDOI       #1{#1}\fi
\ifx \showISBNx    \undefined \def \showISBNx     #1{\unskip}     \fi
\ifx \showISBNxiii \undefined \def \showISBNxiii  #1{\unskip}     \fi
\ifx \showISSN     \undefined \def \showISSN      #1{\unskip}     \fi
\ifx \showLCCN     \undefined \def \showLCCN      #1{\unskip}     \fi
\ifx \shownote     \undefined \def \shownote      #1{#1}          \fi
\ifx \showarticletitle \undefined \def \showarticletitle #1{#1}   \fi
\ifx \showURL      \undefined \def \showURL       {\relax}        \fi
\providecommand\bibfield[2]{#2}
\providecommand\bibinfo[2]{#2}
\providecommand\natexlab[1]{#1}
\providecommand\showeprint[2][]{arXiv:#2}

\bibitem[\protect\citeauthoryear{Andoni, Nikolov, Onak, and
  Yaroslavtsev}{Andoni et~al\mbox{.}}{2014}]%
        {andoni2014parallel}
\bibfield{author}{\bibinfo{person}{Alexandr Andoni},
  \bibinfo{person}{Aleksandar Nikolov}, \bibinfo{person}{Krzysztof Onak}, {and}
  \bibinfo{person}{Grigory Yaroslavtsev}.} \bibinfo{year}{2014}\natexlab{}.
\newblock \showarticletitle{Parallel algorithms for geometric graph problems}.
  In \bibinfo{booktitle}{\emph{STOC}}. \bibinfo{pages}{574--583}.
\newblock


\bibitem[\protect\citeauthoryear{Behnezhad, Dhulipala, Esfandiari, Lacki, and
  Mirrokni}{Behnezhad et~al\mbox{.}}{2019a}]%
        {behnezhad2019near}
\bibfield{author}{\bibinfo{person}{Soheil Behnezhad}, \bibinfo{person}{Laxman
  Dhulipala}, \bibinfo{person}{Hossein Esfandiari}, \bibinfo{person}{Jakub
  Lacki}, {and} \bibinfo{person}{Vahab~S. Mirrokni}.}
  \bibinfo{year}{2019}\natexlab{a}.
\newblock \showarticletitle{Near-Optimal Massively Parallel Graph
  Connectivity}. In \bibinfo{booktitle}{\emph{FOCS}}.
  \bibinfo{pages}{1615--1636}.
\newblock


\bibitem[\protect\citeauthoryear{Behnezhad, Dhulipala, Esfandiari, Lacki,
  Mirrokni, and Schudy}{Behnezhad et~al\mbox{.}}{2019b}]%
        {behnezhad2019massively}
\bibfield{author}{\bibinfo{person}{Soheil Behnezhad}, \bibinfo{person}{Laxman
  Dhulipala}, \bibinfo{person}{Hossein Esfandiari}, \bibinfo{person}{Jakub
  Lacki}, \bibinfo{person}{Vahab~S. Mirrokni}, {and} \bibinfo{person}{Warren
  Schudy}.} \bibinfo{year}{2019}\natexlab{b}.
\newblock \showarticletitle{Massively Parallel Computation via Remote Memory
  Access}. In \bibinfo{booktitle}{\emph{The 31st {ACM} on Symposium on
  Parallelism in Algorithms and Architectures, {SPAA} 2019, Phoenix, AZ, USA,
  June 22-24, 2019}}, \bibfield{editor}{\bibinfo{person}{Christian Scheideler}
  {and} \bibinfo{person}{Petra Berenbrink}} (Eds.). \bibinfo{publisher}{{ACM}},
  \bibinfo{pages}{59--68}.
\newblock
\urldef\tempurl%
\url{https://doi.org/10.1145/3323165.3323208}
\showDOI{\tempurl}


\bibitem[\protect\citeauthoryear{Behnezhad, Dhulipala, Esfandiari, Lacki,
  Mirrokni, and Schudy}{Behnezhad et~al\mbox{.}}{2020}]%
        {behnezhad2020parallel}
\bibfield{author}{\bibinfo{person}{Soheil Behnezhad}, \bibinfo{person}{Laxman
  Dhulipala}, \bibinfo{person}{Hossein Esfandiari}, \bibinfo{person}{Jakub
  Lacki}, \bibinfo{person}{Vahab~S. Mirrokni}, {and} \bibinfo{person}{Warren
  Schudy}.} \bibinfo{year}{2020}\natexlab{}.
\newblock \showarticletitle{Parallel Graph Algorithms in Constant Adaptive
  Rounds: Theory meets Practice}.
\newblock \bibinfo{journal}{\emph{Proc. {VLDB} Endow.}} \bibinfo{volume}{13},
  \bibinfo{number}{13} (\bibinfo{year}{2020}), \bibinfo{pages}{3588--3602}.
\newblock


\bibitem[\protect\citeauthoryear{Behnezhad, Hajiaghayi, and Harris}{Behnezhad
  et~al\mbox{.}}{2019c}]%
        {behnezhad2019exponentially}
\bibfield{author}{\bibinfo{person}{Soheil Behnezhad},
  \bibinfo{person}{MohammadTaghi Hajiaghayi}, {and} \bibinfo{person}{David~G.
  Harris}.} \bibinfo{year}{2019}\natexlab{c}.
\newblock \showarticletitle{Exponentially Faster Massively Parallel Maximal
  Matching}. In \bibinfo{booktitle}{\emph{FOCS}}. \bibinfo{pages}{1637--1649}.
\newblock


\bibitem[\protect\citeauthoryear{Chambers, Raniwala, Perry, Adams, Henry,
  Bradshaw, and Weizenbaum}{Chambers et~al\mbox{.}}{2010}]%
        {chambers2010flumejava}
\bibfield{author}{\bibinfo{person}{Craig Chambers}, \bibinfo{person}{Ashish
  Raniwala}, \bibinfo{person}{Frances Perry}, \bibinfo{person}{Stephen Adams},
  \bibinfo{person}{Robert~R. Henry}, \bibinfo{person}{Robert Bradshaw}, {and}
  \bibinfo{person}{Nathan Weizenbaum}.} \bibinfo{year}{2010}\natexlab{}.
\newblock \showarticletitle{FlumeJava: easy, efficient data-parallel
  pipelines}. In \bibinfo{booktitle}{\emph{ACM SIGPLAN}},
  \bibfield{editor}{\bibinfo{person}{Benjamin~G. Zorn} {and}
  \bibinfo{person}{Alexander Aiken}} (Eds.). \bibinfo{pages}{363--375}.
\newblock


\bibitem[\protect\citeauthoryear{Dean and Ghemawat}{Dean and Ghemawat}{2008}]%
        {dean2008mapreduce}
\bibfield{author}{\bibinfo{person}{Jeffrey Dean} {and} \bibinfo{person}{Sanjay
  Ghemawat}.} \bibinfo{year}{2008}\natexlab{}.
\newblock \showarticletitle{MapReduce: simplified data processing on large
  clusters}.
\newblock \bibinfo{journal}{\emph{Commun. ACM}} \bibinfo{volume}{51},
  \bibinfo{number}{1} (\bibinfo{year}{2008}), \bibinfo{pages}{107--113}.
\newblock


\bibitem[\protect\citeauthoryear{Dragojevic, Narayanan, and Castro}{Dragojevic
  et~al\mbox{.}}{2017}]%
        {dragojevic2017rdma}
\bibfield{author}{\bibinfo{person}{Aleksandar Dragojevic},
  \bibinfo{person}{Dushyanth Narayanan}, {and} \bibinfo{person}{Miguel
  Castro}.} \bibinfo{year}{2017}\natexlab{}.
\newblock \showarticletitle{Rdma reads: To use or not to use?}
\newblock \bibinfo{journal}{\emph{IEEE Data Eng. Bull.}} \bibinfo{volume}{40},
  \bibinfo{number}{1} (\bibinfo{year}{2017}), \bibinfo{pages}{3--14}.
\newblock


\bibitem[\protect\citeauthoryear{Dragojevi{\'c}, Narayanan, Castro, and
  Hodson}{Dragojevi{\'c} et~al\mbox{.}}{2014}]%
        {dragojevic2014farm}
\bibfield{author}{\bibinfo{person}{Aleksandar Dragojevi{\'c}},
  \bibinfo{person}{Dushyanth Narayanan}, \bibinfo{person}{Miguel Castro}, {and}
  \bibinfo{person}{Orion Hodson}.} \bibinfo{year}{2014}\natexlab{}.
\newblock \showarticletitle{FaRM: Fast remote memory}. In
  \bibinfo{booktitle}{\emph{$\{$USENIX$\}$ ($\{$NSDI$\}$)}}.
  \bibinfo{pages}{401--414}.
\newblock


\bibitem[\protect\citeauthoryear{Foundation}{Foundation}{[n.d.]}]%
        {hadoop}
\bibfield{author}{\bibinfo{person}{Apache~Software Foundation}.}
  \bibinfo{year}{[n.d.]}\natexlab{}.
\newblock \bibinfo{booktitle}{\emph{Hadoop}}.
\newblock
\newblock
\shownote{https://hadoop.apache.org/.}


\bibitem[\protect\citeauthoryear{Ghaffari and Nowicki}{Ghaffari and
  Nowicki}{2020}]%
        {ghaffari2020massively}
\bibfield{author}{\bibinfo{person}{Mohsen Ghaffari} {and}
  \bibinfo{person}{Krzysztof Nowicki}.} \bibinfo{year}{2020}\natexlab{}.
\newblock \showarticletitle{Massively Parallel Algorithms for Minimum Cut}. In
  \bibinfo{booktitle}{\emph{Proceedings of the 39th Symposium on Principles of
  Distributed Computing}}. \bibinfo{pages}{119--128}.
\newblock


\bibitem[\protect\citeauthoryear{Gomory and Hu}{Gomory and Hu}{1961}]%
        {GomoryHu61trees}
\bibfield{author}{\bibinfo{person}{R.~E. Gomory} {and} \bibinfo{person}{T.~C.
  Hu}.} \bibinfo{year}{1961}\natexlab{}.
\newblock \showarticletitle{Multi-Terminal Network Flows}.
\newblock \bibinfo{journal}{\emph{J. Soc. Indust. Appl. Math.}}
  \bibinfo{volume}{9}, \bibinfo{number}{4} (\bibinfo{year}{1961}),
  \bibinfo{pages}{551--570}.
\newblock
\showISSN{03684245}
\urldef\tempurl%
\url{http://www.jstor.org/stable/2098881}
\showURL{%
\tempurl}


\bibitem[\protect\citeauthoryear{Goodrich, Sitchinava, and Zhang}{Goodrich
  et~al\mbox{.}}{2011}]%
        {goodrich2011sorting}
\bibfield{author}{\bibinfo{person}{Michael~T Goodrich}, \bibinfo{person}{Nodari
  Sitchinava}, {and} \bibinfo{person}{Qin Zhang}.}
  \bibinfo{year}{2011}\natexlab{}.
\newblock \showarticletitle{Sorting, searching, and simulation in the mapreduce
  framework}. In \bibinfo{booktitle}{\emph{International Symposium on
  Algorithms and Computation}}. Springer, \bibinfo{pages}{374--383}.
\newblock


\bibitem[\protect\citeauthoryear{Karger}{Karger}{1993}]%
        {Karger93a}
\bibfield{author}{\bibinfo{person}{David~R. Karger}.}
  \bibinfo{year}{1993}\natexlab{}.
\newblock \showarticletitle{Global Min-cuts in RNC, and Other Ramifications of
  a Simple Min-Cut Algorithm}. In \bibinfo{booktitle}{\emph{Proceedings of the
  Fourth Annual {ACM/SIGACT-SIAM} Symposium on Discrete Algorithms, 25-27
  January 1993, Austin, Texas, {USA}}},
  \bibfield{editor}{\bibinfo{person}{Vijaya Ramachandran}} (Ed.).
  \bibinfo{publisher}{{ACM/SIAM}}, \bibinfo{pages}{21--30}.
\newblock
\urldef\tempurl%
\url{http://dl.acm.org/citation.cfm?id=313559.313605}
\showURL{%
\tempurl}


\bibitem[\protect\citeauthoryear{Karger and Stein}{Karger and Stein}{1996}]%
        {karger1996new}
\bibfield{author}{\bibinfo{person}{David~R Karger} {and}
  \bibinfo{person}{Clifford Stein}.} \bibinfo{year}{1996}\natexlab{}.
\newblock \showarticletitle{A new approach to the minimum cut problem}.
\newblock \bibinfo{journal}{\emph{Journal of the ACM (JACM)}}
  \bibinfo{volume}{43}, \bibinfo{number}{4} (\bibinfo{year}{1996}),
  \bibinfo{pages}{601--640}.
\newblock


\bibitem[\protect\citeauthoryear{Karloff, Suri, and Vassilvitskii}{Karloff
  et~al\mbox{.}}{2010}]%
        {karloff2010a}
\bibfield{author}{\bibinfo{person}{Howard~J. Karloff},
  \bibinfo{person}{Siddharth Suri}, {and} \bibinfo{person}{Sergei
  Vassilvitskii}.} \bibinfo{year}{2010}\natexlab{}.
\newblock \showarticletitle{A Model of Computation for MapReduce}. In
  \bibinfo{booktitle}{\emph{SODA}}. \bibinfo{pages}{938--948}.
\newblock


\bibitem[\protect\citeauthoryear{Lattanzi, Moseley, Suri, and
  Vassilvitskii}{Lattanzi et~al\mbox{.}}{2011}]%
        {lattanzi2011filtering}
\bibfield{author}{\bibinfo{person}{Silvio Lattanzi}, \bibinfo{person}{Benjamin
  Moseley}, \bibinfo{person}{Siddharth Suri}, {and} \bibinfo{person}{Sergei
  Vassilvitskii}.} \bibinfo{year}{2011}\natexlab{}.
\newblock \showarticletitle{Filtering: a method for solving graph problems in
  mapreduce}. In \bibinfo{booktitle}{\emph{Proceedings of the twenty-third
  annual ACM symposium on Parallelism in algorithms and architectures}}.
  \bibinfo{pages}{85--94}.
\newblock


\bibitem[\protect\citeauthoryear{Saran and Vazirani}{Saran and
  Vazirani}{1995}]%
        {SaranV95}
\bibfield{author}{\bibinfo{person}{Huzur Saran} {and} \bibinfo{person}{Vijay~V.
  Vazirani}.} \bibinfo{year}{1995}\natexlab{}.
\newblock \showarticletitle{Finding k Cuts within Twice the Optimal}.
\newblock \bibinfo{journal}{\emph{{SIAM} J. Comput.}} \bibinfo{volume}{24},
  \bibinfo{number}{1} (\bibinfo{year}{1995}), \bibinfo{pages}{101--108}.
\newblock
\urldef\tempurl%
\url{https://doi.org/10.1137/S0097539792251730}
\showDOI{\tempurl}


\bibitem[\protect\citeauthoryear{Sleator and Tarjan}{Sleator and
  Tarjan}{1981}]%
        {sleator1981a}
\bibfield{author}{\bibinfo{person}{Daniel~Dominic Sleator} {and}
  \bibinfo{person}{Robert~Endre Tarjan}.} \bibinfo{year}{1981}\natexlab{}.
\newblock \showarticletitle{A Data Structure for Dynamic Trees}. In
  \bibinfo{booktitle}{\emph{Proceedings of the 13th Annual {ACM} Symposium on
  Theory of Computing, May 11-13, 1981, Milwaukee, Wisconsin, {USA}}}.
  \bibinfo{publisher}{{ACM}}, \bibinfo{pages}{114--122}.
\newblock
\urldef\tempurl%
\url{https://doi.org/10.1145/800076.802464}
\showDOI{\tempurl}


\bibitem[\protect\citeauthoryear{Yaroslavtsev and Vadapalli}{Yaroslavtsev and
  Vadapalli}{2018}]%
        {yaroslavtsev2018massively}
\bibfield{author}{\bibinfo{person}{Grigory Yaroslavtsev} {and}
  \bibinfo{person}{Adithya Vadapalli}.} \bibinfo{year}{2018}\natexlab{}.
\newblock \showarticletitle{Massively Parallel Algorithms and Hardness for
  Single-Linkage Clustering under $\ell_p$ Distances}. In
  \bibinfo{booktitle}{\emph{ICML}}. \bibinfo{pages}{5596--5605}.
\newblock


\bibitem[\protect\citeauthoryear{Zaharia, Xin, Wendell, Das, Armbrust, Dave,
  Meng, Rosen, Venkataraman, Franklin, Ghodsi, Gonzalez, Shenker, and
  Stoica}{Zaharia et~al\mbox{.}}{2016}]%
        {zaharia2016apache}
\bibfield{author}{\bibinfo{person}{Matei Zaharia}, \bibinfo{person}{Reynold~S.
  Xin}, \bibinfo{person}{Patrick Wendell}, \bibinfo{person}{Tathagata Das},
  \bibinfo{person}{Michael Armbrust}, \bibinfo{person}{Ankur Dave},
  \bibinfo{person}{Xiangrui Meng}, \bibinfo{person}{Josh Rosen},
  \bibinfo{person}{Shivaram Venkataraman}, \bibinfo{person}{Michael~J.
  Franklin}, \bibinfo{person}{Ali Ghodsi}, \bibinfo{person}{Joseph Gonzalez},
  \bibinfo{person}{Scott Shenker}, {and} \bibinfo{person}{Ion Stoica}.}
  \bibinfo{year}{2016}\natexlab{}.
\newblock \showarticletitle{Apache Spark: a unified engine for big data
  processing}.
\newblock \bibinfo{journal}{\emph{Commun. {ACM}}} (\bibinfo{year}{2016}),
  \bibinfo{pages}{56--65}.
\newblock


\end{thebibliography}

\section{Missing Proofs from Section \ref{sec:low-depth}}

\begin{proof}[Proof of Lemma~\ref{lem:heavylight}]
First off, we know subtree size can be computed in $O(1/\epsilon)$ low-memory AMPC rounds on trees as shown by Behnezhad et al.~\cite{behnezhad2019massively}, and the child of a vertex $v$ with minimum subtree can then be found in $O(1/\epsilon)$ rounds by dividing the children amongst machines and iteratively finding the smallest. Next, consider removing all light edges from $T$ to create a forest $F$. Run Behnezhad et al.'s AMPC connectivity algorithm~\cite{behnezhad2020parallel}, which satisfies the round and space constraints, to identify the components and contract them. Add the light edges back in to connect contracted nodes. This clearly is $T_M$. Additionally, as before, we run Behnezhad et al.'s~\cite{behnezhad2019massively} AMPC algorithm for orienting the tree. This too falls within the constraints.
\end{proof}

\begin{proof}[Proof of Observation~\ref{obs:ancestors}]
Consider such a $u,u',u''\in P_v$ that appear in this order (or reversed), and let $v$ and $v'$ be the lowest common ancestors of $u$ and $u'$, and $u$ and $u''$ respectively. Consider traversing from $u$ up the tree from child to parent, and let $p$ be the current vertex. We start with $p=u$ and thus the leaf set of the subtree rooted at $p$ is $L_p = \{u\}$. As we traverse upwards, we add sets of leaves to $L_p$ that are contiguous in $P_v$. Additionally, one vertex is directly adjacent to a vertex from $L_p$ in $P_v$ because $P_v$ was mapped to the leaves of $T_v$ according to the pre-order traversal. Therefore, $L_p$ is a contiguous chunk of $P_v$. Thus, when $u''$ gets added to the subtree (i.e., when $p=v'$) $u'$ must be in $L_p$ too, either because it was added previously or it is being added at the same time. In the former case, $v$ must have happened before $v'$ and thus $v'$ is an ancestor of $v$, and in the latter case, $v=v'$.
\end{proof}

\begin{proof}[Proof of Lemma~\ref{lem:expandverts}]
Correctness of the process described in this section is seen in Observation~\ref{obs:binpathcorrect}. Thus we simply need to show how to implement it in AMPC. For the first step, we must construct a generic almost complete binary tree with $|P_v|$ leaves. Call this tree $T_v$. By Observation~\ref{obs:binarytree}, this has $2|P_v|-1$ vertices, $\lfloor\log_2|P_v|\rfloor+1$ max depth, and each layer is full except the last which has $2|P_v|-2^{\lfloor\log_2|P_v|\rfloor +1}$ vertices. Thus, it is fairly simple to, in parallel, create the set of all vertices in the tree and then connect each vertex to its parent. Each vertex can be given an index: a unique identifier for vertices numbered $1,\ldots,2|P_v|-1$. This is going to represent the order of the vertices in a breadth-first traversal of the tree. For a vertex with index $1<i< 2|P_v|$, its parent's index $j$ can be computed as $j=\lfloor i/2\rfloor$. Whether or not a vertex is a left or right child is simply determined by the parity of its pre-order index.

Note that the size of the tree is $O(|P_v|)$ but each individual processor computation (i.e., computing the size of the tree, and then having each index connect itself to its parent) can be done in constant space and 2 rounds. Thus in 2 rounds, we can create such a tree. Note that we have to construct these trees in parallel, but it is not hard to see that this will only require $O(n\log n)$ total space which can be divided appropriately amongst machines.

For the second step, we can use Behnezhad et al.'s~\cite{behnezhad2019massively} algorithm for pre-order numbering with $O(1/\epsilon)$ AMPC rounds w.h.p. using $O(n^\epsilon)$ local memory and $O(n)$ total space. Let $L$ be a list of the leaves of $T_v$ in pre-order. To map $P_v$ to the leaves, one can simply do a direct map between $P_v$ and $L$ in one round. 
\end{proof}

\begin{proof}[Proof of Lemma~\ref{lem:decompcorrectness}]
First, note that $P^u$ can be stored entirely on one machine by Observation~\ref{obs:depth}, and additionally, since both $T_M$ and $T_v$ for all $v\in M$ is oriented, it is quite simple to adaptively query the path from $u$ to the root in one round within the space constraints. Assuming the orientations also labels if the vertex is a left or right child, $u'$ can be found simply by searching the path. Finally, the depth of $u'$, which is the label of $u$, can also be found quite simply given access to all of $P^u$.

It is quite simple to show the height is bounded by $O(\log^2n)$: all labels are depths in the expanded $T_M$ and Observation~\ref{obs:depth} bounds the max depth. All that is left is to show the connected components induced on $T^i = \{v\in T:\ell(v) \geq i\}$ contain exactly one vertex with label $i$ each.

We show this by induction on the level. At the 1st level, we should only partition the graph once. Let $u$ be a vertex labeled 1 and $P^u$ be its path. For $u$ to be given depth 1, it must have received its label from the root $r_M$ of the expanded $T_M$, since we are counting depth starting at 1. Thus, it must be the leftmost descendant of the right child of $r_M$. This is clearly unique, thus $u$ is unique. Therefore, there is exactly one vertex at the 1st level.

Consider a connected component $C$ in $T^i=\{v\in T:\ell(v) \geq i\}$ for some level $i\in h$. Let the ``neighborhood'' $N(C)$ be all the vertices in $T\setminus C$ that are adjacent to some vertex in $C$. We will first show that for the largest level $j$ of a vertex in $N(C)$, there is exactly one vertex in $N(C)$ of level $j$. Note $j < i$, otherwise a vertex of level $j$ in $N(C)$ would actually be included in the component $C$.

Assume for contradiction there are at least two such vertices, $u,v\in N(C)$ with level $j$. Let $u'$ and $v'$ be their respective neighbors in $C$. Since $C$ is a connected component, there is a path $P$ from $u'$ to $v'$ containing only vertices in $C$. All of these vertices have level $i$ or higher by the definition of $C$. Tacking on $u$ and $v$ to the start and end of $P$ respectively, there is a path from $u$ to $v$ consisting of vertices $\{u,v\}\cup P$. Since $u$ and $v$ are at level $j$, that means every vertex in this path has level $j$ or higher. Therefore, $u$ and $v$ must have been in the same connected component $C'$ in the earlier level $T^{j}$. By induction, that component must have had only one vertex at level $j$. This is a contradiction. Thus $N(C)$ must have exactly one vertex in level $j$.

Let $v\in N(C)$ be the vertex at level $j$, and let $u$ be its neighbor in $C$ (note there can only be one since $T$ is a tree). We consider three cases.

\paragraph{Case 1} $u$ is a child of $v$ and they are not on the same heavy path. Let $u_M$ and $v_M$ be the meta vertices in $T_M$ containing $u$ and $v$ respectively, $u'$ and $v'$ be the corresponding nodes found in step 2 for $u$ and $v$ (whose depths are the labels of $u$ and $v$), and $r_{u_M}$ be the root of the binarized path for $u_M$. Clearly, $u_M$ is a child of $v_M$ since $u$ is a child of $v$ and $u_M\neq v_M$. Moreover, $r_{u_M}$ must be the child of $v$ in the expanded $T_M$. Since $v'$ is an ancestor of $v$, that means its depth in the expanded $T_M$ satisfies $d(v') < d(r_{u_M})$. Since this defines the label of $v'$, $\ell(v) \leq d(r_{u_M}) - 1$.

Consider any leaf $l\in T_{u_M}$. Its label is the depth of some vertex $w\in T_{u_M}$, which must be deeper than the root $r_{u_M}$. Thus $\ell(l) \geq d(r_{u_M}) \geq \ell(v) + 1$, then implying all of $T_{u_M}$ is in $C$. Additionally, since only one $l\in T_{u_M}$ is labeled by $r_{u_M}$, i.e. $\ell(l) = d(r_{u_M})$, it must have a unique (and smallest) label out of all vertices in $P_{u_M}$. It turns out this $\ell(l)$ will be our $j$.

Now, consider $l$'s placement in the original tree $T$. It is on a heavy path containing $u$, and it has the smallest label on the heavy path. Vertex $u$ is also directly adjacent to $v$. Thus in level $i$ when we consider the tree induced on $T^i$, it must be in the same component as $u$ since the path from $l$ to $u$ is contained in $T^i$. Additionally, since $T$ is a tree, the only vertex in $N(C)$ above the component itself is $v$. Therefore, all of $C$ is a descendant of $v$, and thus they must also be descendants of $u$. Thus, in $T_M$, they must have been in meta vertices at the depth of or below $u_M$. By a similar logic as before, their label must be strictly greater than $l$'s label. This implies that $\ell(l) = j$, and it is the only vertex in $T_{u_M}$ with such a label.

\paragraph{Case 2} $u$ is the parent of $v$ and they are not on the same heavy path. Using $u_M$ and $v_M$ as before and with the same logic as before but with reversed roles, we find that $\ell(v) > \ell(u)$ so $\ell(u) < i$. This contradicts that $u$ is in $T^i$, and thus this case is impossible. Note that this also shows that $C$ cannot contain a vertex in an ancestor $u_M$ of $v_M$ in $T_M$, because by connectivity, this would imply that there is some $u$ that is the parent of the root of $T_{v_M}$. A similar argument will hold.

\paragraph{Case 3} $u$ and $v$ are on the same heavy path corresponding to meta vertex $u_M$ whose binarized path $T_{u_M}$ has root $r_{u_M}$. Let $P\subseteq P_{u_M}$ be the connected subpath of this heavy path that contains $u$ if we remove all vertices of level $i-1$ or lower. Obviously, $P= C \cap P_{u_M} \subseteq C$. Assume for contradiction that two vertices $p,p'\in P$ have level $i$, so $\ell(p)=\ell(p')=i$.  Let $a = lca_M(p,p')$ be the least common ancestor of $p$ and $p'$ in $T_{u_M}$. Because $a$ is the least common ancestor, one of its children must contain $p$ and the other $p'$ in its subtree. Without loss of generality, assume $p$ is a descendant of the left child and $p'$ is a descendant of the right child.

Let $w$ and $w'$ be the (possibly internal) vertices of $T_{u_M}$ such that $p$ and $p'$ are labeled with their expanded meta tree depths in step 3 respectively ($\ell(p) = d(w)$ and $\ell(p') = d(w')$).  This means $i = d(w) = d(w')$. Also, notice that $w$ is an ancestor of $p$ and $w'$ is an ancestor of $p'$, so they are on the respective paths from $p$ and $p'$ to the root of $T_{u_M}$, call it $r_{u_M}$. Additionally, $a$ is on both paths, and specifically the paths must merge at $a$. Since $w$ and $w'$ only pass their depth to exactly one leaf of $T_{u_M}$ each, they must be distinct. Therefore, they cannot be at depth $d_{u_M}(a)$ or lower, else they would both be $a$. Thus, $w$ and $w'$ are strict descendants of $a$, so $d(w), d(w') > d(a)$, meaning that $\ell(p), \ell(p') > d(a)$.

Let $\bar{p}$ be the leftmost leaf descendant of the right child of $a$. Then $\bar{p}$ was labeled by $a$, thus $\ell(\bar{p}) = d(a)$. Therefore, $\ell(p), \ell(p') > \ell(\bar{p})$. Since $p$ and $p'$ are at level $i$, $\ell(\bar{p}) < i$. Additionally, note that by the structure of $T_{u_M}$, $\bar{p}$ is necessarily located between $p$ and $p'$ in the leaf set, thus it must come between them in the pre-order traversal of $T_{u_M}$ by Observation~\ref{obs:ancestors}. Since $T_{u_M}$ is the binarized path of $P_{u_M}$, this also implies $\bar{p}$ comes between $p$ and $p'$ in $P_{u_M}$, and by extension $P$. Note, however, that $P\subseteq C\subseteq T^i$, and thus its vertices must all have level $i$ or greater. This contradicts that $\ell(\bar{p}) < i$. Thus, it must be the case that there is at most vertex in $P$ that has level $i$.

Finally, all that needs to be shown is that for every vertex $c\in C\setminus P$, $\ell(c) > i$. For any such $c$, let $c^*$ be its lowest ancestor in $P_{u_M}$. Since $\ell(c^*) \geq i$, we can make the same argument as Case 1 to show that $c$ must have label $i+1$ or larger.
\end{proof}

\begin{proof}[Proof of Lemma~\ref{lem:treedecomp}]
By combining Lemmas~\ref{lem:orient},~\ref{lem:heavylight},~\ref{lem:expandverts}, and~\ref{lem:decompcorrectness}, we have our result.
\end{proof}

\section{Missing Proofs from Section~\ref{sec:singleton}}

\begin{proof}[Proof of Lemma~\ref{lem:ldrtime}]
Consider $\bag(v, t)$ for some $v \in V$, $t \in [n^3]$. Because $v \in \bag(v, t)$, this bag cannot be empty. Let $u$ be the vertex with the smallest label among all vertices from $\bag(v, t)$. Assume for contradiction that there exists another vertex $u' \in \bag(v, t)$ such that $\ell(u) = \ell(u')$. Observe, that there exists only one path between $u$ and $u'$ in tree $T$ and all vertices of this path must be contained in $\bag(v, t)$ since $\bag(v, t)$ always forms a connected component when viewed as a subtree of $T$, c.f. Definition~\ref{def:bag}.
From the properties of the low depth decomposition, we get that there is a vertex $z$ on the path with label smaller than both  $\ell(u)$ and $\ell(u')$ which contradicts with the choice of $u$. This also proves the uniqueness of the leader.
Let us now argue about $\ldrtime(v)$. First, for every $v \in V$ we have that $bag(v, 0) = v$ as there is no edge with weight smaller than $1$. This already proves two first properties. Second, it holds: $\bag(v, 0) \subseteq \bag(v, 1), \ldots, \bag(v, n^3)$ as a bag defined for larger time can expand more edges. Therefore, if $v$ is the leader of $\bag(v, t)$ for some $0 \le t \le n^3$, it has to be the leader of any subset of this bag.
\end{proof}

\begin{proof}[Proof of Lemma~\ref{lem:two-edges-decom}]
Let $v$ be a vertex of this component with lowest value $\ell(v)$. By the construction of the low depth decomposition, the component $C_{i}$ consists of some connected part $P_{1}$ of binarized path $P$ to which $v$ belongs and all binarized paths incident to the part $F$ that contain vertices with larger values $\ell$, denote this set $P_{2}$. Note, that each vertex $v \in P_{2}$ has no edged to the part of graph $V \setminus T^{i}$.  Thus, only vertices from part $P_{1}$ can be connected with vertices of smaller labels than $i$. However, since $P_{1}$ forms a path, and we consider only tree edges, then there can be at most two such edges. 

The above proof instructs also how to compute these two edges. First, the vertex $v$ can determined in $O(1 / \epsilon)$ rounds, since it requires computing max function only over labels of vertices belonging to $C_{i}$. Assume, that for each vertex of a tree, we store not only its value $\ell(v)$, but also its position and the length of the binarized path to which it belongs. Then the first vertex to the left with label smaller than $v$ and the first vertex to the right with label smaller than $v$ on $v$'s binarized path can be computed in constant time in local memory of a single machine, since their positions in the binarized path are functions of only the length of the path and the position of $v$. Assumed, that in the global memory all binarized paths are stored, then the corresponding edges can be found in $O(1)$ queries to the global memory.
\end{proof}

\end{document}